\providecommand{\U}[1]{\protect\rule{.1in}{.1in}}
\newtheorem{theorem}{Theorem}
\newtheorem{corollary}[theorem]{Corollary}
\newtheorem{definition}[theorem]{Definition}
\newtheorem{lemma}[theorem]{Lemma}
\newtheorem{proposition}[theorem]{Proposition}
\newtheorem{remark}[theorem]{Remark}
\newenvironment{proof}[1][Proof]{\noindent\textbf{#1.} }{\ \rule{0.5em}{0.5em}}
\numberwithin{equation}{section}
\begin{document}

\title{\textbf{Converse bounds for private communication over quantum channels}}
\author{Mark M. Wilde\thanks{ Hearne Institute for Theoretical Physics, Department of
Physics and Astronomy, Center for Computation and Technology, Louisiana State
University, Baton Rouge, Louisiana 70803, USA}
\and Marco Tomamichel\thanks{School of Physics, The University of Sydney, Sydney,
Australia}
\and Mario Berta\thanks{Institute for Quantum Information and Matter, California
Institute of Technology, Pasadena, California 91125, USA} }
\maketitle

\begin{abstract}
This paper establishes several converse bounds on the private transmission
capabilities of a quantum channel. The main conceptual development builds
firmly on the notion of a private state, which is a powerful, uniquely quantum
method for simplifying the tripartite picture of privacy involving local
operations and public classical communication to a bipartite picture of
quantum privacy involving local operations and classical communication. This
approach has previously led to some of the strongest upper bounds on secret
key rates, including the squashed entanglement and the relative entropy of
entanglement. Here we use this approach along with a \textquotedblleft privacy
test\textquotedblright\ to establish a general meta-converse bound for private
communication, which has a number of applications. The meta-converse allows
for proving that any quantum channel's relative entropy of entanglement is a
strong converse rate for private communication. For covariant channels, the
meta-converse also leads to second-order expansions of relative entropy of
entanglement bounds for private communication rates. For such channels, the
bounds also apply to the private communication setting in which the sender and
receiver are assisted by unlimited public classical communication, and as
such, they are relevant for establishing various converse bounds for quantum
key distribution protocols conducted over these channels. We find precise
characterizations for several channels of interest and apply the methods to
establish converse bounds on the private transmission capabilities of all
phase-insensitive bosonic channels.

\end{abstract}

\section{Introduction}

Ever since the discovery of quantum key distribution \cite{bb84}, researchers
have been interested in exploiting quantum-mechanical effects in order to
ensure the secrecy of communication. This has led to a large amount of
research in many directions \cite{SBCDLP09}, both experimental and
theoretical, and one of the recent challenges has been to connect both of
these directions.

On the theoretical side, much progress has been made by generalizing several
ideas developed in the context of classical information theory. For example,
the wiretap channel is a simple model for private communication, and one can
study its capacity for secure data transmission \cite{W75} (see
\cite{CK78,TB14,HTW14,Win16} for later progress on refining this capacity). In
this model, two honest parties, usually called Alice (the sender)\ and Bob
(the receiver), are connected by a classical channel. At the same time, there
is a classical channel connecting Alice to an eavesdropper or wiretapper,
usually called Eve. The goal is to devise a communication scheme such that
Alice can communicate to Bob with small error in such a way that Eve gets
nearly zero information about the message communicated (with both the
probability of error and information leakage vanishing in the limit of many
channel uses). One can further generalize the model to allow for public
classical communication and study capacities in this context \cite{M93,AC93}.
However, two major drawbacks of the wiretap model is that the honest parties
need to assume that they have fully characterized both 1)\ their channel and
2)\ the channel to the eavesdropper, which may not be possible in practice.
Nevertheless, techniques developed in the context of the wiretap channel have
been foundational to our understanding of information-theoretically secure communication.

Quantum mechanics offers a route around one of the aforementioned problems
with the classical model, via the notion of purification. Indeed, for any
quantum channel connecting Alice to Bob, there is a purification (or isometric
extension) of this channel that is unique up to unitary rotations~\cite{S55}.
All the degrees of freedom that are not accessible to the receiver Bob are
accessible to the environment of the channel, and in the spirit of being
cautious, as is usually the case in cryptography, we assume that the
eavesdropper has full access to the environmental system. For example,
communication from Alice to Bob in free space can be modeled by an interaction
at a beamsplitter \cite{S09}, and in the wiretap model, we assume that all of
the light that is lost along the way can be collected by the eavesdropper Eve
\cite{GSE08}. Thus, in the quantum wiretap model, Alice and Bob can perform
parameter estimation in order to characterize their channel, and once they
have a complete characterization, they also have a model for the channel to
the eavesdropper, circumventing one of the aforementioned problems with the
classical model. If we allow for Alice and Bob to make use of public classical
communication in addition to the quantum channel (see, e.g.,
\cite{TGW14IEEE,TGW14Nat}), then this model is closely related to that which
is used in some quantum key distribution protocols. In practice, one drawback
of this model is that the channel from Alice to Bob might be changing with
time or difficult to characterize, but nevertheless one can study the private
capacities of this quantum wiretap channel model in an attempt to gain some
understanding of what rates might be achievable in principle.

With this Shannon-theoretic viewpoint, the quantum wiretap model has been
studied in much detail. The private capacity of a quantum wiretap channel was
defined and characterized in \cite{ieee2005dev,1050633}. For the class of
degradable quantum channels, there is a tractable formula for the private
capacity \cite{S08}. The same occurs for conjugate degradable \cite{BDHM10},
less noisy, and more capable channels \cite{Wat12}. Beyond such channels,
little is known and recent evidence suggests that characterizing private
capacity effectively could be a very difficult challenge. For example, the
formula for private information from \cite{ieee2005dev,1050633}\ is now known
to be superadditive in general \cite{smith:170502,PhysRevLett.115.040501}, and
the private capacity itself is as well \cite{LWZG09}.

More recently there has been progress on characterizing the private capacity
when public classical communication is available (for a given channel
$\mathcal{N}$, let $P^{\leftrightarrow}(\mathcal{N})$ denote this quantity).
Building on the notion of squashed entanglement \cite{CW04}\ and the fact that
this quantity is an upper bound on distillable key \cite{C06}, the authors of
\cite{TGW14IEEE}\ defined the squashed entanglement of a channel and showed
that it is an upper bound on $P^{\leftrightarrow}(\mathcal{N})$ for any
channel $\mathcal{N}$ (see also \cite{Wilde2016}). This result thus
established a strong limitation for quantum key distribution protocols as
discussed in \cite{TGW14Nat}. Following this development, by building on the
notion of relative entropy of entanglement \cite{VP98}\ and the fact that this
quantity is also an upper bound on the distillable key of a bipartite state
\cite{HHHO05,HHHO09}, the authors of \cite{PLOB15}\ defined a channel's
relative entropy of entanglement and stated that it is an upper bound on
$P^{\leftrightarrow}(\mathcal{N})$ for any channel $\mathcal{N}$ that has a
\textquotedblleft teleportation symmetry\textquotedblright\ identified in
\cite[Section~V]{BDSW96} and extended in \cite{NFC09,PLOB15}.
%One contribution of the present paper is to provide rigorous proofs of the claims in \cite{PLOB15}.
It is an open question to determine whether the relative entropy of
entanglement is an upper bound on the two-way assisted private capacity of a
general quantum channel.

Both of the aforementioned upper bounds on $P^{\leftrightarrow}(\mathcal{N})$
critically rely upon the notion of a private state \cite{HHHO05,HHHO09}. To
motivate this notion, consider that the ultimate goal of a $P^{\leftrightarrow
}$\ protocol is to generate a secret-key state of the following form:%
\begin{equation}
\left(  \mathcal{M}_{A}\otimes\mathcal{M}_{B}\right)  \left(  \gamma
_{ABE}\right)  =\frac{1}{K}\sum_{i=0}^{K-1}|i\rangle\langle i|_{A}%
\otimes|i\rangle\langle i|_{B}\otimes\sigma_{E},
\label{eq:secret-key-state-1-tri}%
\end{equation}
where the $A$ system is possessed by Alice, $B$ by Bob, $E$ by the
eavesdropper, $K$ is the number of key values, $\gamma_{ABE}$ is some state on
systems $ABE$, $\mathcal{M}(\cdot)=\sum_{i}|i\rangle\langle i|(\cdot
)|i\rangle\langle i|$ is a projective measurement channel with $\{|i\rangle
\}_{i}$ an orthonormal basis, and $\sigma_{E}$ is some state on system $E$.
The state in \eqref{eq:secret-key-state-1-tri}\ is such that the systems $A$
and $B$ are perfectly correlated (i.e., maximally classically correlated), and
the value of the key is uniformly random and independent of Eve's system $E$.
The main observation of \cite{HHHO05,HHHO09} is that, in principle, every step
of a $P^{\leftrightarrow}$\ protocol can be purified, and since these steps
are conducted in the laboratories of Alice and Bob, these parties could
possess purifying systems of $\gamma_{ABE}$ (call them $A^{\prime}$ and
$B^{\prime}$), such that $\gamma_{ABA^{\prime}B^{\prime}E}$ is a pure state
satisfying $\operatorname{Tr}_{A^{\prime}B^{\prime}}\{\gamma_{ABA^{\prime
}B^{\prime}E}\}=\gamma_{ABE}$. By employing purification theorems of quantum
information theory, the authors of \cite{HHHO05,HHHO09} showed that the
reduced state of $\gamma_{ABA^{\prime}B^{\prime}E}$ on the systems
$ABA^{\prime}B^{\prime}$ has the following form:%
\begin{equation}
\gamma_{ABA^{\prime}B^{\prime}}=U_{ABA^{\prime}B^{\prime}}(\Phi_{AB}%
\otimes\theta_{A^{\prime}B^{\prime}})U_{ABA^{\prime}B^{\prime}}^{\dag},
\label{eq:intro-private-state}%
\end{equation}
where $\Phi_{AB}$ is a maximally entangled state, $U_{ABA^{\prime}B^{\prime}}$
is a special kind of unitary called a \textquotedblleft
twisting,\textquotedblright\ and $\theta_{A^{\prime}B^{\prime}}$ is an
arbitrary state (see Section~\ref{sec:prelim}\ for more details). Such a state
is now known as a bipartite private state and is fully equivalent to the state
in \eqref{eq:secret-key-state-1-tri}\ in the aforementioned sense. This
observation thus allows for a perspective change which is helpful for
analyzing private communication protocols:\ one can eliminate the eavesdropper
from the analysis, revising the goal of such a protocol to be the production
of states of the form in \eqref{eq:intro-private-state}, and this allows for
using the powerful tools of entanglement theory \cite{H42007}\ to analyze
secret-key rates.

Not only did the results of \cite{HHHO05,HHHO09} provide a conceptually
different method for understanding privacy in the quantum setup, but they also
showed how there are fundamental differences between entanglement distillation
and secret-key distillation protocols. Indeed, the strongest demonstration of
this difference was the realization that there exist quantum channels that
have zero capacity to send quantum information and yet can generate private
information at a non-zero rate \cite{HHHLO08,PhysRevLett.100.110502}. This in
turn led to the discovery of the superactivation effect
\cite{science2008smith,SSY11}:\ two quantum channels each having zero quantum
capacity can be used together to have a non-zero quantum capacity, by taking
advantage of the intricate interplay between privacy and coherence.

In all of the above theoretical analyses, the statements made are asymptotic
in nature, applying exclusively to the situation in which a large number of
independent and identical channel uses are available. While these works have
provided interesting bounds and are conceptually rich, they are somewhat
removed from practical situations in which the number of channel uses is
limited. However, some recent works have aimed to bridge this gap for the case
of quantum communication \cite{BD10,MW13,TWW14,BDL15,TBR15}, giving more
refined bounds on what is possible and impossible for a limited number of
channel uses. One goal of the present paper is to bridge the gap for private communication.

Similar to the results from \cite{TGW14Nat,PLOB15,Goodenough2015,Wilde2016},
the bounds given in this paper can be used to assess the performance of
quantum key distribution protocols, as first suggested in \cite{L15}. In
particular, one prominent experimental goal has been to build a quantum
repeater \cite{SST11,LST09}, which is a device that could be inserted between
two nodes in a given network to increase the rates of secret key generation.
One way to assess the performance of such a repeater is that it should be able
to exceed the limitations of the network that hold whenever the repeater is
not present \cite{L15}, and this has been hailed as one of the main
applications of the bounds from
\cite{TGW14Nat,PLOB15,Goodenough2015,Wilde2016}. However, since these bounds
are asymptotic in nature, they have limited applicability to protocols using a
channel a finite number of times. On the other hand, the bounds given in this
paper can be used to assess the performance of practical, non-asymptotic
protocols for certain channels.

\paragraph{Summary of results.}

In this paper, we establish several converse bounds on the private
transmission capabilities of a quantum channel. The main conceptual
development is a so-called \textquotedblleft meta-converse\textquotedblright%
\ bound for private communication, which is a general upper bound that can be
translated to several regimes of interest (the idea of a \textquotedblleft
meta-converse\textquotedblright\ has its roots in the seminal work in
\cite{polyanskiy10}). In particular, we can use the meta-converse to establish
that a channel's relative entropy of entanglement is a strong converse rate
for private communication, meaning that if the communication rate of a
sequence of protocols exceeds this amount, then the probability of a
protocol's failure tends to one exponentially fast in the number of channel
uses. The result builds strongly on the approach from \cite{TWW14} (see also
\cite{MW13} for progress on a strong converse for private capacity of
degradable channels).

We also use the meta-converse bound to establish second-order converse bounds
for private communication. In this regime, one fixes the error parameter and
asks what is the maximum rate of private communication possible. Here we again
find an upper bound in terms of quantities related to the relative entropy of
entanglement, but this bound applies only to channels with certain symmetry.
For some channels with sufficient symmetry, we establish exact
characterizations of the second-order coding rate (and even finer)\ by
combining our upper bounds with the lower bounds from \cite{TBR15}. Finally,
we can use the method to establish finite blocklength converse bounds for all
single-mode phase-insensitive bosonic channels, and as a consequence, we find
that the weak-converse bounds presented in \cite{PLOB15} are in fact
strong-converse bounds for two-way assisted private communication. As a special case, we establish that
the two-way assisted, unconstrained private and quantum capacities of the
pure-loss and quantum-limited amplifier channels satisfy the strong converse property.

The rest of the paper gives details of our results. In the next section
(Section~\ref{sec:prelim}), we recall many facts from quantum information
theory which are needed for the developments, and we establish the notation
used in the later parts. Section~\ref{sec:private-states} reviews private
states \cite{HHHO05,HHHO09}, and then Section~\ref{sec:priv-class-comm}\ gives
definitions of secret-key transmission protocols and their non-asymptotic
achievable rates. Section~\ref{sec:meta-converse} establishes the general
meta-converse bound for any private communication protocol. In
Section~\ref{sec:strong-converse}, we use the meta-converse and prior
developments in \cite{TWW14}\ to prove that a channel's relative entropy of
entanglement is a strong converse rate for private communication. If a channel
is \textquotedblleft teleportation simulable\textquotedblright\ (defined
later),\ then the same quantity is a strong converse rate for private
communication assisted by public classical communication. In
Section~\ref{sec:second-order}, we use the meta-converse to establish
second-order expansions of relative entropy of entanglement bounds on private
communication rates (this is for channels with sufficient symmetry).
Section~\ref{sec:examples} then gives several examples of channels for which
we have precise characterizations of their private transmission capabilities,
including the qubit dephasing channel, the qubit erasure channel, and any
entanglement-breaking channel. Section~\ref{sec:gaussian} establishes finite
blocklength converse bounds for phase-insensitive bosonic channels. We finally
conclude in Section~\ref{sec:concl} with a summary and some open questions.

\section{Preliminaries}

\label{sec:prelim}

\subsection{Quantum information}

Much of the background on quantum information theory reviewed here is
available in \cite{W15book}. Let $\mathcal{L}(\mathcal{H})$ denote the algebra
of bounded linear operators acting on a Hilbert space $\mathcal{H}$. Let
$\mathcal{L}_{+}(\mathcal{H})$ denote the subset of positive semi-definite
operators. We also write $X\geq0$ if $X\in\mathcal{L}_{+}(\mathcal{H})$.
An\ operator $\rho$ is in the set $\mathcal{D}(\mathcal{H})$\ of density
operators (or states) if $\rho\in\mathcal{L}_{+}(\mathcal{H})$ and Tr$\left\{
\rho\right\}  =1$. An\ operator $\rho$ is in the set $\mathcal{D}_{\leq
}(\mathcal{H})$\ of subnormalized density operators if $\rho\in\mathcal{L}%
_{+}(\mathcal{H})$ and Tr$\left\{  \rho\right\}  \leq1$. The tensor product of
two Hilbert spaces $\mathcal{H}_{A}$ and $\mathcal{H}_{B}$ is denoted by
$\mathcal{H}_{A}\otimes\mathcal{H}_{B}$ or $\mathcal{H}_{AB}$.\ Given a
multipartite density operator $\rho_{AB}\in\mathcal{D}(\mathcal{H}_{A}%
\otimes\mathcal{H}_{B})$, we unambiguously write $\rho_{A}=\operatorname{Tr}%
_{B}\{\rho_{AB}\}$ for the reduced density operator on system $A$. We use
$\rho_{AB}$, $\sigma_{AB}$, $\tau_{AB}$, $\omega_{AB}$, etc.~to denote general
density operators in $\mathcal{D}(\mathcal{H}_{A}\otimes\mathcal{H}_{B})$,
while $\psi_{AB}$, $\varphi_{AB}$, $\phi_{AB}$, etc.~denote rank-one density
operators (pure states) in $\mathcal{D}(\mathcal{H}_{A}\otimes\mathcal{H}%
_{B})$ (with it implicit, clear from the context, and the above convention
implying that $\psi_{A}$, $\varphi_{A}$, $\phi_{A}$ may be mixed if $\psi
_{AB}$, $\varphi_{AB}$, $\phi_{AB}$ are pure). A purification $|\phi^{\rho
}\rangle_{RA}\in\mathcal{H}_{R}\otimes\mathcal{H}_{A}$ of a state $\rho_{A}%
\in\mathcal{D}(\mathcal{H}_{A})$ is such that $\rho_{A}=\operatorname{Tr}%
_{R}\{|\phi^{\rho}\rangle\langle\phi^{\rho}|_{RA}\}$. As is conventional, we
often say that a unit vector $|\psi\rangle$ is a pure state or a pure-state
vector (while also saying that $|\psi\rangle\langle\psi|$ is a pure state). An
extension of a state $\rho_{A}\in\mathcal{S}\left(  \mathcal{H}_{A}\right)  $
is some state $\rho_{RA}\in\mathcal{S}\left(  \mathcal{H}_{R}\otimes
\mathcal{H}_{A}\right)  $ such that $\operatorname{Tr}_{R}\left\{  \rho
_{RA}\right\}  =\rho_{A}$. Often, an identity operator is implicit if we do
not write it explicitly (and should be clear from the context). We employ the
shorthand supp$(A)$ and ker$(A)$ to refer to the support and kernel of an
operator $A$, respectively.

Let $\{|i\rangle_{A}\}$ be an orthonormal basis (i.e., the standard basis)
associated to a Hilbert space $\mathcal{H}_{A}$, and let $\{|i\rangle_{B}\}$
be defined similarly for $\mathcal{H}_{B}$. If these spaces are
finite-dimensional and their dimensions are equal ($\dim(\mathcal{H}_{A}%
)=\dim(\mathcal{H}_{B})=d$), then we define the maximally entangled state
vector $|\Phi\rangle_{AB}\in\mathcal{H}_{A}\otimes\mathcal{H}_{B}$ as%
\begin{equation}
|\Phi\rangle_{AB}\equiv\frac{1}{\sqrt{d}}\sum_{i=0}^{d-1}|i\rangle_{A}%
\otimes|i\rangle_{B}.
\end{equation}
A state $\sigma_{AB}\in\mathcal{D}(\mathcal{H}_{A}\otimes\mathcal{H}_{B})$ is
separable if it can be written in the following form \cite{W89}:%
\begin{equation}
\sigma_{AB}=\sum_{x}p_{X}(x)|\phi^{x}\rangle\langle\phi^{x}|_{A}%
\otimes|\varphi^{x}\rangle\langle\varphi^{x}|_{B},
\end{equation}
where $p_{X}$ is a probability distribution and $\{|\phi^{x}\rangle_{A}\}$ and
$\{|\varphi^{x}\rangle_{B}\}$ are sets of pure-state vectors. Let
$\mathcal{S}(A\!:\!B)$ denote the set of separable states acting on
$\mathcal{H}_{A}\otimes\mathcal{H}_{B}$. Note that%
\begin{equation}
\mathcal{S}(A\!:\!B)=\operatorname{conv}\{|\phi\rangle\langle\phi|_{A}%
\otimes|\varphi\rangle\langle\varphi|_{B}:|\phi\rangle_{A}\in\mathcal{H}%
_{A},|\varphi\rangle_{B}\in\mathcal{H}_{B},\left\Vert |\phi\rangle
_{A}\right\Vert _{2}=\left\Vert |\varphi\rangle_{B}\right\Vert _{2}=1\},
\end{equation}
where $\operatorname{conv}$ denotes the convex hull.

A linear map $\mathcal{N}_{A\rightarrow B}:\mathcal{L}(\mathcal{H}%
_{A})\rightarrow\mathcal{L}(\mathcal{H}_{B})$\ is positive if $\mathcal{N}%
_{A\rightarrow B}\left(  \sigma_{A}\right)  \in\mathcal{L}_{+}(\mathcal{H}%
_{B})$ whenever $\sigma_{A}\in\mathcal{L}_{+}(\mathcal{H}_{A})$. Let id$_{A}$
denote the identity map acting on a system $A$. A linear map $\mathcal{N}%
_{A\rightarrow B}$ is completely positive if the map id$_{R}\otimes
\mathcal{N}_{A\rightarrow B}$ is positive for a reference system $R$ of
arbitrary size. A linear map $\mathcal{N}_{A\rightarrow B}$ is
trace-preserving if $\operatorname{Tr}\left\{  \mathcal{N}_{A\rightarrow
B}\left(  \tau_{A}\right)  \right\}  =\operatorname{Tr}\left\{  \tau
_{A}\right\}  $ for all input operators $\tau_{A}\in\mathcal{L}(\mathcal{H}%
_{A})$. It is trace non-increasing if $\operatorname{Tr}\left\{
\mathcal{N}_{A\rightarrow B}\left(  \tau_{A}\right)  \right\}  \leq
\operatorname{Tr}\left\{  \tau_{A}\right\}  $ for all $\tau_{A}\in
\mathcal{L}_{+}(\mathcal{H}_{A})$. A quantum channel is a linear map which is
completely positive and trace-preserving (CPTP). Every quantum channel has a
Kraus representation as $\mathcal{N}_{A\rightarrow B}( \tau_{A}) =\sum
_{x}E^{x}\tau_{A}(E^{x})^{\dag}$ where the Kraus operators $\{E^{x}\}$ satisfy
$\sum_{x}(E^{x})^{\dag}E^{x}=I_{A}$. A positive operator-valued measure (POVM)
is a set $\left\{  \Lambda^{m}\right\}  $ of positive semi-definite operators
such that $\sum_{m}\Lambda^{m}=I$. An isometry $U:\mathcal{H}\rightarrow
\mathcal{H}^{\prime}$ is a linear map such that $U^{\dag}U=I_{\mathcal{H}}$.
An isometric extension $U_{A\rightarrow BE}^{\mathcal{N}}$ of a quantum
channel $\mathcal{N}_{A\rightarrow B}$ (i.e., its Stinespring dilation
\cite{S55}) is a linear map that satisfies the following:%
\begin{align}
\operatorname{Tr}_{E}\!\left\{  U_{A\rightarrow BE}^{\mathcal{N}}\rho
_{A}(U_{A\rightarrow BE}^{\mathcal{N}})^{\dag}\right\}   &  =\mathcal{N}%
_{A\rightarrow B}(\rho_{A}),\\
U_{\mathcal{N}}^{\dagger}U_{\mathcal{N}}  &  =I_{A},\\
U_{\mathcal{N}}U_{\mathcal{N}}^{\dagger}  &  =\Pi_{BE},
\end{align}
for all states $\rho_{A}\in\mathcal{D}(\mathcal{H}_{A})$, where $\Pi_{BE}$ is
a projection onto a subspace of the Hilbert space $\mathcal{H}_{B}%
\otimes\mathcal{H}_{E}$. We define an isometric channel $\mathcal{U}%
_{A\rightarrow BE}^{\mathcal{N}}$ from the isometry $U_{A\rightarrow
BE}^{\mathcal{N}}$ as%
\begin{equation}
\mathcal{U}_{A\rightarrow BE}^{\mathcal{N}}(X_{A})=U_{A\rightarrow
BE}^{\mathcal{N}}X_{A}(U_{A\rightarrow BE}^{\mathcal{N}})^{\dag},
\end{equation}
where $X_{A}\in\mathcal{L}(\mathcal{H}_{A})$.

The trace distance between two quantum states $\rho,\sigma\in\mathcal{D}%
(\mathcal{H})$\ is equal to $\left\Vert \rho-\sigma\right\Vert _{1}$, where
$\left\Vert C\right\Vert _{1}\equiv\operatorname{Tr}\{\sqrt{C^{\dag}C}\}$ for
any operator $C$. It has a direct operational interpretation in terms of the
distinguishability of these states. That is, if $\rho$ or $\sigma$ are
prepared with equal probability and the task is to distinguish them via some
quantum measurement, then the optimal success probability in doing so is equal
to $\left(  1+\left\Vert \rho-\sigma\right\Vert _{1}/2\right)  /2$. The
fidelity is defined as $F(\rho,\sigma)\equiv\left\Vert \sqrt{\rho}\sqrt
{\sigma}\right\Vert _{1}^{2}$ \cite{U76}, and more generally we can use the
same formula to define $F(P,Q)$ if $P,Q\in\mathcal{L}_{+}(\mathcal{H})$.
Uhlmann's theorem states that \cite{U76}%
\begin{equation}
F(\rho_{A},\sigma_{A})=\max_{U}\left\vert \langle\phi^{\sigma}|_{RA}%
U_{R}\otimes I_{A}|\phi^{\rho}\rangle_{RA}\right\vert ^{2},
\label{eq:uhlmann-thm}%
\end{equation}
where $|\phi^{\rho}\rangle_{RA}$ and $|\phi^{\sigma}\rangle_{RA}$ are fixed
purifications of $\rho_{A}$ and $\sigma_{A}$, respectively, and the
optimization is with respect to all unitaries $U_{R}$. The same statement
holds more generally for $P,Q\in\mathcal{L}_{+}(\mathcal{H})$. The fidelity is
invariant with respect to isometries and monotone non-decreasing with respect
to channels. The sine distance or $C$-distance between two quantum states
$\rho,\sigma\in\mathcal{D}(\mathcal{H})$ was defined as
\begin{equation}
C(\rho,\sigma)\equiv\sqrt{1-F(\rho,\sigma)}%
\end{equation}
and proven to be a metric in \cite{R02,R03,GLN04,R06}. It was
later~\cite{TCR09} (under the name \textquotedblleft purified
distance\textquotedblright) shown to be a metric on subnormalized states
$\rho,\sigma\in\mathcal{D}_{\leq}(\mathcal{H})$ via the embedding
\begin{equation}
P(\rho,\sigma)\equiv C(\rho\oplus\left[  1-\operatorname{Tr}\{\rho\}\right]
,\sigma\oplus\left[  1-\operatorname{Tr}\{\sigma\}\right]  ) \,.
\label{eq:purified-distance}%
\end{equation}

%The purified distance \cite{TCR09}\ between $\rho,\sigma
%\in\mathcal{D}_{\leq}(\mathcal{H})$ is defined as%
%\begin{equation}
%P(\rho,\sigma)\equiv\sqrt{1-F(\rho\oplus\left[  1-\operatorname{Tr}%
%\{\rho\}\right]  ,\sigma\oplus\left[  1-%\operatorname{Tr}\{\sigma\}\right]
%)}. \label{eq:purified-distance}%
%\end{equation}

\subsection{Quantum channels with symmetries}

Let $G$ be a finite group, and for every $g\in G$, let $g\rightarrow U_{A}(g)$
and $g\rightarrow V_{B}(g)$ be unitary representations acting on the input and
output spaces of a quantum channel $\mathcal{N}_{A\rightarrow B}$,
respectively. Then a\ quantum channel $\mathcal{N}_{A\rightarrow B}$\ is
covariant with respect to these representations if the following relation
holds for all input density operators $\rho_{A}\in\mathcal{D}(A)$ and group
elements $g\in G$ \cite{H02}:%
\begin{equation}
\mathcal{N}_{A\rightarrow B}\!\left(  U_{A}(g)\rho_{A}U_{A}^{\dag}(g)\right)
=V_{B}(g)\mathcal{N}_{A\rightarrow B}(\rho_{A})V_{B}^{\dag}(g).
\end{equation}

\begin{definition}
[Covariant channel]\label{def:covariant-channel}A quantum channel is covariant
if it is covariant with respect to a group which has a representation $U(g)$
on $\mathcal{H}_{A}$ that is a unitary one-design, the latter meaning that the
channel $\frac{1}{|G|}\sum_{g\in G}U(g)(\cdot)U(g)^{\dag}$ always outputs the
maximally mixed state.
\end{definition}

The teleportation protocol is a basic primitive in quantum information
\cite{PhysRevLett.70.1895}. We say that a channel is \textquotedblleft
teleportation-simulable\textquotedblright\ with associated state $\omega_{AB}$
if it can be realized by the action of the teleportation protocol on one share
of a bipartite state $\omega_{AB}$ \cite[Section~V]{BDSW96}. That is, a
channel $\mathcal{N}_{A^{\prime}\rightarrow B}$ is teleportation-simulable
with associated state $\omega_{AB}$ if there exists a state $\omega_{AB}%
\in\mathcal{D}(\mathcal{H}_{A}\otimes\mathcal{H}_{B})$, with $\mathcal{H}%
_{A}\simeq\mathcal{H}_{A^{\prime}}$, such that for all $\rho_{A^{\prime}}%
\in\mathcal{D}(\mathcal{H}_{A^{\prime}})$
\begin{equation}
\mathcal{N}_{A^{\prime}\rightarrow B}(\rho_{A^{\prime}})=\mathcal{T}%
_{AA^{\prime}B\rightarrow B}(\rho_{A^{\prime}}\otimes\omega_{AB}),
\label{eq:TP-covariant}%
\end{equation}
where $\mathcal{T}_{AA^{\prime}B\rightarrow B}$ is a channel corresponding to
a general teleportation protocol \cite{Werner01} (note here that the
correction operations might need to be adapted for the output space of the
channel $\mathcal{N}_{A^{\prime}\rightarrow B}$, which could be different from
the input space). The advantage of channels possessing this symmetry is that
any protocol involving $\mathcal{N}_{A^{\prime}\rightarrow B}$ can be replaced
by one which involves Alice and Bob sharing $\omega_{AB}$ and performing
quantum teleportation to simulate $\mathcal{N}_{A^{\prime}\rightarrow B}$.

The following proposition establishes that every covariant channel is
teleportation-simulable. Its proof is given in Appendix~\ref{app:cov->TP-sim}
and extends earlier developments in \cite{Werner01} and \cite[Eqs.~(53)--(56)]%
{LM15}.

\begin{proposition}
\label{prop:cov->TP-sim}If a quantum channel $\mathcal{N}_{A\rightarrow B}%
$\ is covariant (as given in Definition~\ref{def:covariant-channel}), then it
is teleportation simulable with associated state $\omega_{AB} \equiv
\mathcal{N}_{A^{\prime}\to B}(\Phi_{AA^{\prime}})$.
\end{proposition}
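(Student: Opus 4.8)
The plan is to exhibit the teleportation protocol $\mathcal{T}_{AA'B\to B}$ explicitly as the group-twirled version of quantum teleportation: Alice performs a generalized Bell measurement built from the one-design $\{U_A(g)\}_{g\in G}$ on her input together with her share of the resource state $\omega_{AB}$, sends the classical outcome $g$ to Bob, and Bob applies the correction unitary $V_B(g)$ on his output system (this correction now lives on $\mathcal{H}_B$, which is the ``adaptation'' referred to after \eqref{eq:TP-covariant}). For $\mathcal{N}=\mathrm{id}$ and $\{U_A(g)\}$ generating the Heisenberg--Weyl group this is just ordinary teleportation; the content is to verify that it continues to work for a general covariant $\mathcal{N}$ once the resource is taken to be the Choi state $\omega_{AB}=\mathcal{N}_{A'\to B}(\Phi_{AA'})$.

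Concretely, I would write $\omega_{AB}=\mathcal{N}_{\hat{A}\to B}(\Phi_{A\hat{A}})$ with $\hat A$ a fresh copy of the channel input, put $d:=\dim\mathcal{H}_{A'}=\dim\mathcal{H}_A$, and on the systems $A'$ (Alice's input) and $A$ (Alice's share of $\omega_{AB}$) define the candidate POVM
\begin{equation}
\Lambda^{g}_{A'A}\ \equiv\ \frac{d^{2}}{|G|}\bigl(U_{A'}(g)\otimes I_{A}\bigr)\Phi_{A'A}\bigl(U_{A'}(g)^{\dag}\otimes I_{A}\bigr),\qquad g\in G.
\end{equation}
Positivity is clear, and completeness $\sum_{g}\Lambda^{g}_{A'A}=I_{A'A}$ follows from the one-design property: testing $\frac{1}{|G|}\sum_g(U_{A'}(g)\otimes I_A)\Phi_{A'A}(U_{A'}(g)^{\dag}\otimes I_A)$ against an arbitrary product $X_{A'}\otimes Y_A$ and using $\frac{1}{|G|}\sum_g U(g)^{\dag}X U(g)=\frac{\operatorname{Tr}(X)}{d}I$ together with $\Phi_A=I_A/d$ gives $\operatorname{Tr}(X)\operatorname{Tr}(Y)/d^{2}$, i.e.\ the average is $I_{A'A}/d^{2}$. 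Then $\mathcal{T}_{AA'B\to B}$ is the LOCC map: Alice measures $A'A$ with $\{\Lambda^{g}\}$, transmits $g$, and Bob applies $(\cdot)\mapsto V_B(g)(\cdot)V_B(g)^{\dag}$ to $B$ --- which is of the form of a general teleportation protocol.

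To verify correctness, fix $\rho_{A'}\in\mathcal{D}(\mathcal{H}_{A'})$. Since $\mathcal{N}_{\hat{A}\to B}$ acts on a tensor factor disjoint from the systems $A'A$ that Alice measures, it commutes past the measurement and the partial trace, so the probability-weighted post-measurement state for outcome $g$ equals $\mathcal{N}_{\hat{A}\to B}\!\bigl(\operatorname{Tr}_{A'A}[(\Lambda^{g}_{A'A}\otimes I_{\hat{A}})(\rho_{A'}\otimes\Phi_{A\hat{A}})]\bigr)$. Moving $U_{A'}(g)^{\dag}$ onto $\rho_{A'}$ by cyclicity and then applying the elementary teleportation identity $\langle\Phi|_{A'A}(\sigma_{A'}\otimes\Phi_{A\hat{A}})|\Phi\rangle_{A'A}=\sigma_{\hat{A}}/d^{2}$, the inner partial trace collapses to $\frac{1}{|G|}U_{\hat{A}}(g)^{\dag}\rho_{\hat{A}}\,U_{\hat{A}}(g)$; hence before Bob's correction the $g$-th term is $\frac{1}{|G|}\mathcal{N}\!\bigl(U(g)^{\dag}\rho\,U(g)\bigr)$. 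Now I invoke covariance at the group element $g^{-1}$, using $U_A(g)^{\dag}=U_A(g^{-1})$ and $V_B(g^{-1})=V_B(g)^{\dag}$: this gives $\mathcal{N}\!\bigl(U_A(g)^{\dag}\rho\,U_A(g)\bigr)=V_B(g)^{\dag}\mathcal{N}(\rho)V_B(g)$, so Bob's correction $V_B(g)(\cdot)V_B(g)^{\dag}$ turns the $g$-th term into $\frac{1}{|G|}\mathcal{N}(\rho)$, and summing over $g\in G$ yields $\mathcal{N}_{A'\to B}(\rho_{A'})$, as required.

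The only delicate point is bookkeeping: one must track where complex conjugates/transposes enter the definition of $\Lambda^{g}$ and the teleportation identity, and one must twirl the correct tensor factor (the input side $A'$, where the one-design lives, not the resource side $A$); there is no conceptual obstacle beyond the routine check that $\{\Lambda^{g}\}$ is a genuine POVM. It is worth flagging that the argument uses only that $\{U_A(g)\}$ is a one-design and that the covariance relation holds for every $g\in G$; it never requires $\{V_B(g)\}$ to be a one-design, which is exactly what permits $\mathcal{H}_B$ to differ from $\mathcal{H}_A$.
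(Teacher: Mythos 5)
Your proposal is correct and follows essentially the same route as the paper's proof in Appendix~\ref{app:cov->TP-sim}: the same POVM $\{\frac{d^{2}}{|G|}U_{A'}(g)\Phi_{A'A}U_{A'}(g)^{\dag}\}_{g}$ built from the one-design, the same classical forwarding of $g$ followed by Bob's correction $V_{B}(g)$, and the same final appeal to covariance to convert $\mathcal{N}(U(g)^{\dag}\rho\,U(g))$ into $V_{B}(g)^{\dag}\mathcal{N}(\rho)V_{B}(g)$. The only cosmetic differences are that you bundle the paper's two ``ricochet'' steps into a single teleportation identity (where the two conjugations indeed cancel, as your bookkeeping remark anticipates) and you verify POVM completeness by testing against product operators rather than asserting \eqref{eq:max-ent-cov-action} directly.
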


\subsection{Local operations and (public) classical communication}

A quantum instrument\ is a quantum channel that accepts a quantum system as
input and outputs two systems:\ a classical one and a quantum one
\cite{DL70,D76,Ozawa1984}. More formally, a quantum instrument is a collection
$\{\mathcal{N}^{x}\}$\ of completely positive trace non-increasing maps, such
that the sum map $\sum_{x}\mathcal{N}^{x}$ is a quantum channel. We can write
the action of a quantum instrument on an input density operator $\rho
\in\mathcal{D}(\mathcal{H})$ as the following quantum channel:%
\begin{equation}
\rho\rightarrow\sum_{x}\mathcal{N}^{x}(\rho)\otimes|x\rangle\langle x|,
\end{equation}
where $\left\{  |x\rangle\right\}  _{x} $ is an orthonormal basis labeling the
classical output of the instrument.

It is common in quantum communication theory to consider the framework of
local operations and classical communication (LOCC) \cite{BDSW96,CLMOW14},
which consists of particular interactions between two parties usually called
Alice and Bob. A round of LOCC\ (or LOCC\ channel) consists of a finite number
of compositions of the following:

\begin{enumerate}
\item Alice performs a quantum instrument, which has both a quantum and
classical output. She forwards the classical output to Bob, who then performs
a quantum channel conditioned on the classical data received. This sequence of
actions corresponds to the following channel:%
\begin{equation}
\sum_{x}\mathcal{E}_{A}^{x}\otimes\mathcal{F}_{B}^{x},
\label{eq-em:LOCC-channel}%
\end{equation}
where $\{\mathcal{E}_{A}^{x}\}$ is a collection of completely positive maps
such that $\sum_{x}\mathcal{E}_{A}^{x}$ is a quantum channel and
$\{\mathcal{F}_{B}^{x}\}$ is a collection of quantum channels.

\item The situation is reversed, with Bob performing the initial instrument,
who forwards the classical data to Alice, who then performs a quantum channel
conditioned on the classical data. This sequence of actions corresponds to a
channel of the form in \eqref{eq-em:LOCC-channel}, with the $A$ and $B$ labels switched.
\end{enumerate}

\noindent The framework of local operations and public communication (LOPC) is
essentially the same as LOCC, except that the terminology implies that there
is a third party Eve (an eavesdropper)\ who receives a copy of all of the
classical data exchanged between Alice and Bob.

A channel is separable \cite{Rains2008} if it has Kraus operators of the form
$C_{A}^{x}\otimes D_{B}^{x}$, where $\sum_{x}(C_{A}^{x})^{\dag}C_{A}%
^{x}\otimes(D_{B}^{x})^{\dag}D_{B}^{x}=I_{AB}$. Every LOCC\ channel is
separable, but the opposite is not always true \cite{PhysRevA.59.1070}. A
channel is separability preserving \cite{HN03,BP2010} if it preserves the set
of separable states. The swap operator is an example of a
separability-preserving channel that is not a separable channel.

\subsection{Private states}

\label{sec:private-states}Private states are an essential notion for our
analysis \cite{HHHO05,HHHO09}, and we review their basics here.

\begin{definition}
\label{def:tripartite-key-state-1-def}A tripartite key state $\gamma_{ABE}%
\in\mathcal{D}(\mathcal{H}_{ABE})$\ contains $\log K$ bits of secret key if
there exists a state $\sigma_{E}\in\mathcal{D}(\mathcal{H}_{E})$ and a
projective measurement channel $\mathcal{M}(\cdot)=\sum_{i}|i\rangle\langle
i|(\cdot)|i\rangle\langle i|$, where $\{\vert i \rangle\}_{i}$ is an
orthonormal basis, such that%
\begin{equation}
\left(  \mathcal{M}_{A}\otimes\mathcal{M}_{B}\right)  \left(  \gamma
_{ABE}\right)  =\frac{1}{K}\sum_{i=0}^{K-1}|i\rangle\langle i|_{A}%
\otimes|i\rangle\langle i|_{B}\otimes\sigma_{E}.
\label{eq:tri-party-priv-state}%
\end{equation}

\end{definition}

That is, we see that the systems $A$ and $B$ are maximally classically
correlated, and the key value is uniformly random and independent of the $E$
system. Physically, we can think of the $A$ system as being in Alice's
laboratory, $B$ in Bob's, and $E$ in Eve's. We also think of Alice and Bob as
two honest parties and Eve as a malicious eavesdropper whose system should
ideally be independent of the key systems possessed by Alice and Bob.

Purifying such a state $\gamma_{ABE}$ with two systems $A^{\prime}$ and
$B^{\prime}$, thinking of $A^{\prime}$ as being available to Alice and
$B^{\prime}$ as being available to Bob (or alternatively simply as not being
available to Eve), and tracing out the $E$ system then leads to the notion of
a bipartite private state $\gamma_{ABA^{\prime}B^{\prime}}$
\cite{HHHO05,HHHO09}. As shown in \cite{HHHO05,HHHO09}, any such state
$\gamma_{ABA^{\prime}B^{\prime}}\in\mathcal{D}(\mathcal{H}_{ABA^{\prime
}B^{\prime}})$ takes a canonical form:

\begin{definition}
\label{def:private-state-bi} A bipartite private state $\gamma_{A^{\prime
\prime}B^{\prime\prime}} \in\mathcal{D}(\mathcal{H}_{A^{\prime\prime}%
B^{\prime\prime}})$ contains $\log K$ bits of secret key if $\mathcal{H}%
_{A^{\prime\prime}} = \mathcal{H}_{A} \otimes\mathcal{H}_{A^{\prime}}$ and
$\mathcal{H}_{B^{\prime\prime}} = \mathcal{H}_{B} \otimes\mathcal{H}%
_{B^{\prime}}$ such that $\gamma_{ABA^{\prime}B^{\prime}}\in\mathcal{D}%
(\mathcal{H}_{ABA^{\prime}B^{\prime}})$ has the following form:%
\begin{equation}
\gamma_{ABA^{\prime}B^{\prime}}=U_{ABA^{\prime}B^{\prime}}(\Phi_{AB}%
\otimes\theta_{A^{\prime}B^{\prime}})U_{ABA^{\prime}B^{\prime}}^{\dag},
\label{eq:bi-party-priv-state}%
\end{equation}
where $\Phi_{AB}$ is a maximally entangled state of Schmidt rank $K$,
$U_{ABA^{\prime}B^{\prime}}$ is a \textquotedblleft twisting\textquotedblright%
\ unitary of the form%
\begin{equation}
U_{ABA^{\prime}B^{\prime}}=\sum_{i,j=0}^{K-1}|i\rangle\langle i|_{A}%
\otimes|j\rangle\langle j|_{B}\otimes U_{A^{\prime}B^{\prime}}^{ij},
\label{eq:twisting-unitary}%
\end{equation}
with each $U_{A^{\prime}B^{\prime}}^{ij}$ a unitary, and $\theta_{A^{\prime
}B^{\prime}}\in\mathcal{D}(\mathcal{H}_{A^{\prime}B^{\prime}})$.
\end{definition}

The systems $A^{\prime}$ and $B^{\prime}$ are called the \textquotedblleft
shield\textquotedblright\ systems because they, along with the twisting
unitary, can help to protect the key in systems $A$ and $B$ from any party
possessing a purification of $\gamma_{ABA^{\prime}B^{\prime}}$. Such bipartite
private states are in one-to-one correspondence with the tripartite key states
given in \eqref{eq:tri-party-priv-state} \cite{HHHO05,HHHO09}. That is, for
every state $\gamma_{ABE}$ of the form in \eqref{eq:tri-party-priv-state}, we
can find a state of the form in \eqref{eq:bi-party-priv-state} and vice versa.
We summarize this as the following proposition:

\begin{proposition}
[\cite{HHHO05,HHHO09}]Bipartite private states and tripartite key states are
equivalent. That is, for $\gamma_{ABA^{\prime}B^{\prime}}$ a bipartite private
state, $\gamma_{ABE}$ is a tripartite key state for any purification
$\gamma_{ABA^{\prime}B^{\prime}E}$ of $\gamma_{ABA^{\prime}B^{\prime}}$.
Conversely, for any tripartite key state $\gamma_{ABE}$ and any purification
$\gamma_{ABA^{\prime}B^{\prime}E}$ of it, $\gamma_{ABA^{\prime}B^{\prime}}$ is
a bipartite private state.
\end{proposition}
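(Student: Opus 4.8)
The plan is to prove the equivalence in both directions, using the canonical form \eqref{eq:bi-party-priv-state} as the starting point and exploiting the fact that all purifications of a given state are related by an isometry on the purifying system.

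First I would handle the direction from bipartite private state to tripartite key state. Suppose $\gamma_{ABA^{\prime}B^{\prime}}$ has the form in \eqref{eq:bi-party-priv-state}, and let $\gamma_{ABA^{\prime}B^{\prime}E}$ be any purification. Since $\Phi_{AB}$ is maximally entangled of Schmidt rank $K$, a natural purification of $\Phi_{AB}\otimes\theta_{A^{\prime}B^{\prime}}$ is $\Phi_{AB}\otimes|\theta\rangle_{A^{\prime}B^{\prime}E}$, where $|\theta\rangle_{A^{\prime}B^{\prime}E}$ purifies $\theta_{A^{\prime}B^{\prime}}$; applying the twisting unitary $U_{ABA^{\prime}B^{\prime}}$ (extended by the identity on $E$) gives a particular purification of $\gamma_{ABA^{\prime}B^{\prime}}$. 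I would then compute $(\mathcal{M}_A\otimes\mathcal{M}_B)$ applied to the $E$-reduced state of this purification: the measurement projects $\Phi_{AB}$ onto the maximally correlated classical state $\frac{1}{K}\sum_i |i\rangle\langle i|_A\otimes|i\rangle\langle i|_B$, and the twisting unitary, being block-diagonal in the $\{|i\rangle_A|j\rangle_B\}$ basis, commutes through in the sense that on the surviving diagonal blocks it acts as $U^{ii}_{A^{\prime}B^{\prime}}$ on $\theta_{A^{\prime}B^{\prime}}$. After tracing out $A^{\prime}B^{\prime}$, the $E$-system is left in the state $\sigma_E = \operatorname{Tr}_{A^{\prime}B^{\prime}}\{(U^{ii}_{A^{\prime}B^{\prime}}\otimes I_E)|\theta\rangle\langle\theta|_{A^{\prime}B^{\prime}E}(U^{ii}_{A^{\prime}B^{\prime}}\otimes I_E)^{\dag}\}$, which one checks is independent of $i$ because the reduced state on $E$ of $|\theta\rangle_{A^{\prime}B^{\prime}E}$ is unaffected by a unitary on $A^{\prime}B^{\prime}$. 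This establishes \eqref{eq:tri-party-priv-state} for this particular purification; since any two purifications differ by an isometry acting only on $E$, and such an isometry leaves the left-hand side of \eqref{eq:tri-party-priv-state} unchanged on the $AB$ systems while only relabeling $\sigma_E$, the conclusion holds for every purification.

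For the converse direction, I would start from an arbitrary tripartite key state $\gamma_{ABE}$ satisfying \eqref{eq:tri-party-priv-state} and a purification $\gamma_{ABA^{\prime}B^{\prime}E}$. The key step is Uhlmann's theorem \eqref{eq:uhlmann-thm}: I would exhibit a state of the canonical private-state form whose purification, after tracing out $E$, has the same reduced state on $ABA^{\prime}B^{\prime}$ as $\gamma$ does — actually the cleaner route is to argue directly that the structure of \eqref{eq:tri-party-priv-state} forces the canonical form. Concretely, writing $P_i = |i\rangle\langle i|$, the condition \eqref{eq:tri-party-priv-state} says $(P^A_i\otimes P^B_j)\gamma_{AB}(P^A_{i'}\otimes P^B_{j'})$ vanishes after tracing out nothing only on the "diagonal" $i=j=i'=j'$ terms once measured; lifting this to the purification and using that a pure state whose measured version is a specific mixture must live in the span determined by that mixture, one deduces that $\gamma_{ABA^{\prime}B^{\prime}E}$ has the block structure $\sum_{i,j}|i\rangle_A|j\rangle_B \otimes |\xi_{ij}\rangle_{A^{\prime}B^{\prime}E}$ (up to normalization) with $\langle\xi_{ij}|\xi_{i'j'}\rangle$ constrained so that only $i=i'$, $j=j'$ contributes to the $E$-marginal after the local measurements force the correlations. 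Defining the twisting unitary via $U^{ij}_{A^{\prime}B^{\prime}}$ as the Uhlmann unitaries relating the various $|\xi_{ij}\rangle$ to a fixed reference $|\xi_{00}\rangle$ then untwists $\gamma_{ABA^{\prime}B^{\prime}}$ into $\Phi_{AB}\otimes\theta_{A^{\prime}B^{\prime}}$ with $\theta_{A^{\prime}B^{\prime}} = \operatorname{Tr}_E|\xi_{00}\rangle\langle\xi_{00}|$.

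The main obstacle, and the step that requires the most care, is the converse direction: showing that the correlations imposed by \eqref{eq:tri-party-priv-state} on the \emph{measured} state are rigid enough to pin down the \emph{unmeasured} pure state up to a twisting. This is where one must use the independence of $\sigma_E$ from $i$ together with Uhlmann's theorem to produce the unitaries $U^{ij}_{A^{\prime}B^{\prime}}$; the subtlety is checking that these can all be chosen to act purely on $A^{\prime}B^{\prime}$ (not touching $E$), which follows because the $E$-marginals of all the $|\xi_{ij}\rangle$ coincide (being equal to $\sigma_E$), so Uhlmann's theorem can be applied with the freedom living entirely in $A^{\prime}B^{\prime}$. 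Since this equivalence is quoted directly from \cite{HHHO05,HHHO09}, I would likely present it in abbreviated form, citing those works for the detailed verification and emphasizing only the Uhlmann-theorem mechanism that makes the two pictures interchangeable.
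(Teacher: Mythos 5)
The paper itself gives no proof of this proposition: it is stated with a citation to \cite{HHHO05,HHHO09} and the surrounding text simply asserts the one-to-one correspondence. So there is no in-paper argument to compare against; your sketch has to be judged against the standard argument in those references, and it matches it in all essentials and is correct. The forward direction is exactly right: the twisting unitary is block-diagonal in the key basis, so after the local dephasing measurements only the diagonal blocks survive, and since each $U^{ii}_{A^{\prime}B^{\prime}}$ acts only on the traced-out shield systems, the $E$-marginal is $\theta_E$ independently of $i$; passing to an arbitrary purification via an isometry on $E$ preserves the form of \eqref{eq:tri-party-priv-state}. For the converse, your identification of the key mechanism (all the conditional purifying vectors share the same $E$-marginal $\sigma_E$, so Uhlmann unitaries can be chosen to act on $A^{\prime}B^{\prime}$ alone) is the heart of the proof. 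The one place your write-up is muddier than it needs to be is the block-structure step: writing $|\gamma\rangle_{ABA^{\prime}B^{\prime}E}=\sum_{i,j}|i\rangle_{A}|j\rangle_{B}\otimes|\tilde{\xi}_{ij}\rangle_{A^{\prime}B^{\prime}E}$, the condition \eqref{eq:tri-party-priv-state} forces $\operatorname{Tr}_{A^{\prime}B^{\prime}}\{|\tilde{\xi}_{ij}\rangle\langle\tilde{\xi}_{ij}|\}=\delta_{ij}\,\sigma_{E}/K$, and for $i\neq j$ this positive operator having zero trace means $|\tilde{\xi}_{ij}\rangle=0$ outright --- no constraint on cross inner products $\langle\xi_{ij}|\xi_{i'j'}\rangle$ is needed. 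Only the diagonal vectors survive, they are all purifications of $\sigma_E$ through $A^{\prime}B^{\prime}$, and the unitaries relating them to a fixed reference supply the $U^{ii}_{A^{\prime}B^{\prime}}$ (the off-diagonal $U^{ij}_{A^{\prime}B^{\prime}}$ in \eqref{eq:twisting-unitary} may be chosen arbitrarily). With that clean-up your argument is a complete and correct proof of the proposition.
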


This correspondence takes on a more physical form (reviewed in
Section~\ref{sec:priv-class-comm}), which is that any tripartite protocol
whose aim it is to extract tripartite key states of the form in
\eqref{eq:tri-party-priv-state} is in one-to-one correspondence with a
bipartite protocol whose aim it is to extract bipartite private states of the
form in \eqref{eq:bi-party-priv-state} \cite{HHHO05,HHHO09}.

\begin{definition}
\label{def:approx-priv}A state $\rho_{ABE}\in\mathcal{D}(\mathcal{H}_{ABE})$
is an $\varepsilon$-approximate tripartite key state if there exists a
tripartite key state $\gamma_{ABE}$ of the form in
\eqref{eq:tri-party-priv-state} such that%
\begin{equation}
F(\rho_{ABE},\gamma_{ABE})\geq1-\varepsilon, \label{eq:approx-tri-priv-state}%
\end{equation}
where $\varepsilon\in\left[  0,1\right]  $. Similarly, a state $\rho
_{ABA^{\prime}B^{\prime}}\in\mathcal{D}(\mathcal{H}_{ABA^{\prime}B^{\prime}})$
is an $\varepsilon$-approximate bipartite private state if there exists a
bipartite private state $\gamma_{ABA^{\prime}B^{\prime}}\in\mathcal{D}%
(\mathcal{H}_{ABA^{\prime}B^{\prime}})$ of the form in
\eqref{eq:tri-party-priv-state} such that%
\begin{equation}
F(\rho_{ABA^{\prime}B^{\prime}},\gamma_{ABA^{\prime}B^{\prime}})\geq
1-\varepsilon. \label{eq:approx-bi-priv-state}%
\end{equation}

\end{definition}

Approximate tripartite key states are in one-to-one correspondence with
approximate bipartite private states \cite[Theorem~5]{HHHO09}, as summarized below:

\begin{proposition}
[\cite{HHHO05,HHHO09}]If $\rho_{ABA^{\prime}B^{\prime}}$ is an $\varepsilon
$-approximate bipartite key state with $K$ key values, then Alice and Bob hold
an $\varepsilon$-approximate tripartite key state with $K$ key values. The
converse statement is true as well.
\end{proposition}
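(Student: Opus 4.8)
The plan is to lift the exact ($\varepsilon = 0$) equivalence recalled in the preceding proposition to the approximate setting, using only two facts already stated above: Uhlmann's theorem (\eqref{eq:uhlmann-thm}), which transfers a fidelity bound on reduced states to a fidelity bound on suitably chosen purifications, and the monotonicity of fidelity under quantum channels (here, partial traces), which transfers it back. Throughout, all Hilbert spaces are finite-dimensional and purifying systems are taken large enough to purify every state under consideration.

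For the forward implication, suppose $\rho_{ABA'B'}$ is an $\varepsilon$-approximate bipartite private state, so that $F(\rho_{ABA'B'}, \gamma_{ABA'B'}) \geq 1 - \varepsilon$ for some bipartite private state $\gamma_{ABA'B'}$ with $K$ key values. First I would fix a purification $\rho_{ABA'B'E}$ of $\rho_{ABA'B'}$ with $E$ the eavesdropper's system, so that the tripartite state held by Alice, Bob, and Eve is $\rho_{ABE} = \operatorname{Tr}_{A'B'}\{\rho_{ABA'B'E}\}$. Then, by Uhlmann's theorem applied with reference system $E$ fixed, there is a purification $\gamma_{ABA'B'E}$ of $\gamma_{ABA'B'}$ with $F(\rho_{ABA'B'E}, \gamma_{ABA'B'E}) = F(\rho_{ABA'B'}, \gamma_{ABA'B'}) \geq 1 - \varepsilon$. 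By the exact correspondence applied to the bipartite private state $\gamma_{ABA'B'}$ and this purification of it, $\gamma_{ABE} \equiv \operatorname{Tr}_{A'B'}\{\gamma_{ABA'B'E}\}$ is a tripartite key state with $K$ key values. Finally, monotonicity of fidelity under the channel $\operatorname{Tr}_{A'B'}$ gives $F(\rho_{ABE}, \gamma_{ABE}) \geq F(\rho_{ABA'B'E}, \gamma_{ABA'B'E}) \geq 1 - \varepsilon$, so that $\rho_{ABE}$ is an $\varepsilon$-approximate tripartite key state with $K$ key values, as required by Definition~\ref{def:approx-priv}.

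The converse is the same argument run in reverse. Suppose Alice, Bob, and Eve hold a state $\rho_{ABA'B'E}$, which we may take to be pure without loss of generality, such that $\rho_{ABE} = \operatorname{Tr}_{A'B'}\{\rho_{ABA'B'E}\}$ is an $\varepsilon$-approximate tripartite key state, witnessed by a tripartite key state $\gamma_{ABE}$ with $F(\rho_{ABE}, \gamma_{ABE}) \geq 1 - \varepsilon$. Since $\rho_{ABA'B'E}$ is then a purification of $\rho_{ABE}$ with purifying system $A'B'$, Uhlmann's theorem (now with reference system $A'B'$ fixed) yields a purification $\gamma_{ABA'B'E}$ of $\gamma_{ABE}$ with $F(\rho_{ABA'B'E}, \gamma_{ABA'B'E}) = F(\rho_{ABE}, \gamma_{ABE}) \geq 1 - \varepsilon$; by the exact correspondence, $\gamma_{ABA'B'} \equiv \operatorname{Tr}_{E}\{\gamma_{ABA'B'E}\}$ is a bipartite private state with $K$ key values, and monotonicity of fidelity under $\operatorname{Tr}_{E}$ gives $F(\rho_{ABA'B'}, \gamma_{ABA'B'}) \geq 1 - \varepsilon$.

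The one step that requires care, and the only real obstacle, is making sure Uhlmann's theorem is invoked at the intended cut: in the forward direction the purifying system for the witness $\gamma_{ABA'B'}$ must be the eavesdropper's $E$, and in the converse the purifying system for the witness $\gamma_{ABE}$ must be the shield systems $A'B'$. This is a dimension count --- that system must have dimension at least the rank of the corresponding witness state. In the forward direction it is automatic because $E$ is chosen; in the converse it holds provided $A'B'$ is large enough, which is always true in a protocol, where $A'B'$ are exactly the (arbitrarily large) purifying registers that Alice and Bob retain. Everything else is bookkeeping; in particular $K$ is preserved throughout, since it is fixed by the maximally entangled part $\Phi_{AB}$ of the private state --- equivalently, the $K$-dimensional perfectly correlated classical register in \eqref{eq:tri-party-priv-state} --- which none of the operations above disturbs.
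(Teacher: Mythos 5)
Your proof is correct: the paper itself gives no proof of this proposition, deferring to Theorem~5 of \cite{HHHO09}, and your argument---Uhlmann's theorem to lift the fidelity bound from the reduced states to matched purifications, the exact bipartite/tripartite correspondence, and monotonicity of fidelity under partial trace (which is what makes the fidelity formulation lossless, with no dimension factors)---is precisely the standard argument behind that cited result. Your handling of the one delicate point, namely that the purifying system at the relevant cut ($E$ in the forward direction, $A'B'$ in the converse) must be large enough to purify the witness state, and can be enlarged for free by locally appending a fixed pure state without disturbing the key systems or the value of $K$, is also the right one.
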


\section{Secret key transmission and generation over quantum channels}

\label{sec:priv-class-comm}In this section, we define secret-key transmission
and generation codes and corresponding measures of their performance. We also
review the identification from \cite{HHHO05,HHHO09}, which shows how a
tripartite key distillation protocol is in one-to-one correspondence with a
bipartite private state distillation protocol.

\subsection{Secret-key transmission codes}

\label{sec:private-comm-codes}Given is a quantum channel $\mathcal{N}%
_{A^{\prime}\rightarrow B}$. Let $\mathcal{N}_{A^{\prime}\rightarrow
B}^{\otimes n}$ denote the tensor-product channel, $U_{A^{\prime}\rightarrow
BE}^{\mathcal{N}}$ an isometric extension of $\mathcal{N}_{A^{\prime
}\rightarrow B}$, and $\mathcal{U}_{A^{\prime}\rightarrow BE}^{\mathcal{N}}$
the associated isometric channel. A secret-key transmission protocol for $n$
channel uses consists of a triple $\{\left\vert K\right\vert ,\mathcal{E}%
,\mathcal{D}\}$, where $\left\vert K\right\vert $ is the size of the secret
key to be generated, $\mathcal{E}_{K^{\prime}\rightarrow A^{\prime n}}$ is the
encoder (a CPTP\ map), and $\mathcal{D}_{B^{n}\rightarrow\hat{K}}$ is the
decoder (another CPTP map). The protocol begins with a third party preparing a
maximally classically correlated state $\overline{\Phi}_{KK^{\prime}}$ of the
following form:%
\begin{equation}
\overline{\Phi}_{KK^{\prime}}\equiv\frac{1}{\left\vert K\right\vert }%
\sum_{i=0}^{\left\vert K\right\vert -1}|i\rangle\langle i|_{K}\otimes
|i\rangle\langle i|_{K^{\prime}},
\end{equation}
and then sending the $K^{\prime}$ system to Alice. Alice then inputs the
$K^{\prime}$ system to an encoder $\mathcal{E}_{K^{\prime}\rightarrow
A^{\prime n}}$, transmits the $A^{\prime n}$ systems through the tensor-power
channel $(\mathcal{U}_{A^{\prime}\rightarrow BE}^{\mathcal{N}})^{\otimes n}$,
and the receiver Bob applies the decoder $\mathcal{D}_{B^{n}\rightarrow\hat
{K}}$ to the systems~$B^{n}$. The state at the end of the protocol is as
follows:%
\begin{equation}
\rho_{K\hat{K}E^{n}}\equiv(\mathcal{D}_{B^{n}\rightarrow\hat{K}}%
\circ(\mathcal{U}_{A^{\prime}\rightarrow BE}^{\mathcal{N}})^{\otimes n}%
\circ\mathcal{E}_{K^{\prime}\rightarrow A^{\prime n}})(\overline{\Phi
}_{KK^{\prime}}).
\end{equation}
Figure~\ref{fig:private-code}\ depicts such a protocol.\begin{figure}[ptb]
\begin{center}
\includegraphics[
width=4.0638in
]{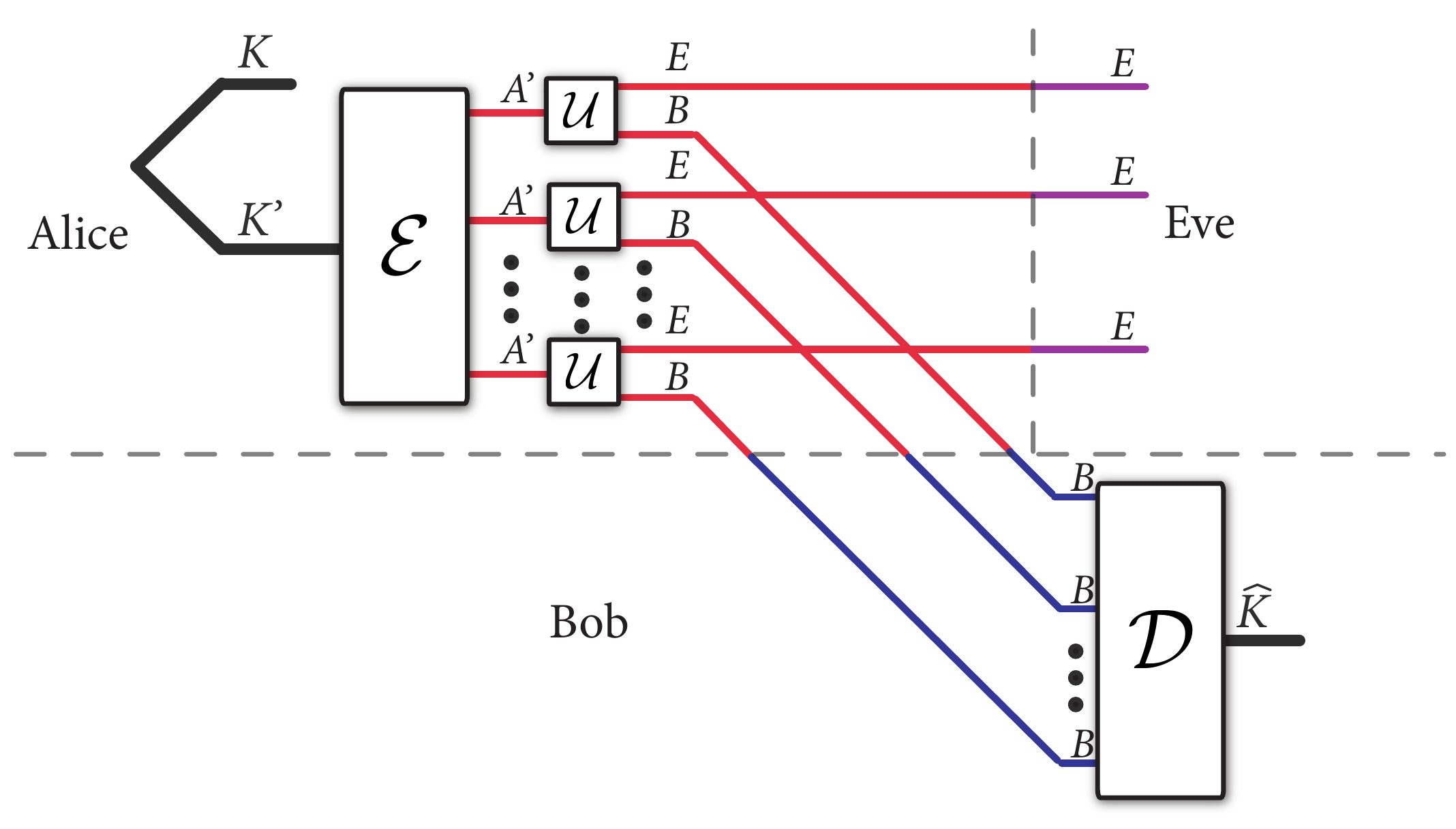}
\end{center}
\caption{A protocol for secret-key transmission over a quantum channel.}%
\label{fig:private-code}%
\end{figure}

A triple $\left(  n,P,\varepsilon\right)  $ consists of the number $n$\ of
channel uses, the rate $P$ of secret-key transmission, and the error
$\varepsilon\in\left[  0,1\right]  $. Such a triple is achievable on
$\mathcal{N}_{A^{\prime}\rightarrow B}$\ if there exists a secret-key
transmission protocol $\{\left\vert K\right\vert ,\mathcal{E},\mathcal{D}\}$
and some state $\omega_{E^{n}}\in\mathcal{D}(\mathcal{H}_{E^{n}})$ such that
$\frac{1}{n}\log\left\vert K\right\vert \geq P$ and
\begin{equation}
F(\overline{\Phi}_{K\hat{K}}\otimes\omega_{E^{n}},\rho_{K\hat{K}E^{n}}%
)\geq1-\varepsilon. \label{eq:code-performance-fid}%
\end{equation}
Thus, the goal of such a secret-key transmission protocol is to realize an
$\varepsilon$-approximate tripartite secret-key state as defined in \eqref{eq:approx-tri-priv-state}.

Note that the above definition of secret-key transmission combines the error
probability and the security parameter into a single parameter $\varepsilon$,
in contrast to the definitions from
\cite{ieee2005dev,1050633,MW13,HTW14,Win16}. Doing so is consistent with the
definition of private capacity or distillable key from
\cite{HHHO05,HHHO09,C06} and turns out to be beneficial for the developments
in this paper. Furthermore, we argue in Appendix~\ref{app:priv-def-str-conv}%
\ how a converse bound according to the above definition of privacy gives a
converse bound according to quantum generalizations of the privacy definition
from \cite{HTW14}.
%(see Appendix~\ref{app:priv-def-str-conv} for specifics).

As mentioned before Definition~\ref{def:approx-priv}, it is possible to purify
a secret-key transmission protocol \cite{HHHO05,HHHO09}, such that every step
is performed coherently and the ultimate goal is to realize a private
bipartite state $\gamma_{K_{A}K_{B}S_{A}S_{B}}$, where we now denote the key
systems by $K$ and the shield systems by $S$. In the class of protocols
discussed above, this consists of replacing each step with the following:

\begin{enumerate}
\item A third party preparing a purification of the state $\overline{\Phi
}_{KK^{\prime}}$, which is a \textquotedblleft GHZ\ state\textquotedblright%
\ that we denote by $|\Phi^{\operatorname{GHZ}}\rangle_{KK^{\prime}M}%
\equiv\left\vert K\right\vert ^{-1/2}\sum_{i}|i\rangle_{K}\otimes
|i\rangle_{K^{\prime}}\otimes|i\rangle_{M}$, and giving the $K^{\prime}$
system to Alice,

\item Alice performing an isometric extension of the encoder $\mathcal{E}%
_{K^{\prime}\rightarrow A^{\prime n}}$, denoted by $\mathcal{U}_{K^{\prime
}\rightarrow A^{\prime n}A^{\prime\prime}}^{\mathcal{E}}$,

\item Bob performing an isometric extension of the decoder $\mathcal{D}%
_{B^{n}\rightarrow\hat{K}}$, denoted by $\mathcal{U}_{B^{n}\rightarrow\hat
{K}B^{\prime\prime}}^{\mathcal{D}}$.
\end{enumerate}

\noindent Figure~\ref{fig:private-prot-purified}\ depicts such a purified
version of a secret-key transmission protocol. \begin{figure}[ptb]
\begin{center}
\includegraphics[
width=3.3304in
]{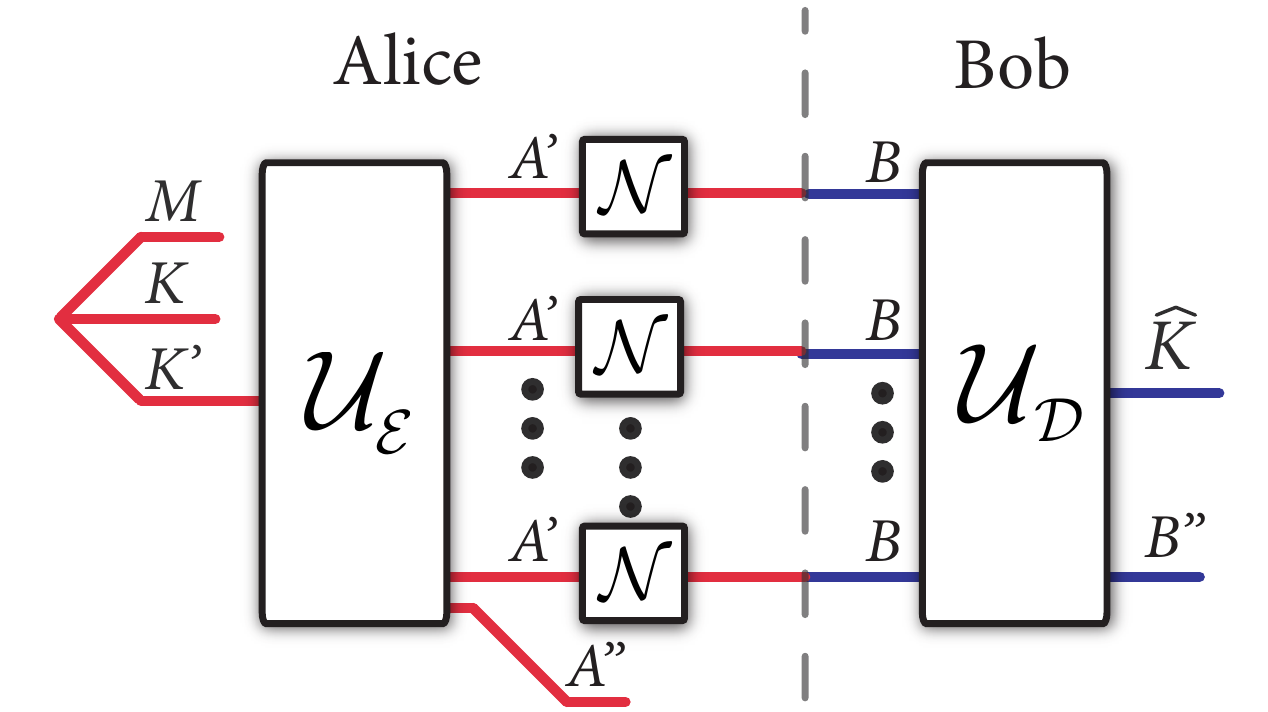}
\end{center}
\caption{A purified version of a secret-key transmission protocol.}%
\label{fig:private-prot-purified}%
\end{figure}By employing \cite[Theorem~5]{HHHO09}, we find that
\eqref{eq:code-performance-fid} implies that%
\begin{equation}
F(\gamma_{K_{A}K_{B}S_{A}S_{B}},\rho_{K\hat{K}MA^{\prime\prime}B^{\prime
\prime}})\geq1-\varepsilon,
\end{equation}
for some private state $\gamma_{K_{A}K_{B}S_{A}S_{B}}$, where we make the
identifications $K_{A}\equiv K$, $K_{B}\equiv\hat{K}$, $S_{A}\equiv
MA^{\prime\prime}$, and $S_{B}\equiv B^{\prime\prime}$, and%
\begin{equation}
\rho_{K\hat{K}MA^{\prime\prime}B^{\prime\prime}E^n }\equiv(\mathcal{U}%
_{B^{n}\rightarrow\hat{K}B^{\prime\prime}}^{\mathcal{D}}\circ(\mathcal{U}%
_{A^{\prime}\rightarrow BE}^{\mathcal{N}})^{\otimes n}\circ\mathcal{U}%
_{K^{\prime}\rightarrow A^{\prime n}A^{\prime\prime}}^{\mathcal{E}}%
)(\Phi_{KK^{\prime}M}^{\operatorname{GHZ}}).
\end{equation}

\subsection{Secret-key generation codes}

A secret-key generation protocol is defined similarly as above, with some key
differences however. The main difference is that the objective is secret key
\textit{generation}. As such, a secret-key generation protocol for $n$ channel
uses consists of a triple $\{\left\vert K\right\vert ,\varrho_{KA^{\prime n}%
},\mathcal{D}\}$, where $\left\vert K\right\vert $ is the size of the secret
key to be generated, $\varrho_{KA^{\prime n}}$ is the initial state, and
$\mathcal{D}_{B^{n}\rightarrow\hat{K}}$ is the decoder. Such a protocol begins
with Alice preparing the state $\varrho_{KA^{\prime n}}$, sending the
$A^{\prime n}$ systems through the tensor-power channel $(\mathcal{U}%
_{A^{\prime}\rightarrow BE}^{\mathcal{N}})^{\otimes n}$, and the receiver Bob
applies the decoder $\mathcal{D}_{B^{n}\rightarrow\hat{K}}$ to the
systems~$B^{n}$. The state at the end of the protocol is as follows:%
\begin{equation}
\varrho_{K\hat{K}E^{n}}\equiv(\mathcal{D}_{B^{n}\rightarrow\hat{K}}%
\circ(\mathcal{U}_{A^{\prime}\rightarrow BE}^{\mathcal{N}})^{\otimes
n})(\varrho_{KA^{\prime n}}).
\end{equation}

A triple $\left(  n,P,\varepsilon\right)  $ for secret-key generation consists
of the number $n$\ of channel uses, the rate $P$ of secret-key generation, and
the error $\varepsilon\in\left[  0,1\right]  $. Such a triple is achievable on
$\mathcal{N}_{A^{\prime}\rightarrow B}$\ for secret-key generation if there
exists a secret-key generation protocol $\{\left\vert K\right\vert
,\varrho_{KA^{\prime n}},\mathcal{D}\}$ and some state $\omega_{E^{n}}%
\in\mathcal{D}(\mathcal{H}_{E^{n}})$ such that $\frac{1}{n}\log\left\vert
K\right\vert \geq P$ and%
\begin{equation}
F(\overline{\Phi}_{K\hat{K}}\otimes\omega_{E^{n}},\varrho_{K\hat{K}E^{n}}%
)\geq1-\varepsilon.
\end{equation}
Thus, the goal of such a secret-key generation protocol is to generate an
$\varepsilon$-approximate tripartite secret-key state as defined in \eqref{eq:approx-tri-priv-state}.

Any secret-key generation protocol can be purified as discussed in the
previous section, such that the goal is to generate an $\varepsilon
$-approximate bipartite private state.

Finally, note that the encoder and decoder for an $(n,P,\varepsilon)$
secret-key transmission protocol can be used directly to realize an
$(n,P,\varepsilon)$ secret-key generation protocol, simply by setting the
initial state $\varrho_{KA^{\prime n}}$ for the secret-key generation protocol
equal to $\mathcal{E}_{K^{\prime}\rightarrow A^{\prime n}}(\overline{\Phi
}_{KK^{\prime}})$. The converse realization is not possible without the
assistance of another resource.

\subsection{Non-asymptotic achievable regions}

\label{sec:non-asym-region}\textbf{Unassisted protocols.} The non-asymptotic
private achievable region of a quantum channel is the union of all triples
$\left(  n,P,\varepsilon\right)  $ for secret-key transmission, and we are
interested in understanding two different boundaries of this region, defined
as%
\begin{align}
\hat{P}_{\mathcal{N}}(n,\varepsilon)  &  \equiv\max\left\{  P:\left(
n,P,\varepsilon\right)  \text{ is achievable for }\mathcal{N}\right\}  ,\\
\hat{\varepsilon}_{\mathcal{N}}(n,P)  &  \equiv\min\left\{  \varepsilon
:\left(  n,P,\varepsilon\right)  \text{ is achievable for }\mathcal{N}%
\right\}  .
\end{align}
In this paper, we investigate both of these boundaries. The first boundary
$\hat{P}_{\mathcal{N}}(n,\varepsilon)$ identifies how the rate can change as a
function of $n$ for fixed error $\varepsilon$, and second-order coding rates
can characterize this boundary for sufficiently large $n$. The second boundary
$\hat{\varepsilon}_{\mathcal{N}}(n,P)$ identifies how the error can change as
a function of $n$ for fixed rate $P$, and error exponents and strong converse
exponents characterize this boundary (in this paper we focus exclusively on
bounds on strong converse exponents).

\textbf{LOPC/LOCC-assisted protocols.} We can extend all of the above
definitions to the case in which Alice and Bob employ classical communication
to aid in their goal of establishing a secret key. We call such a protocol a
secret-key-agreement protocol. The most general such protocol in the
tripartite picture consists of rounds of local operations and public
communication (LOPC) interleaved between every channel use. By purifying every
operation and the classical data communicated in such a protocol, we can
describe such a protocol in the bipartite picture, which consists of local
operations and classical communication (LOCC) \cite{HHHO05,HHHO09}.
Figure~\ref{fig:LOCC-protocol}\ depicts such an LOCC-assisted protocol. The
output of such a protocol is compared via the fidelity with a private state,
and by \cite[Theorem~5]{HHHO09}, it meets the same fidelity requirement as the
original tripartite formulation. We define a triple $\left(  n,P,\varepsilon
\right)  $ to be achievable if there exists a secret-key agreement protocol of
the above form that generates an $\varepsilon$-approximate tripartite secret
key state in the tripartite picture or, equivalently, an $\varepsilon
$-approximate bipartite private state in the bipartite picture. We define the
achievable rate region as before as the union of all achievable rate triples,
and we are interested in the boundaries, defined as%
\begin{align}
\hat{P}_{\mathcal{N}}^{\leftrightarrow}(n,\varepsilon)  &  \equiv\max\left\{
P:\left(  n,P,\varepsilon\right)  \text{ is achievable for }\mathcal{N}\text{
using }{\leftrightarrow}\right\}  ,\\
\hat{\varepsilon}_{\mathcal{N}}^{\leftrightarrow}(n,P)  &  \equiv\min\left\{
\varepsilon:\left(  n,P,\varepsilon\right)  \text{ is achievable for
}\mathcal{N}\text{ using }{\leftrightarrow}\right\}  ,
\end{align}
where ${\leftrightarrow}$ indicates that the protocol is LOCC-assisted.

There is no difference in the performance of secret-key transmission and
secret-key generation protocols whenever classical communication is available
for free. To see this, consider that any $(n,P,\varepsilon)$ LOCC-assisted
secret-key transmission protocol realizes an $(n,P,\varepsilon)$ LOCC-assisted
secret-key generation protocol, for reasons similar to those that we discussed
previously. When classical communication is available for free, any
$(n,P,\varepsilon)$ LOCC-assisted secret-key generation protocol is an
instance of an $(n,P,\varepsilon)$ LOCC-assisted secret-key transmission
protocol. This follows by an application of the well known one-time pad
protocol. That is, suppose that the secret-key generation protocol produces an
$\varepsilon$-approximate key state of size $\left\vert K\right\vert $ shared
between Alice and Bob. Then if the third party had given the system
$K^{\prime}$ of $\overline{\Phi}_{KK^{\prime}}$ to Alice, she and Bob could
employ a one-time pad protocol, using their generated $\varepsilon
$-approximate key as a resource, in order for Alice to transmit the
$K^{\prime}$ system. The effect is to realize an $\varepsilon$-approximate key
state shared between the third party and Bob (due to the monotonicity of
fidelity with respect to any quantum channel; in this case, the one-time pad
protocol is a particular quantum channel). The resulting protocol is then an
$(n,P,\varepsilon)$ LOCC-assisted secret-key transmission
protocol.\begin{figure}[ptb]
\begin{center}
\includegraphics[
width=6.5673in
]{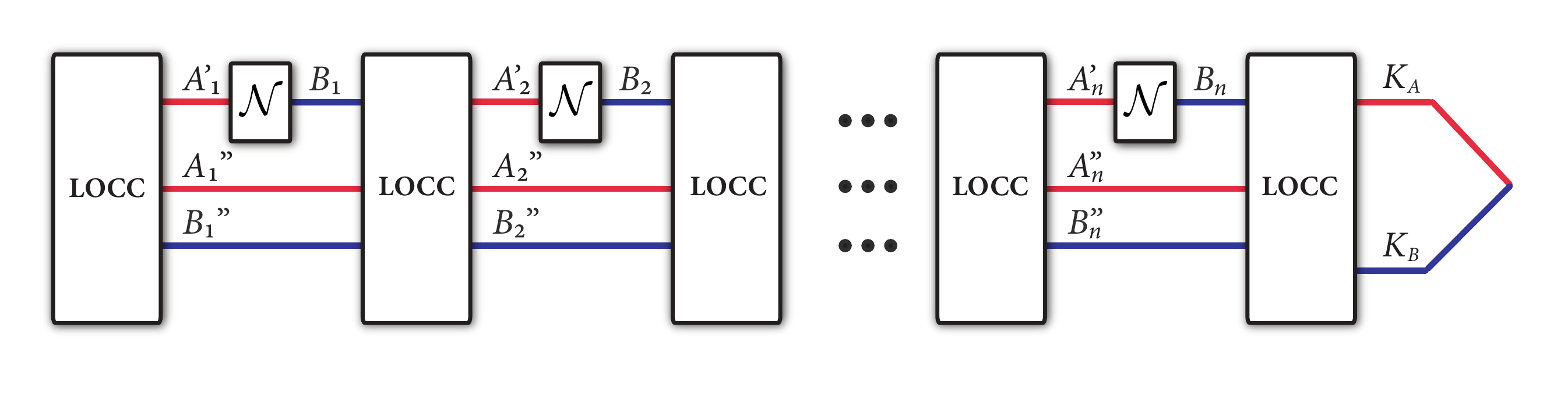}
\end{center}
\caption{An LOCC-assisted secret-key agreement protocol. For $i\in\left\{
1,\ldots,n\right\}  $, the $A_{i}^{\prime\prime}$ and $B_{i}^{\prime\prime}$
systems are scratch registers of arbitrary size that Alice and Bob can use in
between every channel use. The output of the protocol are systems $K_{A}$ and
$K_{B}$ which contain the key realized by the protocol.}%
\label{fig:LOCC-protocol}%
\end{figure}

\textbf{CPPP-assisted protocols.} Similarly, we can define a class of
protocols that consist of a single round of LOPC/LOCC, using the channel $n$
times, and a final round of LOPC/LOCC. We call these classical pre- and
post-processing (CPPP) protocols and define the corresponding boundaries as%
\begin{align}
\hat{P}_{\mathcal{N}}^{\operatorname{cppp}}(n,\varepsilon)  &  \equiv
\max\left\{  P:(n,P,\varepsilon)\text{ is achievable for }\mathcal{N}\text{
using cppp}\right\}  ,\\
\hat{\varepsilon}_{\mathcal{N}}^{\operatorname{cppp}}(n,P)  &  \equiv
\min\left\{  \varepsilon:(n,P,\varepsilon)\text{ is achievable for
}\mathcal{N}\text{ using cppp}\right\}  .
\end{align}
For similar reasons as discussed directly above, there is no difference in the
performance of secret-key transmission and secret-key generation protocols
that are CPPP-assisted.

From the definitions, we see that the following inequalities hold:%
\begin{align}
\hat{P}_{\mathcal{N}}(n,\varepsilon)  &  \leq\hat{P}_{\mathcal{N}%
}^{\operatorname{cppp}}(n,\varepsilon)\leq\hat{P}_{\mathcal{N}}%
^{\leftrightarrow}(n,\varepsilon),\label{eq:finite-P-hierarchy}\\
\hat{\varepsilon}_{\mathcal{N}}(n,P)  &  \geq\hat{\varepsilon}_{\mathcal{N}%
}^{\operatorname{cppp}}(n,P)\geq\hat{\varepsilon}_{\mathcal{N}}%
^{\leftrightarrow}(n,P),
\end{align}
because adding an extra resource can only help to increase the rate or reduce
the error.

\bigskip The structure of protocols involving adaptive LOCC\ can simplify
immensely for channels that are teleportation-simulable. This was realized in
\cite[Section~V]{BDSW96} for discrete-variable channels and extended in
\cite{NFC09,PLOB15} to continuous-variable bosonic channels (a general review
of the idea is available in \cite{PLOB15}). Since secret-key agreement
protocols in the tripartite picture can be recast as LOCC\ protocols in the
bipartite picture, this reduction by teleportation applies to them as well.
For such channels, a secret-key agreement protocol can be simulated by one in
which every channel use is replaced by Alice and Bob sharing the state
$\omega_{AB}$ from \eqref{eq:TP-covariant}\ and then performing the
teleportation protocol. This allows for Alice to take \textquotedblleft full
control\textquotedblright\ of the channel, delaying or advancing its use at
will. What this translates to for an adaptive secret-key agreement protocol is
that all of the adaptive rounds of LOCC\ can be delayed until the very end of
the protocol, such that the resulting protocol is a special kind of
CPPP-assisted protocol. Thus, for a teleportation-simulable channel
$\mathcal{N}_{A^{\prime}\rightarrow B}^{\operatorname{TP}}$ with associated
state $\omega_{AB}$, any secret-key agreement protocol can be simulated by one
which leads to a state of the following form:%
\begin{equation}
\Lambda_{A^{n}B^{n}\rightarrow K_{A}K_{B}}(\omega_{AB}^{\otimes n}),
\label{eq:SKA-TPC-protocols}%
\end{equation}
where $\Lambda_{A^{n}B^{n}\rightarrow K_{A}K_{B}}$ is an LOCC\ channel with
$K_{A}$ and $K_{B}$ the key systems generated for Alice and Bob, respectively.
For several teleportation-simulable channels of interest, it suffices to take
$\omega_{AB}=\mathcal{N}_{A^{\prime}\rightarrow B}^{\operatorname{TP}}%
(\Phi_{AA^{\prime}})$, in which Alice prepares a maximally entangled state and
sends one share of it through the channel. These observations were applied in
\cite[Section~V]{BDSW96} to the case of quantum communication protocols and in
\cite{PLOB15} to private communication after definitions of private capacity
were settled in \cite{ieee2005dev,1050633}.

\subsection{Relation to entanglement transmission}

\label{sec:q-comm}Entanglement transmission achieves the task of secret key
transmission. That is, an entanglement transmission code of rate $R$ and
fidelity $\geq1-\varepsilon$ can serve as a secret-key transmission code with
the same rate and fidelity. We formalize this relation now by defining
entanglement transmission codes, which are strongly related to quantum
communication codes \cite{BKN98}. An entanglement transmission code is a
triple $\left\{  \left\vert M\right\vert ,\mathcal{E},\mathcal{D}\right\}  $,
which consists of an encoding $\mathcal{E}_{M^{\prime}\rightarrow A^{\prime
n}}$ and a decoding $\mathcal{D}_{B^{n}\rightarrow\hat{M}}$. We say that a
triple $\left(  n,R,\varepsilon\right)  $ is achievable if $\frac{1}{n}%
\log\left\vert M\right\vert \geq R$ and%
\begin{equation}
F(\Phi_{M\hat{M}},(\mathcal{D}_{B^{n}\rightarrow\hat{M}}\circ\mathcal{N}%
_{A^{\prime}\rightarrow B}^{\otimes n}\circ\mathcal{E}_{M^{\prime}\rightarrow
A^{\prime n}})(\Phi_{MM^{\prime}}))\geq1-\varepsilon.
\end{equation}
Since $\Phi_{M\hat{M}}$ is a particular kind of private state with trivial
twisting unitary and trivial shield systems (called \textquotedblleft abelian
twisting\textquotedblright\ in \cite[Section~VI-A]{HHHO09}), such a code is
achievable for secret-key transmission in the bipartite picture of privacy.
Due to the relation between the bipartite picture of privacy and the
tripartite picture \cite[Theorem~5]{HHHO09}, the triple $\left(
n,R,\varepsilon\right)  $ is achievable in the tripartite picture as well. As
a consequence, we can always give lower bounds on secret-key transmission
rates in terms of entanglement transmission rates.

We can extend all of the various definitions given previously for secret-key
transmission to the case of entanglement transmission (see \cite{TBR15}\ for
details), and this gives us all of the quantities from the previous section,
with $P$ replaced by $Q$. The relation of the non-asymptotic $Q$ quantities to
privacy is summarized as follows:%
\begin{equation}
\hat{Q}_{\mathcal{N}}(n,\varepsilon)\leq\hat{P}_{\mathcal{N}}(n,\varepsilon
),\ \ \ \ \ \ \hat{Q}_{\mathcal{N}}^{\operatorname{cppp}}(n,\varepsilon
)\leq\hat{P}_{\mathcal{N}}^{\operatorname{cppp}}(n,\varepsilon
),\ \ \ \ \ \ \hat{Q}_{\mathcal{N}}^{\leftrightarrow}(n,\varepsilon)\leq
\hat{P}_{\mathcal{N}}^{\leftrightarrow}(n,\varepsilon).
\label{eq:finite-Q-to-finite-P-1}%
\end{equation}

\section{General (meta-converse) bounds}

\label{sec:meta-converse}

\subsection{Information measures for the general (meta-converse) bounds}

\label{sec:one-shot-hypo-quantities}The general meta-converse bound in
Section~\ref{sec:meta-converse-statements} is given in terms of the following
quantity, defined for $\rho\in\mathcal{D}(\mathcal{H})$, $\sigma\in
\mathcal{L}_{+}(\mathcal{H})$, and $\varepsilon\in\lbrack0,1]$ as%
\begin{equation}
D_{H}^{\varepsilon}(\rho\Vert\sigma)\equiv-\log\left[  \inf\{\operatorname{Tr}%
\{\Lambda\sigma\}:0\leq\Lambda\leq I\wedge\operatorname{Tr}\{\Lambda\rho
\}\geq1-\varepsilon\}\right]  .
\end{equation}
If $\sigma$ is a quantum state, $D_{H}^{\varepsilon}(\rho\Vert\sigma)$ has an
interpretation as the optimal exponent of the Type~II error in a hypothesis
test to distinguish $\rho$ from $\sigma$, given the constraint that the Type~I
error should not exceed $\varepsilon$ \cite{HP91}. This quantity was used
effectively in one-shot quantum information theory \cite{BD10,BD11,WR12} and
given the name \textquotedblleft hypothesis testing relative
entropy\textquotedblright\ in \cite{WR12}. The hypothesis testing relative
entropy is monotone non-increasing with respect to quantum channels as well
(see, e.g., \cite{WR12}). By inspecting the definition, one can see that the
following statement holds for $\varepsilon\in\lbrack0,1)$%
\begin{equation}
\rho=\sigma\ \ \ \ \ \ \Rightarrow\ \ \ \ \ \ D_{H}^{\varepsilon}(\rho
\Vert\sigma)=-\log\left(  1-\varepsilon\right)
.\label{eq:D_H-rho-sigma-equal}%
\end{equation}
That is, the conditions $\rho=\sigma$ and $\operatorname{Tr}\{\Lambda
\rho\}\geq1-\varepsilon$ imply that $\operatorname{Tr}\{\Lambda\sigma
\}\geq1-\varepsilon$ for all $\Lambda$, and we can take $\Lambda
=(1-\varepsilon)I$ to achieve this bound.

From this quantity follows an information measure \cite[Definition~4]{BD11}
closely related to the relative entropy of entanglement \cite{VP98}:%
\begin{equation}
E_{R}^{\varepsilon}(A;B)_{\rho}\equiv\inf_{\sigma_{AB}\in\mathcal{S}%
(A:B)}D_{H}^{\varepsilon}(\rho_{AB}\Vert\sigma_{AB}).
\label{eq:hypo-rel-ent-ent}%
\end{equation}
This quantity is an LOCC\ monotone, meaning that%
\begin{equation}
E_{R}^{\varepsilon}(A;B)_{\rho}\geq E_{R}^{\varepsilon}(A^{\prime};B^{\prime
})_{\omega},
\end{equation}
for $\omega_{A^{\prime}B^{\prime}}\equiv\Lambda_{AB\rightarrow A^{\prime
}B^{\prime}}(\rho_{AB})$, with $\Lambda_{AB\rightarrow A^{\prime}B^{\prime}}$
an LOCC\ channel. This follows because the underlying quantity $D_{H}%
^{\varepsilon}$ is monotone non-increasing with respect to quantum channels
and the set of separable states is closed under LOCC\ channels (see
\cite[Lemma~1]{BD11} for an explicit proof). More generally, $E_{R}%
^{\varepsilon}(A;B)_{\rho}$ is monotone non-increasing with respect to
separability-preserving channels for the same reasons. We can extend the
definition in \eqref{eq:hypo-rel-ent-ent} to be a function of a quantum
channel~$\mathcal{N}_{A^{\prime}\rightarrow B}$:%
\begin{equation}
E_{R}^{\varepsilon}(\mathcal{N})\equiv\sup_{|\psi\rangle_{AA^{\prime}}%
\in\mathcal{H}_{AA^{\prime}}}E_{R}^{\varepsilon}(A;B)_{\rho},
\end{equation}
where $\rho_{AB}\equiv\mathcal{N}_{A^{\prime}\rightarrow B}(\psi_{AA^{\prime}%
})$. Note that it suffices to perform the optimization with respect to pure
states due to the fact that $D_{H}^{\varepsilon}$ satisfies the data
processing inequality. The quantity $E_{R}^{\varepsilon}(\mathcal{N})$ (and
later related ones) will play an important role in establishing upper bounds
on the private transmission capabilities of a quantum channel.

\subsection{Privacy test}

Here we define a \textquotedblleft privacy test\textquotedblright\ as a method
for testing whether a given bipartite state is private. It forms an essential
component of the general meta-converse bound given in
Section~\ref{sec:meta-converse-statements}. In some sense, this notion is
already implicit in the developments of \cite[Eqns.~(282)--(284)]{HHHO09} and
is stated even more explicitly in \cite{PhysRevLett.100.110502,HHHLO08}. We
state the notion here concretely for completeness.

\begin{definition}
[Privacy test]\label{def:privacy-test}Let $\gamma_{ABA^{\prime}B^{\prime}}%
\in\mathcal{D}(\mathcal{H}_{ABA^{\prime}B^{\prime}})$ be a bipartite private
state as given in Definition~\ref{def:private-state-bi}. A privacy test
corresponding to $\gamma_{ABA^{\prime}B^{\prime}}$ (a $\gamma$-privacy
test)\ is defined as the following dichotomic measurement:%
\begin{equation}
\left\{  \Pi_{ABA^{\prime}B^{\prime}},I_{ABA^{\prime}B^{\prime}}%
-\Pi_{ABA^{\prime}B^{\prime}}\right\}  ,
\end{equation}
where $\Pi_{ABA^{\prime}B^{\prime}}\equiv U_{ABA^{\prime}B^{\prime}}\left(
\Phi_{AB}\otimes I_{A^{\prime}B^{\prime}}\right)  U_{ABA^{\prime}B^{\prime}%
}^{\dag}$ and $U_{ABA^{\prime}B^{\prime}}$ is the unitary specified in \eqref{eq:twisting-unitary}.
\end{definition}

If one has access to the systems $ABA^{\prime}B^{\prime}$ of a bipartite state
$\rho_{ABA^{\prime}B^{\prime}}$ and has a description of $\gamma_{ABA^{\prime
}B^{\prime}}$ satisfying \eqref{eq:approx-bi-priv-state}, then the $\gamma
$-privacy test decides whether $\rho_{ABA^{\prime}B^{\prime}}$ is a private
state with respect to $\gamma_{ABA^{\prime}B^{\prime}}$. The first outcome
corresponds to the decision \textquotedblleft yes, it is a $\gamma$-private
state,\textquotedblright\ and the second outcome corresponds to
\textquotedblleft no.\textquotedblright\ Physically, this test is just
untwisting the purported private state and projecting onto a maximally
entangled state. The following lemma states that the probability for an
$\varepsilon$-approximate bipartite private state to pass the $\gamma$-privacy
test is high:

\begin{lemma}
\label{lem:pass-privacy-test}Let $\varepsilon\in\left[  0,1\right]  $ and let
$\rho_{ABA^{\prime}B^{\prime}}\in\mathcal{D}(\mathcal{H}_{ABA^{\prime
}B^{\prime}})$ be an $\varepsilon$-approximate private state as given in
Definition~\ref{def:approx-priv}, with $\gamma_{ABA^{\prime}B^{\prime}}$
satisfying \eqref{eq:approx-bi-priv-state}. The probability for $\rho
_{ABA^{\prime}B^{\prime}}$ to pass the $\gamma$-privacy test is never smaller
than $1-\varepsilon$:%
\begin{equation}
\operatorname{Tr}\{\Pi_{ABA^{\prime}B^{\prime}}\rho_{ABA^{\prime}B^{\prime}%
}\}\geq1-\varepsilon, \label{eq:priv-state-pass-test}%
\end{equation}
where $\Pi_{ABA^{\prime}B^{\prime}}$ is defined as above.
\end{lemma}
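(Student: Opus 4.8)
The plan is to reduce the claim to the monotonicity of fidelity under the $\gamma$-privacy test measurement channel, combined with the fact that a genuine private state passes the test with certainty. First I would recall that $\gamma_{ABA'B'}$, being a bipartite private state of the form $\gamma_{ABA'B'} = U_{ABA'B'}(\Phi_{AB}\otimes\theta_{A'B'})U_{ABA'B'}^\dag$, satisfies $\operatorname{Tr}\{\Pi_{ABA'B'}\,\gamma_{ABA'B'}\} = 1$. This is a short direct computation: since $\Pi_{ABA'B'} = U_{ABA'B'}(\Phi_{AB}\otimes I_{A'B'})U_{ABA'B'}^\dag$, conjugating by $U^\dag$ turns the trace into $\operatorname{Tr}\{(\Phi_{AB}\otimes I_{A'B'})(\Phi_{AB}\otimes\theta_{A'B'})\} = \operatorname{Tr}\{\Phi_{AB}\Phi_{AB}\}\cdot\operatorname{Tr}\{\theta_{A'B'}\} = 1$, using $\Phi_{AB}^2 = \Phi_{AB}$ (a projector onto a one-dimensional space, up to normalization — more precisely $\Phi_{AB}$ here denotes the rank-one state $|\Phi\rangle\langle\Phi|_{AB}$, so $\Phi_{AB}\Phi_{AB}=\Phi_{AB}$).

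Next I would invoke the hypothesis that $\rho_{ABA'B'}$ is $\varepsilon$-approximate, i.e. $F(\rho_{ABA'B'},\gamma_{ABA'B'})\geq 1-\varepsilon$. Consider the binary measurement channel $\mathcal{P}(\cdot) \equiv \operatorname{Tr}\{\Pi_{ABA'B'}(\cdot)\}\,|0\rangle\langle 0| + \operatorname{Tr}\{(I-\Pi_{ABA'B'})(\cdot)\}\,|1\rangle\langle 1|$ associated with the $\gamma$-privacy test. Because fidelity is monotone non-decreasing under quantum channels, $F(\mathcal{P}(\rho_{ABA'B'}),\mathcal{P}(\gamma_{ABA'B'}))\geq F(\rho_{ABA'B'},\gamma_{ABA'B'})\geq 1-\varepsilon$. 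By the previous step $\mathcal{P}(\gamma_{ABA'B'}) = |0\rangle\langle 0|$, so the left-hand side equals $\langle 0|\mathcal{P}(\rho_{ABA'B'})|0\rangle = \operatorname{Tr}\{\Pi_{ABA'B'}\,\rho_{ABA'B'}\}$, since the fidelity of any state with a pure state $|0\rangle\langle 0|$ is just the overlap $\langle 0|(\cdot)|0\rangle$. Chaining these (in)equalities yields $\operatorname{Tr}\{\Pi_{ABA'B'}\,\rho_{ABA'B'}\}\geq 1-\varepsilon$, which is exactly \eqref{eq:priv-state-pass-test}.

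The only mild subtlety — and the step I would be most careful about — is the manipulation of $\Phi_{AB}$ as both a projector (when it appears inside $\Pi$) and a state (when it appears inside $\gamma$): one must track that $\Phi_{AB}$ in $\Pi_{ABA'B'}$ plays the role of the rank-one projector $|\Phi\rangle\langle\Phi|_{AB}$ and that $\operatorname{Tr}\{|\Phi\rangle\langle\Phi|\cdot|\Phi\rangle\langle\Phi|\} = 1$. An alternative, perhaps cleaner, route avoids the channel language entirely: note $\Pi_{ABA'B'}$ is a projector with $\gamma_{ABA'B'}\leq \Pi_{ABA'B'}$ (indeed $\Pi_{ABA'B'}\gamma_{ABA'B'} = \gamma_{ABA'B'}$), so $F(\rho_{ABA'B'},\gamma_{ABA'B'}) \leq F(\rho_{ABA'B'},\Pi_{ABA'B'})$; then use that for a projector $\Pi$ and state $\rho$ one has $F(\rho,\Pi) = \operatorname{Tr}\{\Pi\rho\}$ (which follows since $\|\sqrt{\rho}\sqrt{\Pi}\|_1^2 = \|\sqrt{\rho}\,\Pi\|_1^2 = \operatorname{Tr}\{\Pi\rho\Pi\} = \operatorname{Tr}\{\Pi\rho\}$ by cyclicity and $\Pi^2=\Pi$). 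Either way the proof is just a couple of lines; I expect no real obstacle beyond bookkeeping.
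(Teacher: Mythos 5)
Your main argument is correct and complete: computing $\operatorname{Tr}\{\Pi_{ABA'B'}\gamma_{ABA'B'}\}=1$ and then applying monotonicity of fidelity under the binary measurement channel $\mathcal{P}$ associated with the privacy test, together with the fact that fidelity with the pure state $|0\rangle\langle 0|$ reduces to the overlap, gives exactly \eqref{eq:priv-state-pass-test}. The paper's proof is also a fidelity-monotonicity argument, but it is organized differently: it conjugates by $U_{ABA'B'}^{\dag}$, recognizes $\langle\Phi|_{AB}\operatorname{Tr}_{A'B'}\{U^{\dag}\rho U\}|\Phi\rangle_{AB}$ as $F(\Phi_{AB},\operatorname{Tr}_{A'B'}\{U^{\dag}\rho U\})$, and then uses monotonicity under the partial trace over $A'B'$ plus unitary invariance to lower-bound this by $F(\gamma,\rho)$. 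Your version routes the data processing through the test measurement itself rather than through the partial trace; both are equally short, and yours makes explicit the intermediate fact that an exact private state passes the test with certainty.

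Your proposed ``alternative, cleaner route'' contains an error, however. You claim $F(\rho,\Pi)=\operatorname{Tr}\{\Pi\rho\}$ via $\|\sqrt{\rho}\,\Pi\|_1^2=\operatorname{Tr}\{\Pi\rho\Pi\}$, but that identity confuses the trace norm with the Hilbert--Schmidt norm: in fact $\|\sqrt{\rho}\,\Pi\|_1^2=\bigl(\operatorname{Tr}\sqrt{\Pi\rho\Pi}\bigr)^2\geq\operatorname{Tr}\{\Pi\rho\Pi\}$, with equality only when $\Pi\rho\Pi$ has rank one. Since the chain $F(\rho,\gamma)\leq F(\rho,\Pi)$ then only bounds $F(\rho,\gamma)$ by something that is itself $\geq\operatorname{Tr}\{\Pi\rho\}$, this route does not deliver the desired lower bound on $\operatorname{Tr}\{\Pi\rho\}$. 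The lemma stands on your first argument alone; I would simply drop the alternative.
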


\begin{proof}
One can see this bound explicitly by inspecting the following steps:
\begin{align}
\operatorname{Tr}\{\Pi_{ABA^{\prime}B^{\prime}}\rho_{ABA^{\prime}B^{\prime}%
}\}  &  =\langle\Phi|_{AB}\operatorname{Tr}_{A^{\prime}B^{\prime}%
}\{U_{ABA^{\prime}B^{\prime}}^{\dag}\rho_{ABA^{\prime}B^{\prime}%
}U_{ABA^{\prime}B^{\prime}}\}|\Phi\rangle_{AB}\\
&  =F(\Phi_{AB},\operatorname{Tr}_{A^{\prime}B^{\prime}}\{U_{ABA^{\prime
}B^{\prime}}^{\dag}\rho_{ABA^{\prime}B^{\prime}}U_{ABA^{\prime}B^{\prime}%
}\})\\
&  \geq F(\Phi_{AB}\otimes\theta_{A^{\prime}B^{\prime}},U_{ABA^{\prime
}B^{\prime}}^{\dag}\rho_{ABA^{\prime}B^{\prime}}U_{ABA^{\prime}B^{\prime}})\\
&  =F(U_{ABA^{\prime}B^{\prime}}(\Phi_{AB}\otimes\theta_{A^{\prime}B^{\prime}%
})U_{ABA^{\prime}B^{\prime}}^{\dag},\rho_{ABA^{\prime}B^{\prime}})\\
&  =F(\gamma_{ABA^{\prime}B^{\prime}},\rho_{ABA^{\prime}B^{\prime}}%
)\geq1-\varepsilon.
\end{align}
The steps follow as a consequence of several properties of the fidelity
recalled in Section~\ref{sec:prelim}.
\end{proof}

\medskip

For completeness, we think it is worthwhile to recall the brief proof of
\eqref{eq:priv-test-separable} below from \cite[Eqns.~(282)--(284)]{HHHO09}.

\begin{lemma}
[{\cite[Eqn.~(281)]{HHHO09}}]\label{lem:fail-privacy-test}For a separable
state $\sigma_{ABA^{\prime}B^{\prime}}\in\mathcal{S}(AA^{\prime}%
\!:\!BB^{\prime})$, the probability of passing any $\gamma$-privacy test is
never larger than $1/K$:%
\begin{equation}
\operatorname{Tr}\{\Pi_{ABA^{\prime}B^{\prime}}\sigma_{ABA^{\prime}B^{\prime}%
}\}\leq\frac{1}{K}\ , \label{eq:priv-test-separable}%
\end{equation}
where $K$ is the number of values that the secret key can take (i.e.,
$K=\dim(\mathcal{H}_{A})=\dim(\mathcal{H}_{B})$).
\end{lemma}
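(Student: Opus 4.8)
The plan is to reduce the claim to a statement about a single product pure state and then exploit the structure of the twisting unitary. First I would note that, by definition, a separable state $\sigma_{ABA'B'} \in \mathcal{S}(AA'\!:\!BB')$ is a convex combination of product pure states $|\phi\rangle\langle\phi|_{AA'} \otimes |\varphi\rangle\langle\varphi|_{BB'}$, and since $\operatorname{Tr}\{\Pi_{ABA'B'}\,\cdot\,\}$ is linear and $0 \leq \Pi_{ABA'B'} \leq I$, it suffices to prove the bound $\langle\phi|_{AA'}\langle\varphi|_{BB'}\,\Pi_{ABA'B'}\,|\phi\rangle_{AA'}|\varphi\rangle_{BB'} \leq 1/K$ for an arbitrary such product pure state.

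The key step is to use the explicit form $\Pi_{ABA'B'} = U_{ABA'B'}(\Phi_{AB} \otimes I_{A'B'})U_{ABA'B'}^{\dag}$ and absorb the twisting unitary into the vectors. Write $|\tilde\phi\rangle_{AA'}|\tilde\varphi\rangle_{BB'}$ (not necessarily a product, but this doesn't matter yet) — actually, the cleaner route is: since $U_{ABA'B'}$ is controlled on the $AB$ registers in the standard basis, I would set $|\xi\rangle_{ABA'B'} \equiv U_{ABA'B'}^{\dag}\big(|\phi\rangle_{AA'} \otimes |\varphi\rangle_{BB'}\big)$, so that the passing probability equals $\langle\xi|(\Phi_{AB}\otimes I_{A'B'})|\xi\rangle = \operatorname{Tr}\{\Phi_{AB}\,\xi_{AB}\} = \langle\Phi|_{AB}\,\xi_{AB}\,|\Phi\rangle_{AB}$, where $\xi_{AB} = \operatorname{Tr}_{A'B'}|\xi\rangle\langle\xi|$. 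Then I would bound $\langle\Phi|_{AB}\,\xi_{AB}\,|\Phi\rangle_{AB}$ by $\|\Phi_{AB}\|_\infty \cdot \operatorname{Tr}\{\xi_{AB}\}$ — no, that gives $1$, too weak. The right inequality uses that $\xi_{AB}$ is a reduced state of a vector obtained by applying a block-diagonal (controlled) unitary to a product vector across the $AA'\,|\,BB'$ cut: such states remain separable across $A\,|\,B$ after tracing out $A'B'$? No — the twisting can create entanglement. Instead, the standard argument (from \cite[Eqns.~(282)--(284)]{HHHO09}) is that $\langle\Phi|_{AB}\rho_{AB}|\Phi\rangle_{AB} \leq 1/K$ for \emph{any} state $\rho_{AB}$ that is PPT, or more simply separable, across $A\,|\,B$; so the real content is showing $\xi_{AB}$ is separable.

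So the actual key step: because $U_{ABA'B'} = \sum_{i,j}|i\rangle\langle i|_A \otimes |j\rangle\langle j|_B \otimes U^{ij}_{A'B'}$ acts as the identity on $A$ and $B$ up to a unitary on $A'B'$ \emph{controlled} by $A,B$, conjugating by it does not change the separability structure across the $AA'\,|\,BB'$ cut in a way that survives — rather, one checks directly that $\operatorname{Tr}_{A'B'}\{U^\dag(|\phi\rangle\langle\phi|_{AA'}\otimes|\varphi\rangle\langle\varphi|_{BB'})U\}$ is separable across $A\,|\,B$ by expanding $|\phi\rangle_{AA'} = \sum_i |i\rangle_A |\phi_i\rangle_{A'}$ and $|\varphi\rangle_{BB'} = \sum_j |j\rangle_B |\varphi_j\rangle_{B'}$, applying $U^\dag$, and tracing out $A'B'$ to get $\sum_{i,j} |i\rangle\langle i|_A \otimes |j\rangle\langle j|_B \cdot \langle \phi_i,\varphi_j | (U^{ij})^\dag \cdots |$ — the diagonal terms give a separable (indeed classically correlated) operator on $AB$, and then $\langle\Phi|_{AB}(\text{diagonal state})|\Phi\rangle_{AB} = \frac{1}{K}\sum_i (\text{something summing to} \leq 1) \leq 1/K$. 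I expect the main obstacle to be bookkeeping: carefully tracking the off-diagonal coherences in the $AB$ registers after untwisting and verifying they are annihilated (or controlled) appropriately so that only the diagonal survives against $|\Phi\rangle\langle\Phi|_{AB}$; once that is in hand, the factor $1/K$ drops out immediately from $\langle\Phi|_{AB}|i\rangle\langle i|_A\otimes|i\rangle\langle i|_B|\Phi\rangle_{AB} = 1/K$.
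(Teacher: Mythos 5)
Your reduction to pure product states by convexity is correct and matches the paper's first step, and you correctly locate the difficulty in the untwisted state $\xi_{AB}=\operatorname{Tr}_{A'B'}\{U^{\dag}(\phi_{AA'}\otimes\varphi_{BB'})U\}$. But the resolution you propose does not work: $\xi_{AB}$ is \emph{not} separable across $A\,|\,B$ in general, and the off-diagonal coherences are \emph{not} annihilated against $|\Phi\rangle\langle\Phi|_{AB}$. To see the first point, take trivial shield systems, so that $U^{ij}=e^{i\theta_{ij}}$ is just a phase; then $U^{\dag}(|\phi\rangle_{A}\otimes|\varphi\rangle_{B})=\sum_{i,j}\alpha_{i}\beta_{j}e^{-i\theta_{ij}}|i\rangle_{A}|j\rangle_{B}$, which for $\alpha_{i}=\beta_{j}=1/\sqrt{2}$ and $\theta_{00}=\theta_{01}=\theta_{10}=0$, $\theta_{11}=\pi$ is a maximally entangled state. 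So the route "show $\xi_{AB}$ is separable and invoke $\langle\Phi|\rho_{AB}|\Phi\rangle\leq1/K$ for separable $\rho_{AB}$" fails at its key step. For the second point, writing $|\xi_{i}\rangle_{A'B'}\equiv(U_{A'B'}^{ii})^{\dag}|\phi_{i}\rangle_{A'}|\varphi_{i}\rangle_{B'}$, the overlap with $\Phi_{AB}$ picks out exactly the terms $|i\rangle\langle k|_{A}\otimes|i\rangle\langle k|_{B}$, giving
\begin{equation}
\operatorname{Tr}\{\Pi_{ABA'B'}\,\phi_{AA'}\otimes\varphi_{BB'}\}=\frac{1}{K}\sum_{i,k=0}^{K-1}\alpha_{i}\beta_{i}\alpha_{k}^{\ast}\beta_{k}^{\ast}\langle\xi_{k}|\xi_{i}\rangle_{A'B'},
\end{equation}
in which the cross terms $i\neq k$ survive with weights $\langle\xi_{k}|\xi_{i}\rangle$ that are generally nonzero (the shield states $|\xi_{i}\rangle$ need not be orthogonal). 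This is not bookkeeping that makes them vanish; they must be bounded.

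The missing idea is the final estimate: with $\alpha_{i}=\sqrt{p_{i}}e^{i\theta_{i}}$ and $\beta_{i}=\sqrt{q_{i}}e^{i\eta_{i}}$, one applies the triangle inequality and $|\langle\xi_{k}|\xi_{i}\rangle|\leq1$ to get
\begin{equation}
\Bigl\vert\sum_{i,k}\alpha_{i}\beta_{i}\alpha_{k}^{\ast}\beta_{k}^{\ast}\langle\xi_{k}|\xi_{i}\rangle\Bigr\vert\leq\sum_{i,k}\sqrt{p_{i}q_{i}p_{k}q_{k}}=\Bigl[\sum_{i}\sqrt{p_{i}q_{i}}\Bigr]^{2}\leq1,
\end{equation}
the last step being the statement that the classical fidelity of two distributions is at most one. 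This is precisely how the paper (following \cite[Eqns.~(282)--(284)]{HHHO09}) concludes, and it is the step your proposal lacks. Your instinct that the lemma should reduce to the familiar fact that separable states have overlap at most $1/K$ with a maximally entangled state is right only in the special case of trivial shields; the general twisted case genuinely requires the Cauchy--Schwarz-type argument above, because untwisting can entangle $A$ with $B$.
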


\begin{proof}
The idea is to begin by establishing the bound for any pure product state
$|\phi\rangle_{AA^{\prime}}\otimes|\varphi\rangle_{BB^{\prime}}$. We can
expand these states with respect to the standard bases of $A$ and $B$ as
follows:%
\begin{equation}
|\phi\rangle_{AA^{\prime}}\otimes|\varphi\rangle_{BB^{\prime}}=\left[
\sum_{i=0}^{K-1}\alpha_{i}|i\rangle_{A}\otimes|\phi_{i}\rangle_{A^{\prime}%
}\right]  \otimes\left[  \sum_{j=0}^{K-1}\beta_{j}|j\rangle_{B}\otimes
|\varphi_{j}\rangle_{B^{\prime}}\right]  ,
\end{equation}
where $\sum_{i=0}^{K-1}\left\vert \alpha_{i}\right\vert ^{2}=\sum_{j=0}%
^{K-1}\left\vert \beta_{j}\right\vert ^{2}=1$. A few steps of calculation then
lead to the following equalities:%
\begin{align}
&  \!\!\!\!\!\!\operatorname{Tr}\{\Pi_{ABA^{\prime}B^{\prime}}|\phi
\rangle\langle\phi|_{AA^{\prime}}\otimes|\varphi\rangle\langle\varphi
|_{BB^{\prime}}\}\nonumber\\
&  =\operatorname{Tr}\{U_{ABA^{\prime}B^{\prime}}\left(  \Phi_{AB}\otimes
I_{A^{\prime}B^{\prime}}\right)  U_{ABA^{\prime}B^{\prime}}^{\dag}|\phi
\rangle\langle\phi|_{AA^{\prime}}\otimes|\varphi\rangle\langle\varphi
|_{BB^{\prime}}\}\\
&  =\frac{1}{K}\sum_{i,j=0}^{K-1}\alpha_{i}\beta_{i}\alpha_{j}^{\ast}\beta
_{j}^{\ast}\langle\xi_{j}|\xi_{i}\rangle_{A^{\prime}B^{\prime}},
\end{align}
where $|\xi_{i}\rangle_{A^{\prime}B^{\prime}}\equiv(U_{A^{\prime}B^{\prime}%
}^{ii})^{\dag}|\phi_{i}\rangle_{A^{\prime}}|\varphi_{i}\rangle_{B^{\prime}}$
is a quantum state. The desired bound is then equivalent to%
\begin{equation}
\sum_{i,j=0}^{K-1}\alpha_{i}\beta_{i}\alpha_{j}^{\ast}\beta_{j}^{\ast}%
\langle\xi_{j}|\xi_{i}\rangle_{A^{\prime}B^{\prime}}\leq1.
\end{equation}
Setting $\alpha_{i}=\sqrt{p_{i}}e^{i\theta_{i}}$ and $\beta_{i}=\sqrt{q_{i}%
}e^{i\eta_{i}}$, we find that%
\begin{align}
\sum_{i,j=0}^{K-1}\alpha_{i}\beta_{i}\alpha_{j}^{\ast}\beta_{j}^{\ast}%
\langle\xi_{j}|\xi_{i}\rangle_{A^{\prime}B^{\prime}}  &  =\left\vert
\sum_{i,j=0}^{K-1}\sqrt{p_{i}q_{i}p_{j}q_{j}}e^{i\left(  \theta_{i}+\eta
_{i}-\theta_{j}-\eta_{j}\right)  }\langle\xi_{j}|\xi_{i}\rangle_{A^{\prime
}B^{\prime}}\right\vert \\
&  \leq\sum_{i,j=0}^{K-1}\sqrt{p_{i}q_{i}p_{j}q_{j}}\left\vert \langle\xi
_{j}|\xi_{i}\rangle_{A^{\prime}B^{\prime}}\right\vert \leq\sum_{i,j=0}%
^{K-1}\sqrt{p_{i}q_{i}p_{j}q_{j}}\\
&  =\left[  \sum_{i=0}^{K-1}\sqrt{p_{i}q_{i}}\right]  ^{2}\leq1,
\end{align}
where the last inequality holds for all probability distributions (this is
just the statement that the classical fidelity cannot exceed one). The above
reasoning thus establishes \eqref{eq:priv-test-separable} for pure product
states, and the bound for general separable states follows because every such
state can be written as a convex combination of pure product states.
\end{proof}

\bigskip

The bounds in \eqref{eq:priv-state-pass-test} and
\eqref{eq:priv-test-separable} are the core ones underlying all of our
converse bounds in this paper.

\subsection{Statements of general (meta-converse) bounds}

\label{sec:meta-converse-statements}

We now establish some general bounds on the achievable regions discussed in
Section~\ref{sec:non-asym-region}.

\begin{theorem}
\label{prop:meta-conv-priv} Let $\mathcal{N}_{A^{\prime}\rightarrow B}$ be a
quantum channel. Then for any fixed $\varepsilon\in\left(  0,1\right)  $, the
achievable region with CPPP assistance satisfies%
\begin{equation}
\hat{P}_{\mathcal{N}}^{\operatorname{cppp}}(1,\varepsilon)\leq E_{R}%
^{\varepsilon}(\mathcal{N}).
\end{equation}

\end{theorem}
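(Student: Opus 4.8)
The plan is to combine the two privacy-test lemmas (Lemmas~\ref{lem:pass-privacy-test} and~\ref{lem:fail-privacy-test}) with the data-processing inequality for $D_H^\varepsilon$, exactly as in a meta-converse argument. First I would fix an arbitrary $(1,P,\varepsilon)$ CPPP-assisted secret-key transmission protocol and pass to its purified bipartite version (as described in Section~\ref{sec:priv-class-comm}): the output is a state $\rho_{K_AK_BS_AS_B}$ satisfying $F(\rho_{K_AK_BS_AS_B},\gamma_{K_AK_BS_AS_B})\geq 1-\varepsilon$ for some bipartite private state $\gamma$ with $K=|K|\geq 2^P$ key values. A single-round CPPP protocol has the structure of an LOCC preprocessing channel applied to $\psi_{AA'}$ (with $A$ held by Alice), then one use of the channel $\mathcal N_{A'\to B}$, then an LOCC postprocessing channel producing the $K_AK_BS_AS_B$ systems; equivalently, the whole thing is an LOCC channel acting on $\rho_{AB}\equiv\mathcal N_{A'\to B}(\psi_{AA'})$ for some pure input state $\psi_{AA'}$.

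Next I would bound $P$ using the $\gamma$-privacy test $\{\Pi,I-\Pi\}$ applied to $\rho_{K_AK_BS_AS_B}$. By Lemma~\ref{lem:pass-privacy-test}, $\operatorname{Tr}\{\Pi\rho_{K_AK_BS_AS_B}\}\geq 1-\varepsilon$, so $\Pi$ is a feasible operator in the hypothesis-testing optimization defining $D_H^\varepsilon(\rho_{K_AK_BS_AS_B}\Vert\sigma_{K_AK_BS_AS_B})$ for any separable $\sigma$. Taking $\sigma$ to be the output of the same LOCC channel applied to a separable input state — so that $\sigma_{K_AK_BS_AS_B}$ is separable across $K_AS_A : K_BS_B$ — and using Lemma~\ref{lem:fail-privacy-test} to get $\operatorname{Tr}\{\Pi\sigma_{K_AK_BS_AS_B}\}\leq 1/K$, I obtain
\begin{equation}
D_H^\varepsilon(\rho_{K_AK_BS_AS_B}\Vert\sigma_{K_AK_BS_AS_B})\geq -\log(1/K)=\log K\geq P.
\end{equation}
Then the data-processing inequality for $D_H^\varepsilon$ under the LOCC postprocessing channel $\Lambda$ gives $D_H^\varepsilon(\rho_{K_AK_BS_AS_B}\Vert\sigma_{K_AK_BS_AS_B})\leq D_H^\varepsilon(\rho_{AB}\Vert\sigma'_{AB})$, where $\sigma'_{AB}$ is whatever separable state on $AB$ was fed in (the LOCC preprocessing can be absorbed, and separability is preserved by LOCC). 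Optimizing over separable $\sigma'_{AB}$ and then over pure inputs $\psi_{AA'}$ yields $P\leq E_R^\varepsilon(A;B)_\rho\leq E_R^\varepsilon(\mathcal N)$, and taking the supremum over achievable $P$ gives the claim.

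The main obstacle I anticipate is the bookkeeping needed to verify that a single-round CPPP protocol really does reduce to "separable channel applied to $\mathcal N(\psi_{AA'})$" with the correct bipartite cut — in particular making sure the shield systems $S_A,S_B$ end up on the right sides and that the separable test state $\sigma$ can genuinely be chosen separable across $K_AS_A:K_BS_B$ (this uses that LOCC — indeed separability-preserving — channels map separable inputs to separable outputs, plus the fact that the input on Alice's side before the channel, together with a separable "dummy," stays separable). A secondary point is that one must run the argument with a fixed purifying reference so that the pure-state restriction in the definition of $E_R^\varepsilon(\mathcal N)$ is justified; this is handled by the remark in Section~\ref{sec:one-shot-hypo-quantities} that data processing lets one restrict to pure inputs. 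Everything else is a direct application of the two lemmas and monotonicity of $D_H^\varepsilon$.
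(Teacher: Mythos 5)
Your overall architecture --- the privacy test via Lemmas~\ref{lem:pass-privacy-test} and \ref{lem:fail-privacy-test}, feasibility of $\Pi$ in the optimization defining $D_H^{\varepsilon}$, data processing through the LOCC decoder, and closure of separability under LOCC --- is exactly the paper's, and the core inequality $P\leq D_H^{\varepsilon}(\rho_{K_AK_BS_AS_B}\Vert\sigma_{K_AK_BS_AS_B})$ is obtained correctly. The gap is in the reduction you assert at the outset: a single-round CPPP protocol is \emph{not} in general ``an LOCC channel acting on $\mathcal{N}_{A'\to B}(\psi_{AA'})$ for some pure input state $\psi_{AA'}$.'' The first LOCC round produces a state $\omega_{A_0A'B_0}\in\mathcal{S}(A_0A'\!:\!B_0)$ that is typically mixed and classically correlated with a register $B_0$ held by Bob; what the protocol really is, is an LOCC channel applied to $\mathcal{N}_{A'\to B}(\omega_{A_0A'B_0})$. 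Your proposed fix --- invoking the remark that data processing lets one restrict the optimization in $E_R^{\varepsilon}(\mathcal{N})$ to pure inputs --- does not apply here, because that purification argument requires the purifying reference to be placed on Alice's side, whereas $B_0$ sits with Bob and carries correlations with the channel input; you cannot purify $\omega_{A_0A'B_0}$ and hand the purifying system to Alice without changing the bipartite cut. Nor can the preprocessing be ``absorbed'' into the postprocessing, since it acts before the channel use and determines its input.

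The paper closes this gap with a step you do not have: quasi-convexity of $D_H^{\varepsilon}$. Writing $\omega_{A_0A'B_0}$ as a convex combination of pure product states $\phi^x_{A_0A'}\otimes\varphi^x_{B_0}$, quasi-convexity guarantees the existence of a single term $\psi_{A_0A'}\otimes\varphi_{B_0}$ with
$D_H^{\varepsilon}(\mathcal{N}(\psi_{A_0A'})\otimes\varphi_{B_0}\Vert\tau_{A_0B}\otimes\varphi_{B_0})\geq D_H^{\varepsilon}(\mathcal{N}(\omega_{A_0A'B_0})\Vert\tau_{A_0B}\otimes\varphi_{B_0})$,
and the invariance of $D_H^{\varepsilon}$ under tensoring the \emph{same} state $\varphi_{B_0}$ into both arguments then strips off Bob's ancilla, leaving $D_H^{\varepsilon}(\mathcal{N}(\psi_{A_0A'})\Vert\tau_{A_0B})$ with $\psi_{A_0A'}$ pure and entirely on Alice's side. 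Only at that point is the comparison with $E_R^{\varepsilon}(\mathcal{N})$ legitimate. With this quasi-convexity step inserted in place of your asserted reduction, the rest of your argument goes through as written.
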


\begin{proof}
Consider any CPPP-assisted protocol that achieves a rate $\hat{P}%
_{\mathcal{N}}^{\operatorname{cppp}}(1,\varepsilon)\equiv\hat{P}$, formulated
in the bipartite picture as discussed in the previous section. Let
$\omega_{A_{0}A^{\prime}B_{0}}$ denote the state generated by the first round
of LOCC. Note that $\omega_{A_{0}A^{\prime}B_{0}}$ is a separable
state:\ $\omega_{A_{0}A^{\prime}B_{0}}\in\mathcal{S}(A_{0}A^{\prime}%
\!:\!B_{0})$. The $A^{\prime}$ system of this state gets sent through the
channel $\mathcal{N}_{A^{\prime}\rightarrow B}$, leading to the state%
\begin{equation}
\theta_{A_{0}BB_{0}}\equiv\mathcal{N}_{A^{\prime}\rightarrow B}(\omega
_{A_{0}A^{\prime}B_{0}}).
\end{equation}
Alice and Bob apply an LOCC decoder $\mathcal{D}_{A_{0}BB_{0}\rightarrow
K_{A}K_{B}S_{A}S_{B}}$, which consists of a round of LOCC, leading to the
state%
\begin{equation}
\omega_{K_{A}K_{B}S_{A}S_{B}}\equiv\mathcal{D}_{A_{0}BB_{0}\rightarrow
K_{A}K_{B}S_{A}S_{B}}(\theta_{A_{0}BB_{0}}).
\end{equation}
By assumption we have that%
\begin{equation}
F(\gamma_{K_{A}K_{B}S_{A}S_{B}},\omega_{K_{A}K_{B}S_{A}S_{B}})\geq
1-\varepsilon,
\end{equation}
for some private state $\gamma_{K_{A}K_{B}S_{A}S_{B}}$. By
Lemma~\ref{lem:pass-privacy-test}, there is a projector $\Pi_{K_{A}K_{B}%
S_{A}S_{B}}$ corresponding to a $\gamma$-privacy test of the form in
Definition~\ref{def:privacy-test}, such that%
\begin{equation}
\operatorname{Tr}\{\Pi_{K_{A}K_{B}S_{A}S_{B}}\omega_{K_{A}K_{B}S_{A}S_{B}%
}\}\geq1-\varepsilon.
\end{equation}
From Lemma~\ref{lem:fail-privacy-test}, we have that%
\begin{equation}
\operatorname{Tr}\{\Pi_{K_{A}K_{B}S_{A}S_{B}}\sigma_{K_{A}K_{B}S_{A}S_{B}%
}\}\leq2^{-\hat{P}},
\end{equation}
for any separable state $\sigma_{K_{A}K_{B}S_{A}S_{B}}\in\mathcal{S}%
(K_{A}S_{A}\!:\!K_{B}S_{B})$. Thus, this test is feasible for $D_{H}%
^{\varepsilon}(\omega\Vert\sigma)$ and we find that%
\begin{equation}
\hat{P}\leq D_{H}^{\varepsilon}(\omega_{K_{A}K_{B}S_{A}S_{B}}\Vert
\sigma_{K_{A}K_{B}S_{A}S_{B}})\label{eq:critical-bound}%
\end{equation}
for any separable state $\sigma_{K_{A}K_{B}S_{A}S_{B}}\in\mathcal{S}%
(K_{A}S_{A}\!:\!K_{B}S_{B})$. Let $\tau_{A_{0}B}\in\mathcal{S}(A_{0}\!:\!B)$.
From the quasi-convexity of $D_{H}^{\varepsilon}$ we find that there exist
pure states $\psi_{A_{0}A^{\prime}}$ and $\varphi_{B_{0}}$ such that%
\begin{align}
D_{H}^{\varepsilon}(\mathcal{N}_{A^{\prime}\rightarrow B}(\psi_{A_{0}%
A^{\prime}})\Vert\tau_{A_{0}B}) &  =D_{H}^{\varepsilon}(\mathcal{N}%
_{A^{\prime}\rightarrow B}(\psi_{A_{0}A^{\prime}})\otimes\varphi_{B_{0}}%
\Vert\tau_{A_{0}B}\otimes\varphi_{B_{0}})\\
&  \geq D_{H}^{\varepsilon}(\mathcal{N}_{A^{\prime}\rightarrow B}%
(\omega_{A_{0}A^{\prime}B_{0}})\Vert\tau_{A_{0}B}\otimes\varphi_{B_{0}})\\
&  \geq D_{H}^{\varepsilon}(\omega_{K_{A}K_{B}S_{A}S_{B}}\Vert\sigma
_{K_{A}K_{B}S_{A}S_{B}})\\
&  \geq\hat{P},
\end{align}
where we take $\sigma_{K_{A}K_{B}S_{A}S_{B}}=\mathcal{D}_{A_{0}BB_{0}%
\rightarrow K_{A}K_{B}S_{A}S_{B}}(\tau_{A_{0}B}\otimes\varphi_{B_{0}})$. The
first equality follows because $D_{H}^{\varepsilon}$ is invariant with respect
to tensoring in the same state on an extra system (doing so does not change
the constrained Type~II\ error in a quantum hypothesis test). The first
inequality follows from quasi-convexity of $D_{H}^{\varepsilon}$. The second
inequality follows from the monotonicity of $D_{H}^{\varepsilon}$ with respect
to quantum channels. Since the decoder $\mathcal{D}_{A_{0}BB_{0}\rightarrow
K_{A}K_{B}S_{A}S_{B}}$ is an LOCC\ channel, we can conclude that
$\sigma_{K_{A}K_{B}S_{A}S_{B}}\in\mathcal{S}(K_{A}S_{A}\!:\!K_{B}S_{B})$. The
final inequality follows from \eqref{eq:critical-bound}. Since the inequality
holds for any choice $\tau_{A_{0}B}\in\mathcal{S}(A_{0}\!:\!B)$, we can
conclude that%
\begin{equation}
E_{R}^{\varepsilon}(\mathcal{N}_{A^{\prime}\rightarrow B}(\psi_{A_{0}%
A^{\prime}}))\geq\hat{P}.
\end{equation}
Optimizing over all input states $\psi_{A_{0}A^{\prime}}$, we can conclude the
statement of the proposition.
\end{proof}

The above theorem immediately leads to the following bound for any quantum
channel $\mathcal{N}$:%
\begin{equation}
\hat{P}_{\mathcal{N}}^{\operatorname{cppp}}(n,\varepsilon)\leq\frac{1}{n}%
E_{R}^{\varepsilon}(\mathcal{N}^{\otimes n}%
).\label{eq:bound-rel-ent-ent-any-channel-cppp}%
\end{equation}
For a teleportation-simulable channel $\mathcal{N}_{A^{\prime}\rightarrow
B}^{\operatorname{TP}}$ as defined in \eqref{eq:TP-covariant} with associated
state $\omega_{AB}$, combining \eqref{eq:SKA-TPC-protocols}\ and a proof
similar to that for Theorem~\ref{prop:meta-conv-priv}\ leads to the following
bound:%
\begin{equation}
\hat{P}_{\mathcal{N}}^{\leftrightarrow}(n,\varepsilon)\leq\frac{1}{n}%
E_{R}^{\varepsilon}(A^{n};B^{n})_{\omega^{\otimes n}}%
.\label{eq:two-way-meta-conv}%
\end{equation}
Note that this latter bound will lead us to a complete proof of the main
result presented in \cite{PLOB15},
%\footnote{We
%should point out that we have not been able to verify the proofs of the claims
%made in \cite{PLOB15}. In particular, there appears to be a gap in the
%justification of the weak converse bound in \cite[v1--v6]{PLOB15}. By
%inspecting Eq.~(S26) in \cite[v6]{PLOB15}, one finds that the upper bound
%given there depends on a parameter $d$, corresponding to the local dimensions of Alice and Bob's key and shield systems and which is assumed to be finite. However, this assumption is not justified as
%even for finite-dimensional key systems the dimension of the shield
%system could be unbounded. That is, the parameter $d$ could be equal to $+\infty$, due to unbounded shield systems being unaccounted for, and thus the upper bound given in \cite[v6]{PLOB15} is equal to $+\infty$.
%However, we note that the claims made in \cite[v1--v6]{PLOB15} follow directly
%from the results of the present paper.}
%As mentioned previously, there is a
%gap in the justification of the weak converse bound in \cite[v1--v6]{PLOB15}.
%By inspecting Eq.~(S26) in \cite[v6]{PLOB15}, one finds that the upper bound
%given there is equal to $+\infty$ because the parameter $d$ in that equation
%is equal to $+\infty$, due to unbounded shield systems being unaccounted for.}
in addition to other more refined statements. In contrast to the approach presented in \cite{PLOB15}, our result has no dependence on the dimension of the shield systems. This is in particular beneficial for the treatment of quantum Gaussian channels (discussed in Section~\ref{sec:gaussian}).

\section{Relative entropy of entanglement as a strong converse rate}

\label{sec:strong-converse}In this section, we prove two strong converse
theorems for private communication. Before doing so, we review various
definitions of private capacities and strong converse rates for private
communication, and we also review several R\'{e}nyi entropic measures that
play a role in establishing the strong converse theorems.

\subsection{Definitions of private capacities and strong converse rates}

A rate $r$ is achievable for secret-key transmission over the channel
$\mathcal{N}$ if there exists a sequence of secret-key transmission protocols
$\{(n,P_{n},\varepsilon_{n})\}_{n\in\mathbb{N}}$, such that%
\begin{equation}
\liminf_{n\rightarrow\infty}P_{n}\geq r\qquad\text{and}\qquad\lim
_{n\rightarrow\infty}\varepsilon_{n}=0.
\end{equation}
The private capacity of $\mathcal{N}$, denoted $P(\mathcal{N})$, is equal to
the supremum of all achievable rates \cite{ieee2005dev,1050633}. Equivalently,
we have that%
\begin{equation}
P(\mathcal{N})=\lim_{\varepsilon\rightarrow0}\liminf_{n\rightarrow\infty}%
\hat{P}_{\mathcal{N}}(n,\varepsilon).
\end{equation}
Analogously, the CPPP-assisted private capacity of $\mathcal{N}$, denoted
$P_{\operatorname{cppp}}(\mathcal{N})$, is equal to the supremum of all
CPPP-assisted achievable rates, and we have a similar definition for
$P_{\leftrightarrow}(\mathcal{N})$. Similarly,%
\begin{align}
P_{\operatorname{cppp}}(\mathcal{N})  &  =\lim_{\varepsilon\rightarrow
0}\liminf_{n\rightarrow\infty}\hat{P}_{\mathcal{N}}^{\operatorname{cppp}%
}(n,\varepsilon),\\
P_{\leftrightarrow}(\mathcal{N})  &  =\lim_{\varepsilon\rightarrow0}%
\liminf_{n\rightarrow\infty}\hat{P}_{\mathcal{N}}^{\leftrightarrow
}(n,\varepsilon).
\end{align}

On the other hand, $r$ is a strong converse rate for secret-key transmission
if for every sequence of secret-key transmission protocols $\{(n,P_{n}%
,\varepsilon_{n})\}_{n\in\mathbb{N}}$ as above, we have%
\begin{equation}
\liminf_{n\rightarrow\infty}P_{n}>r\qquad\implies\qquad\lim_{n\rightarrow
\infty}\varepsilon_{n}=1.
\end{equation}
The strong converse private capacity, denoted $P^{\dagger}(\mathcal{N})$, is
equal to the infimum of all strong converse rates. Analogously, the
CPPP-assisted strong converse private capacity of $\mathcal{N}$, denoted
$P_{\operatorname{cppp}}^{\dagger}(\mathcal{N})$, is equal to the infimum of
all CPPP-assisted strong converse rates (and similarly for $P_{\leftrightarrow
}^{\dag}(\mathcal{N})$). The following inequalities hold by definition:%
\begin{align}
P(\mathcal{N})  &  \leq P_{\operatorname{cppp}}(\mathcal{N})\leq
P_{\leftrightarrow}(\mathcal{N}),\\
P^{\dagger}(\mathcal{N})  &  \leq P_{\operatorname{cppp}}^{\dagger
}(\mathcal{N})\leq P_{\leftrightarrow}^{\dagger}(\mathcal{N}%
),\label{eq:strong-converse-ordering}\\
P(\mathcal{N})  &  \leq P^{\dagger}(\mathcal{N}),\\
P_{\operatorname{cppp}}(\mathcal{N})  &  \leq P_{\operatorname{cppp}}^{\dag
}(\mathcal{N}),\\
P_{\leftrightarrow}(\mathcal{N})  &  \leq P_{\leftrightarrow}^{\dagger
}(\mathcal{N}).
\end{align}
We argue in Appendix~\ref{app:priv-def-str-conv}\ how a strong converse rate
according to the above definitions of privacy is a strong converse rate
according to quantum generalizations of the definitions from \cite{HTW14} (see
Appendix~\ref{app:priv-def-str-conv} for specifics).

Finally, we say that a channel $\mathcal{N}$ satisfies the strong converse
property for private communication if $P(\mathcal{N})=P^{\dagger}%
(\mathcal{N})$. Similarly, we say that a channel $\mathcal{N}$ satisfies the
strong converse property for CPPP-assisted private communication if
$P_{\operatorname{cppp}}(\mathcal{N})=P_{\operatorname{cppp}}^{\dagger
}(\mathcal{N})$, and a similar statement if $P_{\leftrightarrow}%
(\mathcal{N})=P_{\leftrightarrow}^{\dagger}(\mathcal{N})$.

For the capacities, we find that%
\begin{align}
Q(\mathcal{N}) &  \leq P(\mathcal{N}),\ \ \ \ \ \ Q_{\operatorname{cppp}%
}(\mathcal{N})\leq P_{\operatorname{cppp}}(\mathcal{N}%
),\ \ \ \ \ \ Q_{\leftrightarrow}(\mathcal{N})\leq P_{\leftrightarrow
}(\mathcal{N}),\\
Q^{\dag}(\mathcal{N}) &  \leq P^{\dagger}(\mathcal{N}%
),\ \ \ \ \ \ Q_{\operatorname{cppp}}^{\dag}(\mathcal{N})\leq
P_{\operatorname{cppp}}^{\dagger}(\mathcal{N}),\ \ \ \ \ \ Q_{\leftrightarrow
}^{\dag}(\mathcal{N})\leq P_{\leftrightarrow}^{\dagger}(\mathcal{N}),\\
Q(\mathcal{N}) &  \leq Q^{\dagger}(\mathcal{N}%
),\ \ \ \ \ \ Q_{\operatorname{cppp}}(\mathcal{N})\leq Q_{\operatorname{cppp}%
}^{\dag}(\mathcal{N}),\ \ \ \ \ \ Q_{\leftrightarrow}(\mathcal{N})\leq
Q_{\leftrightarrow}^{\dagger}(\mathcal{N}),
\end{align}
where the quantum capacity $Q$ and the strong converse quantum capacity
$Q^{\dagger}$ are defined analogously to $P$ and $P^{\dagger}$. In summary,
any lower bound on a rate of entanglement transmission for a given scenario is
a lower bound for secret-key transmission in the same scenario.

\subsection{R\'{e}nyi relative entropies and related measures}

Let $\rho\in\mathcal{D}(\mathcal{H})$ and $\sigma\in\mathcal{L}_{+}%
(\mathcal{H})$. The quantum relative entropy $D(\rho\Vert\sigma)$ is defined
as \cite{U62}%
\begin{equation}
D(\rho\Vert\sigma)\equiv\left\{
\begin{array}
[c]{cc}%
\operatorname{Tr}\{\rho(\log\rho-\log\sigma)\} & \text{if }\operatorname{supp}%
(\rho)\subseteq\operatorname{supp}(\sigma)\\
+\infty & \text{else}%
\end{array}
\right.  .
\end{equation}
Throughout we take the logarithm (denoted by $\log$) to be base two unless
stated otherwise. The relative entropy $D(\rho\Vert\sigma)$ is monotone with
respect to quantum channels \cite{Lindblad1975,U77}, in the sense that%
\begin{equation}
D(\rho\Vert\sigma)\geq D(\mathcal{N}(\rho)\Vert\mathcal{N}(\sigma)),
\end{equation}
for $\mathcal{N}$ a quantum channel.

The sandwiched R\'{e}nyi relative entropy is defined for $\alpha\in
(0,1)\cup(1,\infty)$ as \cite{MDSFT13,WWY13}:%
\begin{equation}
\widetilde{D}_{\alpha}(\rho\Vert\sigma)\equiv\left\{
\begin{array}
[c]{cc}%
\frac{2\alpha}{\alpha-1}\log\left\Vert \sigma^{\left(  1-\alpha\right)
/2\alpha}\rho^{1/2}\right\Vert _{2\alpha} &
\begin{array}
[c]{c}%
\text{if }\left(  \alpha\in(0,1)\wedge\operatorname{supp}(\rho)\not \perp
\operatorname{supp}(\sigma)\right) \\
\vee\left(  \operatorname{supp}(\rho)\subseteq\operatorname{supp}%
(\sigma)\right)
\end{array}
\\
+\infty & \text{else}%
\end{array}
\right.  ,
\end{equation}
where $\Vert A \Vert_{p} \equiv[\operatorname{Tr} \{ |A|^{p}\}]^{1/p}$ is the
$p$-norm of an operator $A$ for $p \geq1$ and $|A| \equiv\sqrt{A^{\dag}A}$
(note that we define $\Vert A \Vert_{p}$ as above even for $p \in[0,1)$ when
it is not a norm). Both the quantum and sandwiched relative entropies are
additive in the following sense:%
\begin{align}
D(\rho_{0}\otimes\rho_{1}\Vert\sigma_{0}\otimes\sigma_{1})  &  =D(\rho
_{0}\Vert\sigma_{0})+D(\rho_{1}\Vert\sigma_{1}%
),\label{eq:additivity-rel-ent-1}\\
\widetilde{D}_{\alpha}(\rho_{0}\otimes\rho_{1}\Vert\sigma_{0}\otimes\sigma
_{1})  &  =\widetilde{D}_{\alpha}(\rho_{0}\Vert\sigma_{0})+\widetilde
{D}_{\alpha}(\rho_{1}\Vert\sigma_{1}), \label{eq:additivity-rel-ent-2}%
\end{align}
where $\rho_{i}\in\mathcal{D}(\mathcal{H}_{i})$ and $\sigma_{i}\in
\mathcal{L}_{+}(\mathcal{H}_{i})$ for $i\in\left\{  0,1\right\}  $. The
following limits hold \cite{MDSFT13,WWY13}%
\begin{equation}
\lim_{\alpha\rightarrow1}\widetilde{D}_{\alpha}(\rho\Vert\sigma)=D(\rho
\Vert\sigma),\ \ \ \ \ \ \lim_{\alpha\rightarrow\infty}\widetilde{D}_{\alpha
}(\rho\Vert\sigma)=D_{\max}(\rho\Vert\sigma),
\end{equation}
where $D_{\max}(\rho\Vert\sigma)\equiv2\log\left\Vert \sigma^{-1/2}\rho
^{1/2}\right\Vert _{\infty}$ \cite{D09}. The quantity $\widetilde{D}_{\alpha}$
is monotone with respect to quantum channels \cite{FL13}, in the sense that%
\begin{equation}
\widetilde{D}_{\alpha}(\rho\Vert\sigma)\geq\widetilde{D}_{\alpha}%
(\mathcal{N}(\rho)\Vert\mathcal{N}(\sigma)),
\end{equation}
for $\mathcal{N}$ a quantum channel and $\alpha\in\lbrack1/2,1)\cup(1,\infty
]$. The quantity $\widetilde{D}_{\alpha}$ is also monotone with respect to the
R\'{e}nyi parameter \cite{MDSFT13,B13monotone}:\ for $1<\alpha<\beta$, the
following inequality holds%
\begin{equation}
D(\rho\Vert\sigma)\leq\widetilde{D}_{\alpha}(\rho\Vert\sigma)\leq\widetilde
{D}_{\beta}(\rho\Vert\sigma). \label{eq:renyi-param-mono}%
\end{equation}

The following inequality relates $D_{H}^{\varepsilon}(\rho\Vert\sigma)$ to
$\widetilde{D}_{\alpha}(\rho\Vert\sigma)$ for $\alpha\in(1,\infty)$ and
$\varepsilon\in(0,1)$:%
\begin{equation}
D_{H}^{\varepsilon}(\rho\Vert\sigma)\leq\widetilde{D}_{\alpha}(\rho\Vert
\sigma)+\frac{\alpha}{\alpha-1}\log\left(  1/\left(  1-\varepsilon\right)
\right)  . \label{dh-to-dalpha-1}%
\end{equation}
This inequality is implicit in the literature \cite{HP91,N01,ON00} (an
explicit proof for the interested reader is available as \cite[Lemma~5]{CMW14}).

The relative entropy of entanglement of a state $\rho_{AB}\in\mathcal{D}%
(\mathcal{H}_{AB})$ is defined as \cite{VP98}%
\begin{equation}
E_{R}(A;B)_{\rho}\equiv\min_{\sigma_{AB}\in\mathcal{S}(A:B)}D(\rho_{AB}%
\Vert\sigma_{AB}), \label{eq:rel-ent-ent-state}%
\end{equation}
and we define a related R\'{e}nyi relative entropy of entanglement as well:%
\begin{equation}
\widetilde{E}_{R,\alpha}(A;B)_{\rho}\equiv\inf_{\sigma_{AB}\in\mathcal{S}%
(A:B)}\widetilde{D}_{\alpha}(\rho_{AB}\Vert\sigma_{AB}).
\end{equation}
Note that an alternative definition of R\'{e}nyi relative entropy of
entanglement has already been given in \cite{Sh14}, in terms of the R\'{e}nyi
relative entropy defined in \cite{P86}. The relative entropies of entanglement
are LOCC\ monotones and more generally separability-preserving monotones, as
defined and justified previously in Section~\ref{sec:one-shot-hypo-quantities}%
.\ The following subadditivity relations hold:%
\begin{align}
E_{R}(A_{0}A_{1};B_{0}B_{1})_{\rho^{0}\otimes\rho^{1}}  &  \leq E_{R}%
(A_{0};B_{0})_{\rho^{0}}+E_{R}(A_{1};B_{1})_{\rho^{1}},\\
\widetilde{E}_{R,\alpha}(A_{0}A_{1};B_{0}B_{1})_{\rho^{0}\otimes\rho^{1}}  &
\leq\widetilde{E}_{R,\alpha}(A_{0};B_{0})_{\rho^{0}}+\widetilde{E}_{R,\alpha
}(A_{1};B_{1})_{\rho^{1}}, \label{eq:sub-add-renyi-rel-ent-ent}%
\end{align}
where $\rho_{A_{i}B_{i}}^{i}\in\mathcal{D}(\mathcal{H}_{A_{i}B_{i}})$ for
$i\in\left\{  0,1\right\}  $. These follow from the additivity relations in
\eqref{eq:additivity-rel-ent-1}--\eqref{eq:additivity-rel-ent-2} and because
the separable states in $\mathcal{S}(A_{0}A_{1}\!:\!B_{0}B_{1})$ considered
for the infima on the left-hand side need not be a tensor product. We extend
these definitions to be functions of a quantum channel $\mathcal{N}%
_{A^{\prime}\rightarrow B}$, which we call the channel's relative entropy of
entanglement:%
\begin{align}
E_{R}(\mathcal{N})  &  \equiv\sup_{|\psi\rangle_{AA^{\prime}}\in
\mathcal{H}_{AA^{\prime}}}E_{R}(A;B)_{\rho},\label{eq:rel-ent-ent-ch}\\
\widetilde{E}_{R,\alpha}(\mathcal{N})  &  \equiv\sup_{|\psi\rangle
_{AA^{\prime}}\in\mathcal{H}_{AA^{\prime}}}\widetilde{E}_{R,\alpha}%
(A;B)_{\rho},
\end{align}
where $\rho_{AB}\equiv\mathcal{N}_{A^{\prime}\rightarrow B}(\psi_{AA^{\prime}%
})$.

By a standard continuity argument (see, e.g.,
\cite{MH11,TWW14,CMW14,MO14,DW15}), the following limits hold%
\begin{align}
\lim_{\alpha\rightarrow1}\widetilde{E}_{R,\alpha}(A;B)_{\rho}  &
=E_{R}(A;B)_{\rho},\label{eq:renyi-rel-ent-ent-limit-1}\\
\lim_{\alpha\rightarrow1}\widetilde{E}_{R,\alpha}(\mathcal{N})  &
=E_{R}(\mathcal{N}). \label{eq:renyi-rel-ent-ent-limit-1-channel}%
\end{align}
Monotonicity of these quantities with respect to the R\'{e}nyi parameter
follows from \eqref{eq:renyi-param-mono}: for $1<\alpha<\beta$, the following
inequalities hold%
\begin{align}
E_{R}(A;B)_{\rho}  &  \leq\widetilde{E}_{R,\alpha}(A;B)_{\rho}\leq
\widetilde{E}_{R,\beta}(A;B)_{\rho},\label{eq:renyi-rel-ent-ent-mono}\\
E_{R}(\mathcal{N})  &  \leq\widetilde{E}_{R,\alpha}(\mathcal{N})\leq
\widetilde{E}_{R,\beta}(\mathcal{N}).
\label{eq:renyi-rel-ent-ent-mono-channel}%
\end{align}
The inequality in \eqref{dh-to-dalpha-1}\ allows us to relate $\widetilde
{E}_{R,\alpha}$ to $E_{R}^{\varepsilon}$ for $\alpha\in(1,\infty)$ and
$\varepsilon\in\left(  0,1\right)  $:%
\begin{align}
E_{R}^{\varepsilon}(A;B)_{\rho}  &  \leq\widetilde{E}_{R,\alpha}(A;B)_{\rho
}+\frac{\alpha}{\alpha-1}\log\left(  1/\left(  1-\varepsilon\right)  \right)
,\label{eq:rel-ent-hyp-to-rel-ent-ent-alpha-state}\\
E_{R}^{\varepsilon}(\mathcal{N})  &  \leq\widetilde{E}_{R,\alpha}%
(\mathcal{N})+\frac{\alpha}{\alpha-1}\log\left(  1/\left(  1-\varepsilon
\right)  \right)  . \label{eq:rel-ent-hyp-to-rel-ent-ent-alpha-channel}%
\end{align}
These latter two inequalities are helpful for obtaining the strong converse
theorems given below.

\subsection{Statements of strong converse results}

We begin by establishing the following strong converse theorem for
teleportation-simulable channels:

\begin{theorem}
\label{thm:tp-cov-str-conv}If a channel $\mathcal{N}^{\operatorname{TP}}$ is
teleportation-simulable as defined in \eqref{eq:TP-covariant} (with associated
state $\omega_{AB}$), then $E_{R}(A;B)_{\omega}$ is a strong converse rate for
two-way assisted private communication:%
\begin{equation}
P_{\leftrightarrow}^{\dagger}(\mathcal{N}^{\operatorname{TP}})\leq
E_{R}(A;B)_{\omega}.
\end{equation}

\end{theorem}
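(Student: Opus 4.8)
The plan is to upgrade the two-way meta-converse \eqref{eq:two-way-meta-conv} to a strong converse by replacing the hypothesis-testing quantity $E_R^{\varepsilon}$ with a R\'enyi relative entropy of entanglement whose dependence on the rate and on the error factorizes. Starting from $\hat{P}_{\mathcal{N}}^{\leftrightarrow}(n,\varepsilon) \leq \tfrac{1}{n}\,E_R^{\varepsilon}(A^n;B^n)_{\omega^{\otimes n}}$, I would first apply \eqref{eq:rel-ent-hyp-to-rel-ent-ent-alpha-state} with $\rho=\omega^{\otimes n}$ and the cut $A^n\!:\!B^n$ to obtain $E_R^{\varepsilon}(A^n;B^n)_{\omega^{\otimes n}} \leq \widetilde{E}_{R,\alpha}(A^n;B^n)_{\omega^{\otimes n}} + \tfrac{\alpha}{\alpha-1}\log\!\left(\tfrac{1}{1-\varepsilon}\right)$ for any $\alpha>1$, and then use the subadditivity relation \eqref{eq:sub-add-renyi-rel-ent-ent} to bound $\widetilde{E}_{R,\alpha}(A^n;B^n)_{\omega^{\otimes n}} \leq n\,\widetilde{E}_{R,\alpha}(A;B)_{\omega}$. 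Combining these with the $\tfrac{1}{n}$ prefactor already present in \eqref{eq:two-way-meta-conv} produces the single-letter non-asymptotic bound
\[
\hat{P}_{\mathcal{N}}^{\leftrightarrow}(n,\varepsilon) \;\leq\; \widetilde{E}_{R,\alpha}(A;B)_{\omega} + \frac{1}{n}\,\frac{\alpha}{\alpha-1}\,\log\!\left(\frac{1}{1-\varepsilon}\right),
\]
valid for every $n\in\mathbb{N}$, every $\varepsilon\in(0,1)$, and every $\alpha>1$.

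With this in hand I would carry out the usual strong-converse argument. Fix any $r > E_R(A;B)_{\omega}$; it suffices to show that $r$ is a strong converse rate, since taking the infimum over all such $r$ then gives $P_{\leftrightarrow}^{\dagger}(\mathcal{N}^{\operatorname{TP}}) \leq E_R(A;B)_{\omega}$. Let $\{(n,P_n,\varepsilon_n)\}_{n}$ be any sequence of LOCC-assisted secret-key-agreement protocols with $\liminf_n P_n > r$; then $P_n \leq \hat{P}_{\mathcal{N}}^{\leftrightarrow}(n,\varepsilon_n)$ for all $n$ and $P_n > r$ for all sufficiently large $n$. Since $\widetilde{E}_{R,\alpha}(A;B)_{\omega}$ is nondecreasing in $\alpha$ and converges to $E_R(A;B)_{\omega} < r$ as $\alpha\downarrow 1$ by \eqref{eq:renyi-rel-ent-ent-limit-1}, I may fix one value $\alpha>1$ for which $\delta \equiv r - \widetilde{E}_{R,\alpha}(A;B)_{\omega} > 0$. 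Inserting this into the displayed bound and rearranging gives, for all large $n$,
\[
1 - \varepsilon_n \;\leq\; 2^{-\frac{n(\alpha-1)}{\alpha}\left(P_n - \widetilde{E}_{R,\alpha}(A;B)_{\omega}\right)} \;\leq\; 2^{-\frac{n(\alpha-1)}{\alpha}\,\delta},
\]
so that $\varepsilon_n \to 1$, indeed exponentially fast. Hence $r$ is a strong converse rate, which is what was needed.

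Most of the substance has already been established — the two-way meta-converse of Section~\ref{sec:meta-converse}, the inequality \eqref{eq:rel-ent-hyp-to-rel-ent-ent-alpha-state} relating $E_R^{\varepsilon}$ to $\widetilde{E}_{R,\alpha}$, subadditivity of $\widetilde{E}_{R,\alpha}$, and the limit $\widetilde{E}_{R,\alpha}\to E_R$ — so the remaining work is essentially assembly. The one delicate point is the order of limits: the R\'enyi parameter $\alpha$ must be chosen close to (but fixed above) $1$ before $n$ is sent to infinity, which is legitimate precisely because the error-dependent correction carries the prefactor $1/n$ and therefore vanishes for any fixed $\alpha$. A minor bookkeeping point worth stating explicitly is that \eqref{eq:rel-ent-hyp-to-rel-ent-ent-alpha-state} and \eqref{eq:sub-add-renyi-rel-ent-ent} are applied with $A^n$ and $B^n$ treated as single composite systems on the two sides of the bipartition, matching the form in which \eqref{eq:two-way-meta-conv} is stated.
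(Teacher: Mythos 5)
Your proposal is correct and follows essentially the same route as the paper's proof: both combine the two-way meta-converse \eqref{eq:two-way-meta-conv} with the bound \eqref{eq:rel-ent-hyp-to-rel-ent-ent-alpha-state}, the subadditivity relation \eqref{eq:sub-add-renyi-rel-ent-ent}, and the monotonicity-plus-limit properties of $\widetilde{E}_{R,\alpha}$ to choose a fixed $\alpha>1$ with $\widetilde{E}_{R,\alpha}(A;B)_{\omega}$ below the rate, forcing the error to one exponentially. The only difference is cosmetic: you rearrange into a rate bound before extracting the exponential decay, whereas the paper writes the fidelity bound directly.
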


\begin{proof}
Let $\alpha\in(1,\infty)$. A consequence of \eqref{eq:two-way-meta-conv} and a
rewriting of \eqref{eq:rel-ent-hyp-to-rel-ent-ent-alpha-state} is the
following bound on the optimal fidelity of any two-way assisted protocol for a
teleportation-simulable channel$~\mathcal{N}^{\operatorname{TP}}$:%
\begin{align}
1-\hat{\varepsilon}_{\mathcal{N}}^{\leftrightarrow}(n,P^{\leftrightarrow})  &
\leq2^{-n\left(  \frac{\alpha-1}{\alpha}\right)  \left(  P^{\leftrightarrow
}-\frac{1}{n}\widetilde{E}_{R,\alpha}(A^{n};B^{n})_{\omega^{\otimes n}%
}\right)  }\\
&  \leq2^{-n\left(  \frac{\alpha-1}{\alpha}\right)  \left(  P^{\leftrightarrow
}-\widetilde{E}_{R,\alpha}(A;B)_{\omega}\right)  },
\end{align}
where the second inequality follows from \eqref{eq:sub-add-renyi-rel-ent-ent},
i.e, the subadditivity of $\widetilde{E}_{R,\alpha}$ with respect to
tensor-product states. Thus, if $P^{\leftrightarrow}>E_{R}(A;B)_{\omega}$,
then by \eqref{eq:renyi-rel-ent-ent-mono} and
\eqref{eq:renyi-rel-ent-ent-limit-1}, there exists $\alpha>1$ such that
$P^{\leftrightarrow}>\widetilde{E}_{R,\alpha}(A;B)_{\omega}$ and so the
optimal error $\hat{\varepsilon}_{\mathcal{N}}^{\leftrightarrow}%
(n,P^{\leftrightarrow})$ increases exponentially fast to one with exponent
$\left(  \frac{\alpha-1}{\alpha}\right)  (P^{\leftrightarrow}-\widetilde
{E}_{R,\alpha}(A;B)_{\omega})$.
\end{proof}

Next we establish that a channel's relative entropy of entanglement from
\eqref{eq:rel-ent-ent-ch} is a strong converse rate for CPPP-assisted private
communication (and thus for unassisted private communication as well by \eqref{eq:strong-converse-ordering}).

\begin{theorem}
\label{thm:rel-ent-ent-strong-conv-cppp}For any channel $\mathcal{N}$, its
relative entropy of entanglement is a strong converse rate for CPPP-assisted
private communication:%
\begin{equation}
P_{\operatorname{cppp}}^{\dagger}(\mathcal{N})\leq E_{R}(\mathcal{N}).
\end{equation}

\end{theorem}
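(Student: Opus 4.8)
The plan is to mirror the proof of Theorem~\ref{thm:tp-cov-str-conv}, but using the single-letter CPPP bound \eqref{eq:bound-rel-ent-ent-any-channel-cppp} in place of the teleportation-simulation bound \eqref{eq:two-way-meta-conv}. First I would fix $\alpha\in(1,\infty)$ and combine \eqref{eq:bound-rel-ent-ent-any-channel-cppp} with the channel version of \eqref{eq:rel-ent-hyp-to-rel-ent-ent-alpha-channel} to obtain, for any CPPP-assisted $(n,P,\varepsilon)$ protocol,
\begin{equation}
\hat{P}_{\mathcal{N}}^{\operatorname{cppp}}(n,\varepsilon)\leq\frac{1}{n}E_{R}^{\varepsilon}(\mathcal{N}^{\otimes n})\leq\frac{1}{n}\widetilde{E}_{R,\alpha}(\mathcal{N}^{\otimes n})+\frac{1}{n}\cdot\frac{\alpha}{\alpha-1}\log\!\left(\frac{1}{1-\varepsilon}\right).
\end{equation}
Rearranging this for the optimal error at fixed rate $P$ yields a bound of the form
\begin{equation}
1-\hat{\varepsilon}_{\mathcal{N}}^{\operatorname{cppp}}(n,P)\leq 2^{-\left(\frac{\alpha-1}{\alpha}\right)\left(nP-\widetilde{E}_{R,\alpha}(\mathcal{N}^{\otimes n})\right)},
\end{equation}
exactly as in the teleportation-simulable case.

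The one genuinely new ingredient — and the step I expect to be the main obstacle — is controlling $\widetilde{E}_{R,\alpha}(\mathcal{N}^{\otimes n})$ in terms of $n$ and $\widetilde{E}_{R,\alpha}(\mathcal{N})$. In Theorem~\ref{thm:tp-cov-str-conv} this was trivial because the relevant state was a tensor power $\omega_{AB}^{\otimes n}$ and one simply invokes subadditivity \eqref{eq:sub-add-renyi-rel-ent-ent}. Here, however, the optimization defining $\widetilde{E}_{R,\alpha}(\mathcal{N}^{\otimes n})$ ranges over \emph{arbitrary} (possibly entangled across copies) pure input states $|\psi\rangle_{AA'^{n}}$, so there is no immediate tensor-product structure. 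The fix is the standard ``single-letterization'' argument: I would show $\widetilde{E}_{R,\alpha}(\mathcal{N}^{\otimes n})\leq n\,\widetilde{E}_{R,\alpha}(\mathcal{N})$ by taking, for a given input $|\psi\rangle_{AA'^{n}}$, the product separable state $\sigma=\bigotimes_{i=1}^{n}\sigma^{i}_{A_i B_i}$ where each $\sigma^{i}$ is (nearly) optimal for the $i$-th marginal $\mathcal{N}(\psi^{(i)})$ of the output; then additivity of $\widetilde{D}_{\alpha}$ under tensor products \eqref{eq:additivity-rel-ent-2} together with monotonicity of $\widetilde{D}_{\alpha}$ under the partial-trace channels gives $\widetilde{D}_{\alpha}(\mathcal{N}^{\otimes n}(\psi)\Vert\sigma)\leq\sum_i\widetilde{D}_{\alpha}(\mathcal{N}(\psi^{(i)})\Vert\sigma^{i})\leq n\,\widetilde{E}_{R,\alpha}(\mathcal{N})$, and one takes the supremum over $|\psi\rangle$. (This is precisely the subadditivity of the channel quantity; I would either cite it as folklore via \eqref{eq:sub-add-renyi-rel-ent-ent} applied at the channel level, or spell out the two lines above.)

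With $\widetilde{E}_{R,\alpha}(\mathcal{N}^{\otimes n})\leq n\,\widetilde{E}_{R,\alpha}(\mathcal{N})$ in hand, the bound becomes
\begin{equation}
1-\hat{\varepsilon}_{\mathcal{N}}^{\operatorname{cppp}}(n,P)\leq 2^{-n\left(\frac{\alpha-1}{\alpha}\right)\left(P-\widetilde{E}_{R,\alpha}(\mathcal{N})\right)}.
\end{equation}
To finish, suppose $\{(n,P_n,\varepsilon_n)\}$ is a sequence of CPPP-assisted protocols with $\liminf_n P_n > E_R(\mathcal{N})$. By the limit \eqref{eq:renyi-rel-ent-ent-limit-1-channel} and monotonicity in the Rényi parameter \eqref{eq:renyi-rel-ent-ent-mono-channel}, there exists $\alpha>1$ with $\widetilde{E}_{R,\alpha}(\mathcal{N}) < \liminf_n P_n$, hence a $\delta>0$ and $N$ so that $P_n - \widetilde{E}_{R,\alpha}(\mathcal{N}) \geq \delta$ for all $n\geq N$. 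Then $1-\varepsilon_n \leq 2^{-n(\frac{\alpha-1}{\alpha})\delta}\to 0$, i.e.\ $\varepsilon_n\to 1$ exponentially fast. Therefore any rate exceeding $E_R(\mathcal{N})$ is a strong converse rate, so $P_{\operatorname{cppp}}^{\dagger}(\mathcal{N})\leq E_R(\mathcal{N})$, which is the claim. One caveat I would double-check: I am implicitly using $\hat{\varepsilon}_{\mathcal{N}}^{\operatorname{cppp}}(n,P)\leq\varepsilon_n$ whenever the protocol has rate $P_n\geq P$; this is immediate from the definitions (a higher-rate protocol with error $\varepsilon_n$ witnesses achievability of $(n,P,\varepsilon_n)$ for any $P\leq P_n$), but worth stating to keep the chain of inequalities airtight.
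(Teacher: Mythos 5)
Your overall architecture matches the paper's: combine the CPPP meta-converse \eqref{eq:bound-rel-ent-ent-any-channel-cppp} with \eqref{eq:rel-ent-hyp-to-rel-ent-ent-alpha-channel}, control $\widetilde{E}_{R,\alpha}(\mathcal{N}^{\otimes n})$, and then let $\alpha\downarrow 1$. You also correctly identified the single-letterization of $\widetilde{E}_{R,\alpha}(\mathcal{N}^{\otimes n})$ as the crux. However, your proposed fix for that step is flawed: the inequality $\widetilde{D}_{\alpha}(\mathcal{N}^{\otimes n}(\psi)\Vert\bigotimes_i\sigma^{i})\leq\sum_i\widetilde{D}_{\alpha}(\mathcal{N}(\psi^{(i)})\Vert\sigma^{i})$ goes in the wrong direction. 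Monotonicity of $\widetilde{D}_{\alpha}$ under the partial-trace channel gives $\widetilde{D}_{\alpha}(\rho_{12}\Vert\sigma_1\otimes\sigma_2)\geq\widetilde{D}_{\alpha}(\rho_1\Vert\sigma_1)$, i.e., the joint divergence against a product state is \emph{super}additive over marginals when $\rho$ is correlated across the copies (for $\alpha=1$ this is exactly $D(\rho_{12}\Vert\sigma_1\otimes\sigma_2)=D(\rho_1\Vert\sigma_1)+D(\rho_2\Vert\sigma_2)+I(1;2)_\rho$). The additivity relation \eqref{eq:additivity-rel-ent-2} only applies when $\rho$ itself is a tensor product, which is precisely what fails here because the optimal input to $\mathcal{N}^{\otimes n}$ may be entangled across channel uses. (There is also a structural issue: for the bipartition $A\!:\!B^n$ with a single entangled reference $A$, the candidate separable state cannot even be written as $\bigotimes_i\sigma^i_{A_iB_i}$.) Whether $\widetilde{E}_{R,\alpha}(\mathcal{N}^{\otimes n})\leq n\widetilde{E}_{R,\alpha}(\mathcal{N})$ holds exactly is not known in general; it is essentially an additivity question for the channel's (R\'enyi) relative entropy of entanglement.

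The paper avoids this obstacle via Theorem~\ref{thm:weak-subadditivity}: using the permutation covariance of $\mathcal{N}^{\otimes n}$, Proposition~\ref{prop:covariance} (group averaging does not decrease the entanglement measures), and the postselection technique of \cite{CKR09}, one obtains only the \emph{weak} subadditivity $\widetilde{E}_{R,\alpha}(\mathcal{N}^{\otimes n})\leq n\widetilde{E}_{R,\alpha}(\mathcal{N})+\frac{\alpha|A'|^2}{\alpha-1}\log n$. The additive $O(\log n)$ correction translates into a polynomial prefactor $n^{|A'|^2}$ in the fidelity bound, which does not affect the exponential decay, so your concluding $\varepsilon_n\to 1$ argument goes through unchanged once you substitute this weaker estimate. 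Your final limiting argument (choosing $\alpha>1$ with $\widetilde{E}_{R,\alpha}(\mathcal{N})<\liminf_n P_n$ via \eqref{eq:renyi-rel-ent-ent-limit-1-channel} and \eqref{eq:renyi-rel-ent-ent-mono-channel}) and the caveat about $\hat{\varepsilon}^{\operatorname{cppp}}_{\mathcal{N}}(n,P)\leq\varepsilon_n$ are both fine; the gap is confined to the subadditivity step.
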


We do not give a detailed proof of the theorem above, because it follows from
several results already available in \cite{TWW14}. Here we merely collect the
needed statements and give a proof sketch. A proof for
Proposition~\ref{prop:covariance} below follows by the same proof given for
\cite[Proposition~2]{TWW14}:

\begin{proposition}
\label{prop:covariance} Let $\mathcal{N}_{A^{\prime}\rightarrow B}$ be a
quantum channel that is covariant with respect to a group $G$ (as defined in
Section~\ref{sec:prelim})\ and let $\rho_{A^{\prime}}\in\mathcal{D}%
(\mathcal{H}_{A^{\prime}})$, $\phi_{AA^{\prime}}^{\rho} \in\mathcal{H}_{A}
\otimes\mathcal{H}_{A^{\prime}}$ be a purification of $\rho_{A^{\prime}}$, and
$\rho_{AB}=\mathcal{N}_{A^{\prime}\rightarrow B}(\phi_{AA^{\prime}}^{\rho})$.
Let $\bar{\rho}_{A^{\prime}}$ denote the group expectation of $\rho
_{A^{\prime}}$, i.e.,%
\begin{equation}
\bar{\rho}_{A^{\prime}}=\frac{1}{\left\vert G\right\vert }\sum_{g}%
U_{A^{\prime}}(g)\rho_{A^{\prime}}U_{A^{\prime}}^{\dag}(g),
\end{equation}
and let $\phi_{AA^{\prime}}^{\bar{\rho}}$ be a purification of $\bar{\rho
}_{A^{\prime}}$ and $\bar{\rho}_{AB}=\mathcal{N}_{A^{\prime}\rightarrow
B}(\phi_{AA^{\prime}}^{\bar{\rho}})$. Then, for $\alpha\in\left(
1,\infty\right)  $ and $\varepsilon\in\left(  0,1\right)  $, the following
inequalities hold%
\begin{align}
E_{R}(A;B)_{\bar{\rho}}  &  \geq E_{R}(A;B)_{{\rho}},\\
E_{R}^{\varepsilon}(A;B)_{\bar{\rho}}  &  \geq E_{R}^{\varepsilon}%
(A;B)_{{\rho}},\\
\widetilde{E}_{R,\alpha}(A;B)_{\bar{\rho}}  &  \geq\widetilde{E}_{R,\alpha
}(A;B)_{{\rho}}.
\end{align}

\end{proposition}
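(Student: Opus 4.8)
The plan is to follow the argument used for \cite[Proposition~2]{TWW14}: construct a specific purification of $\bar\rho_{A'}$ that carries a classical ``flag'' register recording the group element $g$, push it through the channel, and then observe that the original state $\rho_{AB}$ is recovered from the flagged channel output by an LOCC channel. Since $E_R$, $E_R^{\varepsilon}$, and $\widetilde{E}_{R,\alpha}$ are all LOCC monotones (established in Section~\ref{sec:one-shot-hypo-quantities} and the subsequent subsection), all three inequalities follow at once.

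Concretely, first I would fix the purification
\[
|\phi^{\bar\rho}\rangle_{GAA'} \equiv \frac{1}{\sqrt{|G|}}\sum_{g\in G} |g\rangle_{G}\otimes\big(I_A\otimes U_{A'}(g)\big)|\phi^{\rho}\rangle_{AA'},
\]
where $\{|g\rangle_{G}\}$ is an orthonormal basis, and check that $\operatorname{Tr}_{GA}\{|\phi^{\bar\rho}\rangle\langle\phi^{\bar\rho}|\}=\frac{1}{|G|}\sum_{g}U_{A'}(g)\rho_{A'}U_{A'}^{\dag}(g)=\bar\rho_{A'}$, so that this is indeed a purification. Because any two purifications of $\bar\rho_{A'}$ are related by an isometry on the purifying system, and because $E_R$, $E_R^{\varepsilon}$, $\widetilde{E}_{R,\alpha}$ are invariant under isometries applied to one marginal, the quantities $E_R(A;B)_{\bar\rho}$, $E_R^{\varepsilon}(A;B)_{\bar\rho}$, $\widetilde{E}_{R,\alpha}(A;B)_{\bar\rho}$ do not depend on the choice of purification, and I may evaluate them using the flagged one. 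Write $\bar\rho_{GAB}\equiv\mathcal{N}_{A'\rightarrow B}(\phi^{\bar\rho}_{GAA'})$.

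Next I would use covariance in the form $(I_A\otimes\mathcal{N}_{A'\rightarrow B})\big((I_A\otimes U_{A'}(g))\,\phi^{\rho}_{AA'}\,(I_A\otimes U_{A'}^{\dag}(g))\big)=V_B(g)\,\rho_{AB}\,V_B^{\dag}(g)$, which follows by applying the covariance relation to the Kraus operators of $\mathcal{N}$ (it extends to an arbitrary reference system without change). Then I consider the LOCC channel in which Alice measures $G$ in the basis $\{|g\rangle_G\}$, communicates the outcome $g$ to Bob, Bob applies $V_B^{\dag}(g)$, and the flag register is discarded. Conditioned on outcome $g$, the post-measurement, post-correction state on $AB$ equals $\rho_{AB}$ for \emph{every} $g$, so this composite LOCC channel deterministically maps $\bar\rho_{GAB}$ to $\rho_{AB}$. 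Applying LOCC monotonicity of each quantity, together with the purification-independence noted above, yields
\[
E_R(A;B)_{\bar\rho}=E_R(GA;B)_{\bar\rho_{GAB}}\geq E_R(A;B)_{\rho},
\]
and identically for $E_R^{\varepsilon}$ and $\widetilde{E}_{R,\alpha}$.

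I expect no real obstacle here; the only point requiring a little care is verifying that the described operation is a genuine LOCC channel — that Alice's measure-and-forget step is a valid quantum instrument whose summed map is the partial trace over $G$, that Bob's $g$-indexed corrections $V_B^{\dag}(g)(\cdot)V_B(g)$ are channels, and that discarding the classical flag after Bob's correction is legitimate — so that the monotonicity statements from the previous subsections genuinely apply. Everything else is a routine unwinding of the covariance relation and of the non-uniqueness of purifications.
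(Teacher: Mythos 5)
Your proof is correct and follows essentially the same route as the paper, which simply defers to the proof of \cite[Proposition~2]{TWW14}: that argument is precisely the flagged purification $\frac{1}{\sqrt{|G|}}\sum_{g}|g\rangle_{G}\otimes(I_{A}\otimes U_{A'}(g))|\phi^{\rho}\rangle_{AA'}$, the covariance relation pushed through the channel, and the measure-communicate-correct LOCC map combined with invariance under the choice of purification and LOCC monotonicity of $E_{R}$, $E_{R}^{\varepsilon}$, and $\widetilde{E}_{R,\alpha}$. No gaps.
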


The following theorem is a consequence of Proposition~\ref{prop:covariance},
the permutation covariance of any i.i.d.~channel, and an application of the
quantum de Finetti theorem, in the form of the postselection technique
\cite{CKR09} (see \cite[Theorem~6]{TWW14} for an explicit proof of the theorem below).

\begin{theorem}
\label{thm:weak-subadditivity} Let $\mathcal{N}_{A^{\prime}\rightarrow B}$ be
a quantum channel. For all $\alpha>1$ and $n\in\mathbb{N}$, we have
\begin{equation}
\widetilde{E}_{R,\alpha}(\mathcal{N}^{\otimes n})\leq n\widetilde{E}%
_{R,\alpha}(\mathcal{N})+\frac{\alpha\left\vert A^{\prime}\right\vert ^{2}%
}{\alpha-1}\log n. \label{eq:weak-subadd}%
\end{equation}

\end{theorem}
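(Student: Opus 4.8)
The plan is to combine permutation covariance of the i.i.d.\ channel with the postselection technique (the quantum de~Finetti theorem), following the strategy of \cite[Theorem~6]{TWW14}. First I would reduce to permutation-symmetric inputs: $\mathcal{N}^{\otimes n}$ is covariant with respect to the symmetric group $S_{n}$ acting by permuting the $n$ tensor factors on the input space $A'^{n}$ and on the output space $B^{n}$, so Proposition~\ref{prop:covariance} applies with this group (this is precisely how \cite[Proposition~2]{TWW14} is used). It shows that for any pure input, replacing its reduced state on $A'^{n}$ by the permutation average can only increase $\widetilde{E}_{R,\alpha}(A^{n};B^{n})$ of the channel output; hence the supremum defining $\widetilde{E}_{R,\alpha}(\mathcal{N}^{\otimes n})$ is attained on an input state $\psi_{A^{n}A'^{n}}$ whose reduced state $\rho_{A'^{n}}$ is permutation invariant.

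Next I would bring in postselection \cite{CKR09}: there is a universal de~Finetti state $\zeta_{A'^{n}}=\int \sigma^{\otimes n}\,\mathrm{d}\sigma$ with $\rho_{A'^{n}}\le g_{n,d}\,\zeta_{A'^{n}}$ for every permutation-invariant $\rho_{A'^{n}}$, where $d=|A'|$ and $g_{n,d}=\binom{n+d^{2}-1}{n}\le (n+1)^{d^{2}-1}$; writing $\zeta_{A'^{n}}$ as a finite convex combination $\sum_{k}p_{k}\,\sigma_{k}^{\otimes n}$ (Carath\'{e}odory, at most $g_{n,d}$ terms) and taking a purification $|\zeta\rangle_{A'^{n}R}$ carrying a classical label register, the standard relation between purifications yields a map $X_{R\to A^{n}}$ with $\|X\|_{\infty}^{2}\le g_{n,d}$ such that, since $\mathcal{N}^{\otimes n}$ acts only on $A'^{n}$ while $X$ acts on the reference (Alice's) side, the optimal channel output equals $(X\otimes I_{B^{n}})\,\Theta_{RB^{n}}\,(X^{\dagger}\otimes I_{B^{n}})$ with $\Theta_{RB^{n}}:=\mathcal{N}^{\otimes n}(|\zeta\rangle\langle\zeta|_{A'^{n}R})$. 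Combining data processing of $\widetilde{D}_{\alpha}$ under trace-non-increasing completely positive maps, the elementary fact that $\Theta\le 2^{\lambda}\Theta'$ implies $\widetilde{D}_{\alpha}(\Theta\Vert\tau)\le \tfrac{\alpha\lambda}{\alpha-1}+\widetilde{D}_{\alpha}(\Theta'\Vert\tau)$ for $\alpha>1$, and the fact that a local operation on Alice's side preserves separability, this reduces the task to upper bounding $\inf_{\tau'\in\mathcal{S}(R\!:\!B^{n})}\widetilde{D}_{\alpha}(\Theta_{RB^{n}}\Vert\tau')$ up to the de~Finetti overhead.

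For the i.i.d.\ bound, pinching the label register costs at most a further factor $g_{n,d}$ (via the pinching inequality $\Theta\le g_{n,d}\,\mathcal{P}(\Theta)$) and turns $\Theta$ into a classically-flagged i.i.d.\ mixture $\sum_{k}p_{k}\,|k\rangle\langle k|\otimes(\mathcal{N}(\phi^{\sigma_{k}}))^{\otimes n}$, where $\phi^{\sigma_{k}}$ purifies $\sigma_{k}$ with reference $R_{1}$. Picking $\tau'=\sum_{k}p_{k}\,|k\rangle\langle k|\otimes(\varsigma^{k})^{\otimes n}$ with $\varsigma^{k}\in\mathcal{S}(R_{1}\!:\!B)$ optimal for a single use of $\mathcal{N}$ on $\phi^{\sigma_{k}}$ (the set of separable states is compact, so these infima are attained), $\tau'$ is manifestly separable across $R\!:\!B^{n}$; additivity of $\widetilde{D}_{\alpha}$ on tensor powers makes the $k$-th block contribute $n\,\widetilde{D}_{\alpha}(\mathcal{N}(\phi^{\sigma_{k}})\Vert\varsigma^{k})\le n\,\widetilde{E}_{R,\alpha}(\mathcal{N})$, and the direct-sum formula for $\widetilde{D}_{\alpha}$ with equal weights on both sides bounds the whole quantity by that maximum, namely $n\,\widetilde{E}_{R,\alpha}(\mathcal{N})$. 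Collecting the pieces and bounding the accumulated de~Finetti factors by $(n+1)^{d^{2}-1}$ yields $\widetilde{E}_{R,\alpha}(\mathcal{N}^{\otimes n})\le n\,\widetilde{E}_{R,\alpha}(\mathcal{N})+\tfrac{\alpha|A'|^{2}}{\alpha-1}\log n$.

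The main obstacle is exactly this last bookkeeping. A naive execution incurs two separate polynomial de~Finetti penalties — one from $\|X\|_{\infty}$ in the postselection step and one from pinching the label register — and one must also cope with the off-diagonal ``coherences'' of the de~Finetti state, which block a direct comparison of $\Theta$ with a label-block-diagonal separable state. Arranging the argument (the choice of de~Finetti state, its purification, and the order in which pinching and data processing are applied) so that only the single clean overhead $\tfrac{\alpha|A'|^{2}}{\alpha-1}\log n$ survives is the content of \cite[Theorem~6]{TWW14}, to which I would defer for the explicit constants.
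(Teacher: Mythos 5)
Your proposal is correct and follows essentially the same route as the paper, which likewise only sketches the argument as a combination of Proposition~\ref{prop:covariance} (applied to the permutation covariance of $\mathcal{N}^{\otimes n}$), the postselection technique of \cite{CKR09}, and a deferral to \cite[Theorem~6]{TWW14} for the explicit bookkeeping of the de~Finetti overhead. Your sketch in fact supplies more of the intermediate structure (the purification map $X$, the pinching of the label register, the block-diagonal separable comparison state) than the paper itself does, and your concluding deferral to \cite[Theorem~6]{TWW14} for the constants matches the paper's own treatment.
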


To arrive at the statement in Theorem~\ref{thm:rel-ent-ent-strong-conv-cppp},
we note that
\eqref{eq:bound-rel-ent-ent-any-channel-cppp}\ and\ \eqref{eq:rel-ent-hyp-to-rel-ent-ent-alpha-channel}\ lead
to the following bound on the optimal fidelity of any CPPP-assisted protocol:%
\begin{align}
1-\hat{\varepsilon}_{\mathcal{N}}^{\operatorname{cppp}}(n,P)  &
\leq2^{-\left(  \frac{\alpha-1}{\alpha}\right)  \left(  nP-\widetilde
{E}_{R,\alpha}(\mathcal{N}^{\otimes n})\right)  }\\
&  \leq2^{-\left(  \frac{\alpha-1}{\alpha}\right)  \left(  nP-n\widetilde
{E}_{R,\alpha}(\mathcal{N})-\frac{\alpha\left\vert A^{\prime}\right\vert ^{2}%
}{\alpha-1}\log n\right)  }\\
&  =n^{\left\vert A^{\prime}\right\vert ^{2}}2^{-n\left(  \frac{\alpha
-1}{\alpha}\right)  \left(  P-\widetilde{E}_{R,\alpha}(\mathcal{N})\right)  },
\end{align}
where the second inequality follows from Theorem~\ref{thm:weak-subadditivity}.
Thus, if $P>E_{R}(\mathcal{N})$, then by
\eqref{eq:renyi-rel-ent-ent-mono-channel} and
\eqref{eq:renyi-rel-ent-ent-limit-1-channel}, there exists $\alpha>1$ such
that $P>\widetilde{E}_{R,\alpha}(\mathcal{N})$ and so the optimal error
$\hat{\varepsilon}_{\mathcal{N}}^{\operatorname{cppp}}(n,P)$ increases
exponentially fast to one with exponent $\left(  \frac{\alpha-1}{\alpha
}\right)  (P-\widetilde{E}_{R,\alpha}(\mathcal{N}))$ (the polynomial prefactor
$n^{\left\vert A^{\prime}\right\vert ^{2}}$ does not contribute to the
exponent). Thus Theorem~\ref{thm:rel-ent-ent-strong-conv-cppp} follows.

\begin{remark}
The following regularized versions of the bounds in
Theorems~\ref{thm:tp-cov-str-conv} and \ref{thm:rel-ent-ent-strong-conv-cppp}
hold, by applying the same argument as given in \cite[Theorem~8]{TWW14}:%
\begin{align}
P_{\leftrightarrow}^{\dagger}(\mathcal{N}^{\operatorname{TP}})  &  \leq
\inf_{\ell\geq1}\frac{1}{\ell}E_{R}(A^{\ell};B^{\ell})_{\omega^{\otimes\ell}%
},\\
P_{\operatorname{cppp}}^{\dagger}(\mathcal{N})  &  \leq\inf_{\ell\geq1}%
\frac{1}{\ell}E_{R}(\mathcal{N}^{\otimes\ell}),
\end{align}
where $\mathcal{N}^{\operatorname{TP}}$ is a teleportation-simulable channel
with associated state $\omega_{AB}$ and $\mathcal{N}$ is an arbitrary channel.
\end{remark}

\subsection{Strong converses for particular channels}

Two particular channels of interest for which we can establish the strong
converse property for their private capacities are generalized dephasing
channels and quantum erasure channels. A generalized dephasing channel is any
channel with an isometric extension of the form%
\begin{equation}
U_{A\rightarrow BE}^{\mathcal{N}}\equiv\sum_{x=0}^{d-1}|x\rangle_{B}\langle
x|_{A}\otimes\left\vert \psi_{x}\right\rangle _{E}, \label{eq:deph}%
\end{equation}
where the states $\left\vert \psi_{x}\right\rangle $ are arbitrary (not
necessarily orthonormal). Hence, the specification for such a channel is as
follows:%
\begin{equation}
\mathcal{N}(\rho)=\sum_{x,y=0}^{d-1}\langle x|_{A}\rho|y\rangle_{A}%
\ \left\langle \psi_{y}|\psi_{x}\right\rangle \ |x\rangle\langle y|_{B}.
\end{equation}

\begin{proposition}
\label{prop:I_c=R-for-H}Let $\mathcal{N}$ be a generalized dephasing channel
of the form~\eqref{eq:deph}. Then%
\begin{equation}
I_{c}(\mathcal{N})=P(\mathcal{N})=P^{\dag}(\mathcal{N})=P_{\operatorname{cppp}%
}(\mathcal{N})=P_{\operatorname{cppp}}^{\dag}(\mathcal{N})=E_{R}(\mathcal{N}),
\end{equation}
where $I_{c}(\mathcal{N})$ is the coherent information of the channel, defined
in \eqref{eq:max-coh-info}.
\end{proposition}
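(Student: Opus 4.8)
The strategy is to arrange all of the listed quantities into a single chain of inequalities running from $I_{c}(\mathcal{N})$ up to $E_{R}(\mathcal{N})$, and then to close the loop by proving $E_{R}(\mathcal{N})\le I_{c}(\mathcal{N})$, which forces equality throughout. For the chain, recall that the coherent information is an achievable rate for quantum communication, so $I_{c}(\mathcal{N})\le Q(\mathcal{N})$; that entanglement transmission implies secret-key transmission (Section~\ref{sec:q-comm}), so $Q(\mathcal{N})\le P(\mathcal{N})$; that $P(\mathcal{N})\le P^{\dagger}(\mathcal{N})\le P_{\operatorname{cppp}}^{\dagger}(\mathcal{N})$ and $P(\mathcal{N})\le P_{\operatorname{cppp}}(\mathcal{N})\le P_{\operatorname{cppp}}^{\dagger}(\mathcal{N})$ by \eqref{eq:strong-converse-ordering} and the definitions; and finally that $P_{\operatorname{cppp}}^{\dagger}(\mathcal{N})\le E_{R}(\mathcal{N})$ by Theorem~\ref{thm:rel-ent-ent-strong-conv-cppp}. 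Thus everything reduces to the single inequality $E_{R}(\mathcal{N})\le I_{c}(\mathcal{N})$; note that this route does not even use the degradability of generalized dephasing channels.

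To prove $E_{R}(\mathcal{N})\le I_{c}(\mathcal{N})$, I would fix an arbitrary pure channel input $|\psi\rangle_{AA'}$, expand it in the dephasing basis as $|\psi\rangle_{AA'}=\sum_{x=0}^{d-1}|v_{x}\rangle_{A}\otimes|x\rangle_{A'}$ with subnormalized $|v_{x}\rangle_{A}$ and $p_{x}\equiv\langle v_{x}|v_{x}\rangle$, and propagate it through the isometry in \eqref{eq:deph} (acting on $A'$). This yields the pure state $|\rho\rangle_{ABE}=\sum_{x}|v_{x}\rangle_{A}|x\rangle_{B}|\psi_{x}\rangle_{E}$, so that $\rho_{AB}=\sum_{x,y}\langle\psi_{y}|\psi_{x}\rangle\,|v_{x}\rangle\langle v_{y}|_{A}\otimes|x\rangle\langle y|_{B}$ and $\rho_{E}=\sum_{x}p_{x}|\psi_{x}\rangle\langle\psi_{x}|$. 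The key observation is that pinching (completely dephasing) the output in the basis $\{|x\rangle_{B}\}$ produces
\begin{equation}
\Delta_{B}(\rho_{AB})=\sum_{x}|v_{x}\rangle\langle v_{x}|_{A}\otimes|x\rangle\langle x|_{B}=\sum_{x}p_{x}\,\frac{|v_{x}\rangle\langle v_{x}|_{A}}{p_{x}}\otimes|x\rangle\langle x|_{B},
\end{equation}
which is manifestly a separable state in $\mathcal{S}(A\!:\!B)$. Hence $E_{R}(A;B)_{\rho}\le D(\rho_{AB}\Vert\Delta_{B}(\rho_{AB}))$, and since $\Delta_{B}$ is a pinching we have $D(\rho_{AB}\Vert\Delta_{B}(\rho_{AB}))=H(\Delta_{B}(\rho_{AB}))-H(\rho_{AB})$ (because $\log\Delta_{B}(\rho_{AB})$ is block-diagonal in $B$ and $\rho_{AB}$, $\Delta_{B}(\rho_{AB})$ share their diagonal blocks, so $\operatorname{Tr}\{\rho_{AB}\log\Delta_{B}(\rho_{AB})\}=\operatorname{Tr}\{\Delta_{B}(\rho_{AB})\log\Delta_{B}(\rho_{AB})\}$).

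It then remains to evaluate the two entropies. Since $\Delta_{B}(\rho_{AB})$ has spectrum $\{p_{x}\}_{x}$ we get $H(\Delta_{B}(\rho_{AB}))=H(\{p_{x}\})$, and since $\rho_{ABE}$ is pure we get $H(\rho_{AB})=H(\rho_{E})=H(\sum_{x}p_{x}|\psi_{x}\rangle\langle\psi_{x}|)$. Therefore
\begin{equation}
E_{R}(A;B)_{\rho}\le H(\{p_{x}\})-H\Big(\sum_{x}p_{x}|\psi_{x}\rangle\langle\psi_{x}|\Big)=I_{c}(\rho^{\mathrm{diag}},\mathcal{N}),
\end{equation}
where $\rho^{\mathrm{diag}}\equiv\sum_{x}p_{x}|x\rangle\langle x|$ and the last equality holds because on a diagonal input $\mathcal{N}$ acts as the identity on $B$ while $\mathcal{N}^{c}(\rho^{\mathrm{diag}})=\sum_{x}p_{x}|\psi_{x}\rangle\langle\psi_{x}|$. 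Since $I_{c}(\rho^{\mathrm{diag}},\mathcal{N})\le I_{c}(\mathcal{N})$, taking the supremum over all pure inputs $|\psi\rangle_{AA'}$ gives $E_{R}(\mathcal{N})\le I_{c}(\mathcal{N})$ and closes the argument. The conceptual crux — and essentially the only step requiring care — is the twin role of the output pinching $\Delta_{B}$: it simultaneously lands us in the separable set and, via the pinching identity for the relative entropy, converts the relative-entropy-of-entanglement bound into the coherent information of a diagonal input; everything else is routine bookkeeping with capacity inequalities already established in the paper.
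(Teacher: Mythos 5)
Your proposal is correct and follows essentially the same route as the paper, which defers to the proof of Proposition~10 of \cite{TWW14}: sandwich all the quantities between $I_{c}(\mathcal{N})$ (achievable, hence $\leq P(\mathcal{N})$) and $E_{R}(\mathcal{N})$ (a strong converse rate by Theorem~\ref{thm:rel-ent-ent-strong-conv-cppp}), and close the loop by bounding $E_{R}(A;B)_{\rho}\leq D(\rho_{AB}\Vert\Delta_{B}(\rho_{AB}))=H(\{p_{x}\})-H(\rho_{E})=I_{c}(\rho^{\mathrm{diag}},\mathcal{N})$ using the dephased output as the separable state. The pinching computation and the capacity chain are both as in the cited argument, so nothing further is needed.
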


\noindent A proof for the above proposition proceeds exactly as in the proof
of \cite[Proposition~10]{TWW14}.

\medskip

A quantum erasure channel is defined as follows:%
\begin{equation}
\mathcal{E}_{A^{\prime}\rightarrow B}^{p}:\rho_{A^{\prime}}\mapsto
(1-p)\rho_{B}+p|e\rangle\langle e|_{B}, \label{eq:erasure}%
\end{equation}
where $p\in\lbrack0,1]$ is the erasure probability, $\rho_{B}$ is an isometric
embedding of $\rho_{A^{\prime}}$ into $\mathcal{H}_{B}$, and $|e\rangle$ is a
quantum state orthogonal to $\rho_{B}$. The cppp- and two-way assisted private
capacity of this channel was presented in \cite[Section~IV]{Goodenough2015} and
\cite{PLOB15}\ to be equal to $(1-p)\log|A|$. This channel is
teleportation-simulable, and the associated state $\omega_{AB}$ in this case
can be taken as $\omega_{AB}=\mathcal{E}_{A^{\prime}\rightarrow B}^{p}%
(\Phi_{AA^{\prime}})$. Applying Theorem~\ref{thm:tp-cov-str-conv} and the
method of proof from \cite[Proposition~11]{TWW14}, we obtain the following:

\begin{proposition}
Let $\mathcal{E}^{p}$ be an erasure channel of the form~\eqref{eq:erasure}.
Then%
\begin{equation}
P_{\operatorname{cppp}}(\mathcal{E}^{p})=P_{\operatorname{cppp}}^{\dagger
}(\mathcal{E}^{p})=P_{\leftrightarrow}(\mathcal{E}^{p})=P_{\leftrightarrow
}^{\dagger}(\mathcal{E}^{p})=(1-p)\log|A|.
\end{equation}

\end{proposition}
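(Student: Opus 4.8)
The plan is to sandwich all four quantities between $(1-p)\log|A|$ from below and from above. For the lower bound, achievability of the rate $(1-p)\log|A|$ under cppp assistance is established in \cite{Goodenough2015,PLOB15}; it can also be seen directly, since sending halves of maximally entangled states through $n$ uses of $\mathcal{E}^p$, having Bob perform the local measurement that detects which systems carry the erasure flag, and announcing the outcomes in one round of classical post-processing, retains roughly $(1-p)n$ noiselessly transmitted qudits. Hence $P_{\operatorname{cppp}}(\mathcal{E}^p)\ge(1-p)\log|A|$, and by the orderings $P_{\operatorname{cppp}}\le P_{\leftrightarrow}\le P_{\leftrightarrow}^{\dagger}$ and $P_{\operatorname{cppp}}\le P_{\operatorname{cppp}}^{\dagger}\le P_{\leftrightarrow}^{\dagger}$ (see \eqref{eq:strong-converse-ordering}), every one of the four quantities lies between $(1-p)\log|A|$ and $P_{\leftrightarrow}^{\dagger}(\mathcal{E}^p)$. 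It therefore suffices to prove the single upper bound $P_{\leftrightarrow}^{\dagger}(\mathcal{E}^p)\le(1-p)\log|A|$.

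For that bound I would use Theorem~\ref{thm:tp-cov-str-conv}: as noted just above the statement, $\mathcal{E}^p$ is teleportation-simulable with associated state $\omega_{AB}=\mathcal{E}^p_{A'\to B}(\Phi_{AA'})$, so $P_{\leftrightarrow}^{\dagger}(\mathcal{E}^p)\le E_R(A;B)_{\omega}$, and it remains to bound the relative entropy of entanglement of $\omega_{AB}$. Decompose $\mathcal{H}_B=\mathcal{H}_{B'}\oplus\mathbb{C}|e\rangle$ with $\mathcal{H}_{B'}\simeq\mathcal{H}_A$; with respect to the induced decomposition of $\mathcal{H}_A\otimes\mathcal{H}_B$ the state $\omega_{AB}$ is block diagonal, equal to $(1-p)\Phi_{AB'}$ on $\mathcal{H}_A\otimes\mathcal{H}_{B'}$ and to $p\,\pi_A\otimes|e\rangle\langle e|_B$ on $\mathcal{H}_A\otimes\mathbb{C}|e\rangle$, where $\pi_A$ is the maximally mixed state. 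I would then test against the separable state $\sigma_{AB}=(1-p)\overline{\Phi}_{AB'}\oplus p\,\pi_A\otimes|e\rangle\langle e|_B$, where $\overline{\Phi}_{AB'}=\frac{1}{|A|}\sum_i|i\rangle\langle i|_A\otimes|i\rangle\langle i|_{B'}$ is the maximally classically correlated (hence separable) state, so that $\sigma_{AB}$ is manifestly separable. Since $\omega$ and $\sigma$ share the same block weights $(1-p,p)$ and the same state on the erasure block, the additivity of relative entropy over orthogonal blocks gives $D(\omega_{AB}\|\sigma_{AB})=(1-p)D(\Phi_{AB'}\|\overline{\Phi}_{AB'})$. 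A one-line computation, using that the support of $\Phi_{AB'}$ is contained in that of $\overline{\Phi}_{AB'}$ and that $\langle\Phi|_{AB'}(\log\overline{\Phi}_{AB'})|\Phi\rangle_{AB'}=-\log|A|$, yields $D(\Phi_{AB'}\|\overline{\Phi}_{AB'})=\log|A|$, whence $E_R(A;B)_{\omega}\le(1-p)\log|A|$ and thus $P_{\leftrightarrow}^{\dagger}(\mathcal{E}^p)\le(1-p)\log|A|$.

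Combining the two bounds forces $(1-p)\log|A|=P_{\operatorname{cppp}}(\mathcal{E}^p)=P_{\operatorname{cppp}}^{\dagger}(\mathcal{E}^p)=P_{\leftrightarrow}(\mathcal{E}^p)=P_{\leftrightarrow}^{\dagger}(\mathcal{E}^p)$. The only substantive step is the evaluation $E_R(A;B)_{\omega}\le(1-p)\log|A|$, and within it the delicate point is the choice of separable witness $\sigma_{AB}$: it must be tuned so that the classical mixing contribution $D\bigl((1-p,p)\,\|\,(s_0,s_1)\bigr)$ vanishes and only the maximally-correlated-state term of value $\log|A|$ survives — a cruder choice such as $\pi_A\otimes\pi_B$ would overshoot. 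One can also verify the reverse inequality $E_R(A;B)_{\omega}\ge(1-p)\log|A|$, e.g.\ as a consequence of achievability combined with the meta-converse, so in fact $E_R(A;B)_{\omega}=(1-p)\log|A|$, though this is not needed here.
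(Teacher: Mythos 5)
Your proof is correct and follows essentially the same route as the paper: achievability from \cite{Goodenough2015,PLOB15}, the orderings to reduce everything to an upper bound on $P_{\leftrightarrow}^{\dagger}$, and Theorem~\ref{thm:tp-cov-str-conv} applied to $\omega_{AB}=\mathcal{E}^{p}(\Phi_{AA'})$. The only difference is that where the paper delegates the evaluation of $E_{R}(A;B)_{\omega}$ to ``the method of proof from \cite[Proposition~11]{TWW14},'' you carry out that computation explicitly with the block-diagonal separable witness $(1-p)\overline{\Phi}_{AB'}\oplus p\,\pi_{A}\otimes|e\rangle\langle e|_{B}$, which is exactly the calculation being referenced.
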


\section{Second-order expansions for private communication}

\label{sec:second-order}A recent goal of research in quantum information
theory has been to determine second-order characterizations of various quantum
communication tasks \cite{TH12,li12,TT13,DTW14,BDL15,TBR15}. In this program,
the goal is to determine the highest rate of communication achievable for a
given task when constrained to meet a fixed (constant) error probability and
with a sufficiently large number of channel uses available. What one finds
here is called the \textquotedblleft Gaussian approximation,\textquotedblright%
\ which often serves as a good guideline for rates that are achievable at a
fixed error and finite blocklength. Thus, this research goal is especially
important nowadays given that experimentalists have limited control over
quantum systems, only being able to manipulate small numbers of qubits coherently.

Two of the main tools which are consistently used in a second-order analysis
are the quantum relative entropy variance and a second-order expansion of
$D_{H}^{\varepsilon}$. The quantum relative entropy variance $V(\rho
\Vert\sigma)$ is defined as \cite{li12,TH12}%
\begin{equation}
V(\rho\Vert\sigma)\equiv\operatorname{Tr}\{\rho\left[  \log\rho-\log
\sigma-D(\rho\Vert\sigma)\right]  ^{2}\}.
\end{equation}
whenever $\operatorname{supp}(\rho)\subseteq\operatorname{supp}(\sigma)$. The
following second order expansion holds for $n\propto1/\varepsilon^{2}$ and
$\mathcal{H}$ a finite-dimensional Hilbert space \cite{li12,TH12}:%
\begin{equation}
D_{H}^{\varepsilon}(\rho^{\otimes n}\Vert\sigma^{\otimes n})=nD(\rho
\Vert\sigma)+\sqrt{nV(\rho\Vert\sigma)}\Phi^{-1}(\varepsilon)+O(\log n).
\label{eq:2nd-order-expand}%
\end{equation}
In the above, we have used the cumulative distribution function for a standard
normal random variable:%
\begin{equation}
\Phi(a)\equiv\frac{1}{\sqrt{2\pi}}\int_{-\infty}^{a}dx\,\exp\left(
-x^{2}/2\right)  , \label{eq:cumul-gauss}%
\end{equation}
and its inverse, defined as $\Phi^{-1}(\varepsilon)\equiv\sup\left\{
a\in\mathbb{R}\,|\,\Phi(a)\leq\varepsilon\right\}  $. It should be clear from
the context whether $\Phi$ refers to the maximally entangled state or
\eqref{eq:cumul-gauss}. A recent alternative proof of
\eqref{eq:2nd-order-expand} is available in \cite{DPR15}.

\subsection{Converse (upper)\ bounds}

If a quantum channel is teleportation-simulable, then it is possible to give a
second-order expansion of the upper bounds from
\eqref{eq:bound-rel-ent-ent-any-channel-cppp} and
\eqref{eq:two-way-meta-conv}, by employing \eqref{eq:2nd-order-expand}.
(Recall from Proposition~\ref{prop:cov->TP-sim} that every covariant channel
is teleportation-simulable.) Before doing so, we define the following variance
quantity:%
\begin{equation}
V_{E_{R}}^{\varepsilon}(A;B\mathcal{)}_{\rho}\equiv\left\{
\begin{array}
[c]{cc}%
\sup_{\sigma_{AB^{\prime}}\in\Pi_{\mathcal{S}}}V(\rho_{AB}\Vert\sigma_{AB}) &
\text{for }\varepsilon<1/2\\
\inf_{\sigma_{AB}\in\Pi_{\mathcal{S}}}V(\rho_{AB}\Vert\sigma_{AB}) & \text{for
}\varepsilon\geq1/2
\end{array}
\right.  ,
\end{equation}
where $\Pi_{\mathcal{S}}\subseteq\mathcal{S}(A\!:\!B)$ is the set of separable
states achieving the minimum in $E_{R}(A;B\mathcal{)}_{\rho}$. This leads us
to the following theorem, as an immediate consequence of the above reasoning:

\begin{theorem}
If a quantum channel $\mathcal{N}_{A^{\prime}\rightarrow B}$\ is
teleportation-simulable with associated state $\omega_{AB}$, then%
\begin{equation}
\hat{P}_{\mathcal{N}}(n,\varepsilon)\leq\hat{P}_{\mathcal{N}}%
^{\operatorname{cppp}}(n,\varepsilon)\leq\hat{P}_{\mathcal{N}}%
^{\leftrightarrow}(n,\varepsilon)\leq E_{R}(A;B\mathcal{)}_{\omega}%
+\sqrt{\frac{V_{E_{R}}^{\varepsilon}(A;B\mathcal{)}_{\omega}}{n}}\Phi
^{-1}(\varepsilon)+O\!\left(  \frac{\log n}{n}\right)  .
\end{equation}

\end{theorem}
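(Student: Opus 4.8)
The plan is to make precise the ``above reasoning'' by chaining the two-way meta-converse bound \eqref{eq:two-way-meta-conv} with the i.i.d.\ second-order expansion \eqref{eq:2nd-order-expand} of the hypothesis testing relative entropy, after restricting the separable-state optimization inside $E_{R}^{\varepsilon}$ to tensor-power states. By the hierarchy \eqref{eq:finite-P-hierarchy} it suffices to bound $\hat{P}_{\mathcal{N}}^{\leftrightarrow}(n,\varepsilon)$, and by \eqref{eq:two-way-meta-conv} this reduces to controlling $\tfrac{1}{n}E_{R}^{\varepsilon}(A^{n};B^{n})_{\omega^{\otimes n}}$. Since $E_{R}^{\varepsilon}(A^{n};B^{n})_{\omega^{\otimes n}}=\inf_{\sigma_{A^{n}B^{n}}\in\mathcal{S}(A^{n}:B^{n})}D_{H}^{\varepsilon}(\omega_{AB}^{\otimes n}\Vert\sigma_{A^{n}B^{n}})$ and every $n$-fold tensor power $\sigma_{AB}^{\otimes n}$ of a state $\sigma_{AB}\in\mathcal{S}(A:B)$ is separable across the cut $A^{n}:B^{n}$, restricting the infimum to such product states can only increase it, giving $E_{R}^{\varepsilon}(A^{n};B^{n})_{\omega^{\otimes n}}\le\inf_{\sigma_{AB}\in\mathcal{S}(A:B)}D_{H}^{\varepsilon}(\omega_{AB}^{\otimes n}\Vert\sigma_{AB}^{\otimes n})$.

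Next I would fix one separable state $\sigma^{\star}_{AB}\in\Pi_{\mathcal{S}}$, i.e.\ one attaining $D(\omega_{AB}\Vert\sigma^{\star}_{AB})=E_{R}(A;B)_{\omega}$, chosen within $\Pi_{\mathcal{S}}$ so as to additionally attain $\sup_{\sigma_{AB}\in\Pi_{\mathcal{S}}}V(\omega_{AB}\Vert\sigma_{AB})$ when $\varepsilon<1/2$ and $\inf_{\sigma_{AB}\in\Pi_{\mathcal{S}}}V(\omega_{AB}\Vert\sigma_{AB})$ when $\varepsilon\ge1/2$; this is precisely $V_{E_{R}}^{\varepsilon}(A;B)_{\omega}$. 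Because $\operatorname{supp}(\omega_{AB}^{\otimes n})\subseteq\operatorname{supp}((\sigma^{\star}_{AB})^{\otimes n})$, the expansion \eqref{eq:2nd-order-expand} applies with $\rho=\omega_{AB}$ and $\sigma=\sigma^{\star}_{AB}$ and yields $D_{H}^{\varepsilon}(\omega_{AB}^{\otimes n}\Vert(\sigma^{\star}_{AB})^{\otimes n})=nE_{R}(A;B)_{\omega}+\sqrt{n\,V_{E_{R}}^{\varepsilon}(A;B)_{\omega}}\,\Phi^{-1}(\varepsilon)+O(\log n)$. The case split in the definition of $V_{E_{R}}^{\varepsilon}$ is exactly what makes $\sqrt{n\,V_{E_{R}}^{\varepsilon}(A;B)_{\omega}}\,\Phi^{-1}(\varepsilon)$ the smallest available value of $\sqrt{nV}\,\Phi^{-1}(\varepsilon)$ over minimizers, since $\Phi^{-1}(\varepsilon)<0$ for $\varepsilon<1/2$ (favoring the largest variance) while $\Phi^{-1}(\varepsilon)\ge0$ for $\varepsilon\ge1/2$ (favoring the smallest). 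Dividing by $n$ and feeding the result back through \eqref{eq:two-way-meta-conv} and \eqref{eq:finite-P-hierarchy} gives $\hat{P}_{\mathcal{N}}(n,\varepsilon)\le\hat{P}_{\mathcal{N}}^{\operatorname{cppp}}(n,\varepsilon)\le\hat{P}_{\mathcal{N}}^{\leftrightarrow}(n,\varepsilon)\le E_{R}(A;B)_{\omega}+\sqrt{V_{E_{R}}^{\varepsilon}(A;B)_{\omega}/n}\,\Phi^{-1}(\varepsilon)+O(\log n/n)$, as claimed (note $\log n/n = o(1/\sqrt{n})$, consistent with the stated remainder).

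The step requiring the most care — and the only real, though mild, obstacle — is the selection of $\sigma^{\star}_{AB}$: one wants a single state attaining the relevant extremum of $V(\omega_{AB}\Vert\cdot)$ over $\Pi_{\mathcal{S}}$, so that the $O(\log n)$ remainder in \eqref{eq:2nd-order-expand}, whose implied constant may depend on $\sigma$, need not be made uniform along an approximating sequence. This is handled by observing that $\Pi_{\mathcal{S}}$ is a closed, hence compact, subset of the compact set $\mathcal{S}(A:B)$ — it is the level set of the lower semicontinuous map $D(\omega_{AB}\Vert\cdot)$ at its minimum, hence equals a closed sublevel set — and that finiteness of $E_{R}(A;B)_{\omega}$ forces $\operatorname{supp}(\omega_{AB})\subseteq\operatorname{supp}(\sigma_{AB})$ for every $\sigma_{AB}\in\Pi_{\mathcal{S}}$, so that $V(\omega_{AB}\Vert\cdot)$ is finite and regular enough on $\Pi_{\mathcal{S}}$ for the extremum to be attained. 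In finite dimensions there is nothing further to verify, and the remaining manipulations are the routine second-order bookkeeping already used in \cite{TBR15}.
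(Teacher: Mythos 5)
Your proposal is correct and follows essentially the same route as the paper, which presents this theorem as an immediate consequence of the two-way meta-converse bound \eqref{eq:two-way-meta-conv} (together with the hierarchy \eqref{eq:finite-P-hierarchy}) combined with the second-order expansion \eqref{eq:2nd-order-expand} applied after restricting the separable-state infimum to a tensor power of an optimal $\sigma^{\star}_{AB}\in\Pi_{\mathcal{S}}$. Your additional remarks on the sign of $\Phi^{-1}(\varepsilon)$ matching the case split in $V_{E_R}^{\varepsilon}$ and on the attainment of the extremum over the compact set $\Pi_{\mathcal{S}}$ correctly fill in details the paper leaves implicit.
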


\subsection{Achievable rates and lower bounds}

\label{sec:lower-2nd-order-ach}As discussed in Section~\ref{sec:q-comm},
entanglement transmission achieves the task of secret-key transmission, and
this relationship allows for giving lower bounds on achievable secret-key
transmission rates by employing known lower bounds on achievable entanglement
transmission rates. For some channels of interest, this approach leads to a
tight second-order characterization of their private transmission
capabilities. We now briefly review some known lower bounds on achievable
entanglement transmission rates \cite{BDL15,TBR15}. Two quantities which arise
in such a setting are the conditional quantum entropy and conditional entropy
variance \cite{TH12},\ defined for $\rho_{AB}\in\mathcal{D}(\mathcal{H}_{AB})$
as%
\begin{align}
H(A|B)_{\rho}  &  \equiv-D(\rho_{AB}\Vert I_{A}\otimes\rho_{B}),\\
V(A|B)_{\rho}  &  \equiv V(\rho_{AB}\Vert I_{A}\otimes\rho_{B}).
\end{align}
The coherent information is defined as $I(A\rangle B)_{\rho}\equiv
-H(A|B)_{\rho}$ \cite{PhysRevA.54.2629}\ and its corresponding variance is
$V(A\rangle B)_{\rho}\equiv V(A|B)_{\rho}$. Using these quantities, the
maximum entanglement transmission rate $\hat{Q}_{\mathcal{N}}(n,\varepsilon
)$\ possible has the following general lower bound for $\varepsilon\in(0,1)$
\cite{BDL15,TBR15}:%
\begin{equation}
\hat{Q}_{\mathcal{N}}(n,\varepsilon)\geq\hat{Q}_{\operatorname{lower}%
,\mathcal{N}}(n,\varepsilon)\equiv I_{c}(\mathcal{N})+\sqrt{\frac
{V_{c}^{\varepsilon}(\mathcal{N)}}{n}}\Phi^{-1}(\varepsilon)+O\!\left(
\frac{\log n}{n}\right)  , \label{eq:2nd-order-q-char}%
\end{equation}
where $I_{c}(\mathcal{N})$ is the channel's coherent information:%
\begin{equation}
I_{c}(\mathcal{N})\equiv\max_{|\psi\rangle_{AA^{\prime}}\in\mathcal{H}%
_{AA^{\prime}}}I(A\rangle B)_{\theta}, \label{eq:max-coh-info}%
\end{equation}
$\theta_{AB}\equiv\mathcal{N}_{A^{\prime}\rightarrow B}(\psi_{AA^{\prime}})$,
and $V_{c}^{\varepsilon}(\mathcal{N)}$ is the channel's conditional entropy
variance:%
\begin{equation}
V_{c}^{\varepsilon}(\mathcal{N)\equiv}\left\{
\begin{array}
[c]{cc}%
\min_{\psi_{AA^{\prime}}\in\Pi}V(A\rangle B)_{\theta} & \text{for }%
\varepsilon<1/2\\
\max_{\psi_{AA^{\prime}}\in\Pi}V(A\rangle B)_{\theta} & \text{for }%
\varepsilon\geq1/2
\end{array}
\right.  .
\end{equation}
The set $\Pi\subseteq\mathcal{D}(\mathcal{H}_{AA^{\prime}})$ is the set of all
states achieving the maximum in \eqref{eq:max-coh-info}. For channels with
sufficient symmetry, such as covariant generalized dephasing channels, the
characterization in \eqref{eq:2nd-order-q-char}\ is tight, in the sense that
$\hat{Q}_{\mathcal{N}}(n,\varepsilon)=\hat{Q}_{\operatorname{lower}%
,\mathcal{N}}(n,\varepsilon)$ for sufficiently large $n$ \cite{TBR15}. Due to
\eqref{eq:finite-P-hierarchy} and \eqref{eq:finite-Q-to-finite-P-1}, we can
also conclude that the maximum rates possible for secret-key transmission have
the same lower bound for $\varepsilon\in(0,1)$:%
\begin{equation}
\hat{P}_{\mathcal{N}}^{\leftrightarrow}(n,\varepsilon)\geq\hat{P}%
_{\mathcal{N}}^{\operatorname{cppp}}(n,\varepsilon)\geq\hat{P}_{\mathcal{N}%
}(n,\varepsilon)\geq\hat{Q}_{\operatorname{lower},\mathcal{N}}(n,\varepsilon).
\end{equation}

The protocol for achieving the lower bound in
\eqref{eq:2nd-order-q-char}\ does not utilize forward or backward classical
communication in any way. Proposition~\ref{prop:one-shot-ent-dist-lower} below
gives a lower bound on the one-shot distillable entanglement of a bipartite
state $\rho_{AB}$. Such protocols allow for classical communication assistance
in one direction. Note that lower bounds on one-shot distillable entanglement
have previously appeared in the literature \cite{Berta08,BD10a}, but the bound
given below allows for a tighter characterization when we later consider the
i.i.d.~case and second-order expansions. We give a proof of
Proposition~\ref{prop:one-shot-ent-dist-lower}\ in
Appendix~\ref{app:one-shot-ent-dist-lower}.

\begin{definition}
The smooth conditional max-entropy $H_{\max}^{\xi}(A|B)_{\rho}$\ of a
bipartite state $\rho_{AB}$ is defined for $\xi\in\lbrack0,1)$ as%
\begin{equation}
H_{\max}^{\xi}(A|B)_{\rho}\equiv\inf_{\widetilde{\rho}_{AB}\in\mathcal{B}%
^{\xi}(\rho_{AB})}\sup_{\sigma_{B}\in\mathcal{D}(\mathcal{H}_{B})}\log
F(\widetilde{\rho}_{AB},I_{A}\otimes\sigma_{B}),
\end{equation}
where%
\begin{equation}
\mathcal{B}^{\xi}(\rho_{AB})\equiv\left\{  \rho_{AB}^{\prime}\in
\mathcal{D}_{\leq}(\mathcal{H}_{AB}):P(\rho_{AB},\rho_{AB}^{\prime})\leq
\xi\right\}  ,
\end{equation}
with $P$ denoting the purified distance in \eqref{eq:purified-distance}.
\end{definition}

\begin{proposition}
\label{prop:one-shot-ent-dist-lower}Let $\rho_{AB}\in\mathcal{D}%
(\mathcal{H}_{AB})$, $\varepsilon\in\left[  0,1\right]  $, and $\eta\in
\lbrack0,\sqrt{\varepsilon})$. Then there exists a one-way entanglement
distillation protocol $\Lambda_{AB\rightarrow A^{\prime}B^{\prime}}$,
utilizing classical communication from Alice to Bob, such that%
\begin{equation}
F(\Phi_{A^{\prime}B^{\prime}},\Lambda_{AB\rightarrow A^{\prime}B^{\prime}%
}(\rho_{AB}))\geq1-\varepsilon,
\end{equation}
where $\Phi_{A^{\prime}B^{\prime}}$ is a maximally entangled state\ of Schmidt
rank $d$ and%
\begin{equation}
\log d=-H_{\max}^{\sqrt{\varepsilon}-\eta}(A|B)_{\rho}-4\log\!\left(  \frac
{1}{\eta}\right)  .
\end{equation}

\end{proposition}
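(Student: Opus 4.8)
The plan is to reduce one-way entanglement distillation to one-shot quantum state merging (or equivalently, the "Father" protocol / fully quantum Slepian–Wolf in the one-shot regime), for which a converse-matching achievability bound in terms of the smooth conditional max-entropy is already known. Concretely, I would invoke the one-shot state merging result (Berta, and refinements by Dupuis–Berta–Wullschleger–Renner): given $\rho_{AB}$ with purifying system $R$, there is an LOCC protocol using only classical communication from Alice to Bob, together with pre-shared entanglement, such that after the protocol Bob holds (a system isomorphic to) Alice's share, with the purity/fidelity preserved, and the \emph{net} entanglement gain is essentially $-H_{\max}^{\xi}(A|B)_\rho$ qubits up to a correction term logarithmic in the smoothing parameter. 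Since state merging that succeeds with fidelity $\geq 1-\varepsilon$ and transmits $A$ to Bob is, up to local isometries on Bob's side, the same as distilling $\log d$ ebits between the original $A$ and a fresh register $B'$ on Bob's side at the same fidelity, this directly yields a distillation protocol of the claimed rate.

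First I would fix a purification $|\phi^\rho\rangle_{RAB}$ and recall the one-shot merging statement: for $\xi = \sqrt{\varepsilon}-\eta$ and any $\eta \in (0,\sqrt{\varepsilon})$, there exists a protocol $\mathcal{M}$ (Alice performs an instrument, sends classical data to Bob, Bob applies a correction isometry) consuming entanglement at rate roughly $\frac{1}{2}[H_{\max}^{\xi}(A|B)_\rho + \text{(something)}]$ in one convention, or producing ebits when $H_{\max}^{\xi}(A|B)_\rho$ is negative, with the error measured in purified distance $\leq \eta$ plus a $\xi$-term. The bookkeeping needs care: the paper's target $\log d = -H_{\max}^{\sqrt{\varepsilon}-\eta}(A|B)_\rho - 4\log(1/\eta)$ has a specific "$4\log(1/\eta)$" penalty, so I would track constants through the decoupling step — the standard argument decouples $A$ from $R$ using a random unitary, and the one-shot decoupling theorem of Dupuis et al.\ gives exactly a loss of the form $2\log(1/\eta)$ per side, which after accounting for both the smoothing of $\rho_{AB}$ and the conversion from trace distance to fidelity (the $F \geq 1-\varepsilon$ vs.\ $P \leq \sqrt{\varepsilon}$ dictionary) accumulates to the stated $4\log(1/\eta)$.

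Then I would convert the merging output into the distillation form: after $\mathcal{M}$, the state on $R A \tilde{A}\tilde{B}$ (where $\tilde A,\tilde B$ are the entanglement registers) is $\varepsilon$-close to $|\phi^\rho\rangle_{R(A\to B)} \otimes |\Phi\rangle_{\tilde A \tilde B}$, i.e.\ Bob now holds $A$'s role; relabel, trace out the no-longer-needed systems, and identify $|\Phi\rangle_{\tilde A \tilde B}$ of Schmidt rank $d$ as the distilled state between $A' \subseteq$ Alice and $B' \subseteq$ Bob. Monotonicity of fidelity under the partial trace and the triangle inequality for purified distance give $F(\Phi_{A'B'}, \Lambda_{AB\to A'B'}(\rho_{AB})) \geq 1-\varepsilon$. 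The main obstacle I anticipate is \emph{not} the conceptual reduction — that is standard — but rather pinning down the exact constant in front of $\log(1/\eta)$ so that it comes out to $4$ and not $2$ or $6$; this requires being scrupulous about (i) which smoothing ball is used, (ii) whether the error is quoted in trace distance or purified distance, and (iii) the conversion $\sqrt{\varepsilon}-\eta \to \varepsilon$. I would organize the appendix proof around a clean statement of one-shot decoupling, apply it once, and then do the $\varepsilon$/$\eta$ arithmetic explicitly at the end, which is exactly the routine-but-delicate part the paper defers to Appendix~\ref{app:one-shot-ent-dist-lower}.
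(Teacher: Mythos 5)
Your proposal is correct and follows essentially the same route as the paper: the appendix proof is precisely the one-shot merging/hashing argument you describe, carried out directly via the non-smooth decoupling theorem of Dupuis et al.\ applied to an instrument $\mathcal{T}_{A\to A_1 X_A}$, followed by Uhlmann's theorem to construct Bob's decoder, the duality $H_{\min}(A|E)_\rho=-H_{\max}(A|B)_\rho$, and a final smoothing step with a state $\overline{\rho}_{AB}\in\mathcal{B}^{\sqrt{\varepsilon}-\eta}(\rho_{AB})$ combined with the purified-distance triangle inequality to get $P\leq(\sqrt{\varepsilon}-\eta)+\eta=\sqrt{\varepsilon}$, i.e.\ $F\geq1-\varepsilon$. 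The one piece you flag but do not carry out---the constant $4$---arises exactly as you guess: the decoupling bound costs $2\log(1/\cdot)$, and running it at error $\eta^{2}$ in trace distance (so that the purified distance is $\eta$) turns this into $4\log(1/\eta)$.
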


One strategy for generating entanglement or secret key by means of a quantum
channel is for

\begin{enumerate}
\item Alice to prepare $n$ copies of a given state $\psi_{AA^{\prime}}$,

\item Alice to send the $n$ systems labeled by $A^{\prime}$ through the
memoryless channel $\mathcal{N}_{A^{\prime}\rightarrow B}^{\otimes n}$, which
leads to $n$ copies of a bipartite state $\theta_{AB}\equiv\mathcal{N}%
_{A^{\prime}\rightarrow B}(\psi_{AA^{\prime}})$, and then for

\item Alice and Bob to perform entanglement distillation by means of backward
classical communication from Bob to Alice.
\end{enumerate}

\noindent Since the classical communication is now from Bob to Alice, the
number of $\varepsilon$-approximate ebits that they can generate using this
method is equal to $-H_{\max}^{\sqrt{\varepsilon}-\eta}(B^{n}|A^{n}%
)_{\theta^{\otimes n}}-4\log\!\left(  \frac{1}{\eta}\right)  $, by applying
Proposition~\ref{prop:one-shot-ent-dist-lower}.

Steps 1-3 above realize an entanglement generation protocol. If the goal is
entanglement transmission, Alice and Bob could subsequently perform quantum
teleportation \cite{PhysRevLett.70.1895}, using forward classical
communication from Alice to Bob, to transmit any system Alice possesses to
Bob. By the monotonicity of the fidelity with respect to quantum channels (and
the teleportation protocol realizing a channel), the fidelity of entanglement
transmission is $\geq1-\varepsilon$ if the fidelity of entanglement generation
is $\geq1-\varepsilon$.

Using standard methods for second-order expansions \cite{TH12}, we find that%
\begin{equation}
\hat{P}_{\mathcal{N}}^{\operatorname{cppp}}(n,\varepsilon)\geq\hat
{Q}_{\mathcal{N}}^{\operatorname{cppp}}(n,\varepsilon)\geq
I_{\operatorname{rev}}(\mathcal{N})+\sqrt{\frac{V_{\operatorname{rev}%
}^{\varepsilon}(\mathcal{N)}}{n}}\Phi^{-1}(\varepsilon)+O\!\left(  \frac{\log
n}{n}\right)  \label{eq:rev-low-bound}%
\end{equation}
where $I_{\operatorname{rev}}(\mathcal{N})$ is the channel's reverse coherent
information (see \cite[Section~5.3]{DJKR06}\ and \cite{GPLS09}):%
\begin{equation}
I_{\operatorname{rev}}(\mathcal{N})\equiv\max_{|\psi\rangle_{AA^{\prime}}%
\in\mathcal{H}_{AA^{\prime}}}I(B\rangle A)_{\theta}, \label{eq:rev-coh-info}%
\end{equation}
$\theta_{AB}\equiv\mathcal{N}_{A^{\prime}\rightarrow B}(\psi_{AA^{\prime}})$,
and $V_{\operatorname{rev}}^{\varepsilon}(\mathcal{N)}$ is the channel's
reverse conditional entropy variance:%
\begin{equation}
V_{\operatorname{rev}}^{\varepsilon}(\mathcal{N)\equiv}\left\{
\begin{array}
[c]{cc}%
\min_{\psi_{AA^{\prime}}\in\Pi_{\operatorname{rev}}}V(B\rangle A)_{\theta} &
\text{for }\varepsilon<1/2\\
\max_{\psi_{AA^{\prime}}\in\Pi_{\operatorname{rev}}}V(B\rangle A)_{\theta} &
\text{for }\varepsilon\geq1/2
\end{array}
\right.  .
\end{equation}
The set $\Pi_{\operatorname{rev}}\subseteq\mathcal{D}(\mathcal{H}_{AA^{\prime
}})$ is the set of all states achieving the maximum in \eqref{eq:rev-coh-info}.

Covariant dephasing channels and quantum erasure channels are two classes of
channels for which we have tight second-order characterizations of their
private transmission capabilities, due to their symmetries. In the next
section, we provide characterizations of the private transmission capabilities
of qubit versions of these channels that are tight to even the third order.

\section{Channels with higher-order characterizations}

\label{sec:examples}This section details several examples of channels for
which we can provide precise characterizations of their private communication
abilities. Some of the results rely heavily on those from \cite{TBR15}, which
in turn rely upon recent results from classical information theory (see the
references given in \cite{TBR15}). For this reason, in these cases we keep the
exposition brief and point to \cite{TBR15} for details.

\subsection{Qubit dephasing channel}

The qubit dephasing channel performs the following transformation on an input
qubit density operator:%
\begin{equation}
\mathcal{Z}^{\gamma}:\rho\longmapsto\left(  1-\gamma\right)  \rho+\gamma Z\rho
Z,
\end{equation}
where $\gamma\in(0,1)$ is the dephasing parameter and $Z$ is shorthand for the
Pauli $\sigma_{Z}$ operator. This channel is teleportation-simulable because
it arises from the action of the teleportation protocol on the state
$\mathcal{Z}_{A^{\prime}\rightarrow B}^{\gamma}(\Phi_{AA^{\prime}}^{+})$,
where $|\Phi^{+}\rangle_{AA^{\prime}}\equiv(|00\rangle_{AA^{\prime}%
}+|11\rangle_{AA^{\prime}})/\sqrt{2}$. As a consequence, the upper bound from
\eqref{eq:two-way-meta-conv} applies. By using the same method given in
\cite{TBR15}, which relates to results for the binary symmetric channel from
classical information theory, we can establish a third-order upper bound. We
can also follow the achievability strategy from \cite{TBR15} to establish a
matching lower bound. This leads to the following result:

\begin{proposition}
For the qubit dephasing channel $\mathcal{Z}^{\gamma}$ with $\gamma\in(0,1)$,
the boundary $\hat{P}(n;\varepsilon)$ satisfies
\begin{equation}
\hat{P}(n,\varepsilon)=\hat{P}^{\operatorname{cppp}}(n,\varepsilon)=\hat
{P}^{\leftrightarrow}(n,\varepsilon)=1-h(\gamma)+\sqrt{\frac{v(\gamma)}{n}%
}\,\Phi^{-1}(\varepsilon)+\frac{\log n}{2n}+O\!\left(  \frac{1}{n}\right)  \,,
\end{equation}
where $\Phi$ is the cumulative standard Gaussian distribution, $h(\gamma)$
denotes the binary entropy and $v(\gamma)$ the corresponding variance, defined
as
\begin{align}
h(\gamma)  &  \equiv-\gamma\log\gamma-(1-\gamma)\log(1-\gamma) ,\\
v(\gamma)  &  \equiv\gamma(\log\gamma+h(\gamma))^{2}+(1-\gamma)(\log
(1-\gamma)+h(\gamma))^{2}.
\end{align}

\end{proposition}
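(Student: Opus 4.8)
The plan is to match an upper bound coming from the meta-converse in \eqref{eq:two-way-meta-conv} with a lower bound coming from entanglement transmission, both expanded to third order, following the strategy of \cite{TBR15}. First I would establish the upper bound. Since $\mathcal{Z}^{\gamma}$ is teleportation-simulable with associated state $\omega_{AB}=\mathcal{Z}^{\gamma}_{A^{\prime}\to B}(\Phi^{+}_{AA^{\prime}})$, the bound \eqref{eq:two-way-meta-conv} gives $\hat{P}^{\leftrightarrow}_{\mathcal{N}}(n,\varepsilon)\leq \frac{1}{n}E_{R}^{\varepsilon}(A^{n};B^{n})_{\omega^{\otimes n}}$. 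The key observation is that for the qubit dephasing channel the maximally entangled input is optimal by covariance (Proposition~\ref{prop:covariance}), and $\omega_{AB}$ is diagonal in the Bell basis with eigenvalues $\{1-\gamma,\gamma\}$ on the relevant two-dimensional block, so that the relative-entropy-of-entanglement computation reduces to a classical hypothesis-testing problem between a Bernoulli$(\gamma)^{\otimes n}$ distribution and the uniform distribution — exactly the binary-symmetric-channel situation analyzed in \cite{TBR15}. Plugging in the known third-order expansion of $D_{H}^{\varepsilon}$ for this classical problem yields $\frac{1}{n}E_{R}^{\varepsilon}(A^{n};B^{n})_{\omega^{\otimes n}} = 1-h(\gamma) + \sqrt{v(\gamma)/n}\,\Phi^{-1}(\varepsilon) + \frac{\log n}{2n} + O(1/n)$.

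Next I would establish the matching lower bound on $\hat{P}_{\mathcal{N}}(n,\varepsilon)$ — which automatically lower-bounds $\hat{P}^{\operatorname{cppp}}$ and $\hat{P}^{\leftrightarrow}$ by \eqref{eq:finite-P-hierarchy}. By \eqref{eq:finite-Q-to-finite-P-1} it suffices to lower-bound $\hat{Q}_{\mathcal{N}}(n,\varepsilon)$, the entanglement transmission rate. For the qubit dephasing channel this is a well-studied quantity: one uses the random coding / CSS-style strategy that works over the "virtual" binary symmetric channel induced by sending halves of maximally entangled pairs and decoding against the dephasing noise. The finite-blocklength achievability results imported into \cite{TBR15} give precisely the third-order term $\frac{\log n}{2n}$ and the Gaussian term with variance $v(\gamma)$, matching the converse. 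Since the protocol uses no classical communication, this simultaneously lower-bounds all three quantities $\hat{P}$, $\hat{P}^{\operatorname{cppp}}$, $\hat{P}^{\leftrightarrow}$, and sandwiching them between identical upper and lower bounds forces all three to be equal to the stated expression.

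The main obstacle, and the step requiring the most care, is the reduction of the quantum quantity $E_{R}^{\varepsilon}(A^{n};B^{n})_{\omega^{\otimes n}}$ to the scalar classical hypothesis-testing problem with the correct constant in front of the $\log n$ term. One must verify (i) that the optimal separable state $\sigma$ in the definition of $E_{R}^{\varepsilon}$ can be taken to be the one that is "maximally mixed on the key register tensored with the shield part," so that the test operator $\Pi$ from Lemma~\ref{lem:fail-privacy-test} is simultaneously near-optimal for the hypothesis test against $\omega^{\otimes n}$; (ii) that no loss is incurred by restricting to Bell-diagonal inputs, which follows from covariance but needs the symmetrization argument of Proposition~\ref{prop:covariance} applied to the $n$-fold channel; and (iii) that the classical third-order asymptotics — the $\frac{1}{2}\log n$ coefficient in particular, as opposed to some other multiple — are exactly those of the binary symmetric channel converse. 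Once these three points are in place, the rest is bookkeeping, and I would simply cite \cite{TBR15} for the detailed classical finite-blocklength estimates rather than reproduce them.
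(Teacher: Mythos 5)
Your proposal is correct and follows essentially the same route as the paper, which simply invokes the teleportation-simulability of $\mathcal{Z}^{\gamma}$ to apply the bound in \eqref{eq:two-way-meta-conv} and then cites the binary-symmetric-channel third-order analysis and matching achievability strategy of \cite{TBR15}. One small simplification: your worry (i) about identifying the \emph{optimal} separable state is unnecessary for the converse direction, since $E_{R}^{\varepsilon}$ is an infimum over separable states and any convenient choice (here the Bell-diagonal state $\tfrac{1}{2}(\Phi^{+}+\Phi^{-})$) already yields the required upper bound.
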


Thus, for this channel, there is no difference between its private and quantum
transmission capabilities.
%The statement given above provides a strong
%refinement of the recent results in \cite{PLOB15} for the dephasing channel.

\subsection{Qubit erasure channel}

The qubit erasure channel, defined for $\left\vert A\right\vert =2$ in
\eqref{eq:erasure}, is another example of a channel for which we can obtain a
precise characterization. This channel is teleportation-simulable because it
arises from the action of the teleportation protocol on $\mathcal{E}%
_{A^{\prime}\rightarrow B}^{p}(\Phi_{AA^{\prime}}^{+})$. Thus, we can apply
the upper bound from \eqref{eq:two-way-meta-conv} and the same reasoning from
\cite{TBR15} to establish a precise upper bound on the rates of private
communication achievable with LOCC-assistance. Also, the same achievability
protocol from \cite{TBR15} gives a lower bound that matches this upper bound,
giving us the following:

\begin{proposition}
For the qubit erasure channel $\mathcal{E}^{p}$ with $p\in(0,1)$, the boundary
$\hat{P}_{\mathcal{E}^{p}}^{\leftrightarrow}(n,\varepsilon)$ satisfies
\begin{equation}
\varepsilon=\sum_{l=n-k+1}^{n}{\binom{n}{l}}p^{l}(1-p)^{n-l}\left(
1-2^{n\left(  1-\hat{P}_{\mathcal{E}^{p}}^{\leftrightarrow}(n,\varepsilon
)\right)  -l}\right)  \,. \label{eq:bec-exact}%
\end{equation}
The same equation holds for $\hat{P}_{\mathcal{E}^{p}}^{\operatorname{cppp}%
}(n,\varepsilon)$. Moreover, the following expansion holds%
\begin{equation}
\hat{P}_{\mathcal{E}^{p}}^{\operatorname{cppp}}(n,\varepsilon)=\hat
{P}_{\mathcal{E}^{p}}^{\leftrightarrow}(n,\varepsilon)=1-p+\sqrt{\frac
{p(1-p)}{n}}\Phi^{-1}(\varepsilon)+O\!\left(  \frac{1}{n}\right)  \,.
\end{equation}

\end{proposition}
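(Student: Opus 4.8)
The plan is to prove the exact relation \eqref{eq:bec-exact} by a matching converse and achievability, and then to extract the stated expansion from it by a local central-limit analysis, following \cite{TBR15}. For the \textbf{converse}: since $\mathcal{E}^{p}$ is teleportation-simulable with associated state $\omega_{AB}=\mathcal{E}^{p}_{A'\to B}(\Phi_{AA'})$, every LOCC-assisted protocol reduces via \eqref{eq:SKA-TPC-protocols} to a single LOCC map $\Lambda$ acting on $\omega_{AB}^{\otimes n}$, which is what we exploit here rather than the coarser bound \eqref{eq:two-way-meta-conv}. One observes that $\omega_{AB}=(1-p)\,\Phi_{AB}+p\,(I_A/2)\otimes|e\rangle\langle e|_B$, so $\omega_{AB}^{\otimes n}$ is a mixture, over subsets $S\subseteq\{1,\dots,n\}$ of surviving coordinates occurring with probability $(1-p)^{|S|}p^{n-|S|}$, of states $\Phi_{2^{|S|}}\otimes(\text{separable})$ with $\Phi_{2^{|S|}}$ maximally entangled of rank $2^{|S|}$. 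For an $(n,P,\varepsilon)$ protocol with $K=2^{k}$ key values, Lemma~\ref{lem:pass-privacy-test} says the output $\Lambda(\omega_{AB}^{\otimes n})$ passes the associated $\gamma$-privacy test with probability $\ge 1-\varepsilon$; the key estimate is that, conditioned on $l$ erasures, this probability is at most $\min\{1,2^{n-l}/K\}$. This follows from the operator inequality $\Phi_{2^{m}}\le 2^{m}\cdot\frac{1}{2^{m}}\sum_i|ii\rangle\langle ii|$, which shows $\Phi_{2^{n-l}}\otimes(\text{separable})$ is dominated by $2^{n-l}$ times a separable state; applying the positive, separability-preserving map $\Lambda$ and then Lemma~\ref{lem:fail-privacy-test} gives the conditional bound. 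Summing over $l$ (noting $2^{n-l}/K\ge 1\Leftrightarrow l\le n-k$) gives $\varepsilon\ge\sum_{l=n-k+1}^{n}\binom{n}{l}p^{l}(1-p)^{n-l}(1-2^{n-l-k})$, an upper bound on $\hat{P}_{\mathcal{E}^p}^{\leftrightarrow}(n,\varepsilon)$ and a fortiori on $\hat{P}_{\mathcal{E}^p}^{\operatorname{cppp}}(n,\varepsilon)$.

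For \textbf{achievability}, I would use the natural CPPP protocol: Alice prepares $n$ maximally entangled qubit pairs and sends one half of each through $\mathcal{E}^{p}$; Bob announces which $l$ uses were erased; on the $n-l$ surviving pairs, which form a maximally entangled state of rank $2^{n-l}$, Alice and Bob locally pass to one of rank $K=2^{k}$, exactly if $n-l\ge k$ and with optimal overlap fidelity $2^{n-l-k}$ if $n-l<k$. A maximally entangled state is a private state with trivial shield and ``abelian twisting'' (Section~\ref{sec:q-comm}), and, the target being pure, the fidelity of the protocol is affine in its output, so averaging over the $\mathrm{Binomial}(n,p)$ number of erasures gives exactly $\varepsilon=\sum_{l=n-k+1}^{n}\binom{n}{l}p^{l}(1-p)^{n-l}(1-2^{n-l-k})$. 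Since this is an LOCC-assisted and indeed CPPP protocol, it together with the converse and the ordering $\hat{P}^{\operatorname{cppp}}\le\hat{P}^{\leftrightarrow}$ pins both boundaries down to the value implicitly defined by \eqref{eq:bec-exact}, giving \eqref{eq:bec-exact} itself.

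For the \textbf{expansion and the main obstacle}: with $L\sim\mathrm{Binomial}(n,p)$, equation \eqref{eq:bec-exact} places $n-k$ essentially at the $(1-\varepsilon)$-quantile of $L$, since the dominant term is $\Pr[L\ge n-k+1]$ and the geometric correction $\sum_{l>n-k}\Pr[L=l]\,2^{n-l-k}$ is of lower order near the threshold. A local central-limit / Berry--Esseen estimate then yields $n-k=np+\sqrt{np(1-p)}\,\Phi^{-1}(1-\varepsilon)+O(1)$, i.e. $\hat{P}_{\mathcal{E}^p}^{\leftrightarrow}(n,\varepsilon)=k/n=1-p+\sqrt{p(1-p)/n}\,\Phi^{-1}(\varepsilon)+O(1/n)$; carrying this to third order is the lengthiest step and is done exactly as in \cite{TBR15}. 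I expect the genuine obstacle to be instead the converse's key estimate, namely that LOCC acting on $2^{m}$ ebits together with separable side resources cannot pass a $K$-outcome privacy test with probability exceeding $2^{m}/K$, since this is what fixes the constants in \eqref{eq:bec-exact}; once the operator inequality $\Phi_{2^{m}}\le 2^{m}\sigma_{\mathrm{sep}}$ is identified, the rest is bookkeeping.
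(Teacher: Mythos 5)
Your proposal is correct and follows essentially the same route as the paper, which obtains the converse from the teleportation reduction together with the meta-converse \eqref{eq:two-way-meta-conv} and the matching lower bound from the EPR-plus-erasure-announcement protocol, deferring the actual computation to \cite{TBR15}. Your operator inequality $\Phi_{2^{m}}\leq2^{m}\sigma_{\mathrm{sep}}$ combined with Lemma~\ref{lem:fail-privacy-test} is precisely the content of the hypothesis-testing evaluation against the product separable state that \cite{TBR15} carries out, so you have simply made explicit the details the paper omits.
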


Thus, again for this channel, there is no difference between its private and
quantum transmission capabilities. The statement given above provides a strong
refinement of the recent results in \cite{Goodenough2015}
%,PLOB15}
for the qubit erasure channel, which established $1-p$ as the two-way assisted
private capacity of the qubit erasure channel.

\subsection{Entanglement-breaking channels}

\label{sec:EB-channels}Entanglement-breaking channels have played an important
role in the development of quantum information theory \cite{HSR03}, in some
sense being the class of channels which are most similar to a classical
channel. A formal definition is that a channel $\mathcal{N}_{A^{\prime
}\rightarrow B}$ is entanglement breaking if the state $\mathcal{N}%
_{A^{\prime}\rightarrow B}(\rho_{AA^{\prime}})$ is separable regardless of the
input state $\rho_{AA^{\prime}}$. As shown in \cite{HSR03}, it suffices to
check this condition for a single input:\ the maximally entangled state
$\Phi_{AA^{\prime}}$.

Entanglement-breaking channels can be simulated by local operations and
classical communication:\ every such channel can be simulated by a measurement
followed by a preparation of a state conditioned on the measurement outcome
\cite{HSR03,Holevo2008}. As a consequence, any $P^{\leftrightarrow}$ protocol
using an entanglement-breaking channel $n$ times can only generate a separable
state at the end of the protocol. Applying the same method of proof as in
Theorem~\ref{prop:meta-conv-priv}\ and the observation from
\eqref{eq:D_H-rho-sigma-equal}, we find that the following bound holds for any
entanglement-breaking channel $\mathcal{N}$:%
\begin{equation}
\hat{P}_{\mathcal{N}}^{\leftrightarrow}(n,\varepsilon)\leq-\frac{1}{n}%
\log\left(  1-\varepsilon\right)  .
\end{equation}
Thus, for these channels, the first, second, and third order terms all vanish,
implying that such channels have essentially no capability to transmit private
information. Given \eqref{eq:finite-Q-to-finite-P-1}, the same upper bound
holds for $\hat{Q}_{\mathcal{N}}^{\leftrightarrow}(n,\varepsilon)$, a result
already obtained by the transposition bound method from \cite[Section~IV-C]%
{Mueller-Hermes2015}.

\section{Quantum Gaussian channels}

\label{sec:gaussian} Quantum Gaussian channels are an important model for
communication in realistic settings, such as free space and fiber-optic
communication. A relevant subclass are the phase-insensitive channels, which
add noise equally to the position and momentum quadrature of a given bosonic
mode. Several results are now known for the various capacities of these
channels (see the reviews in \cite{S09,WPGCRSL12,HG12}\ and see
\cite{GGCH14,PLOB15}\ for more recent developments).

In this section, we are interested in establishing bounds on the private and
quantum communication capabilities of three kinds of phase-insensitive bosonic
channels: the thermalizing channels, the amplifier channels, and the additive
noise channels. Each of these are defined respectively by the following
Heisenberg input-output relations:%
\begin{align}
\hat{b}  &  =\sqrt{\eta}\hat{a}+\sqrt{1-\eta}\hat{e}%
,\label{eq:thermal-channel}\\
\hat{b}  &  =\sqrt{G}\hat{a}+\sqrt{G-1}\hat{e}^{\dag}%
,\label{eq:amplifier-channel}\\
\hat{b}  &  =\hat{a}+\left(  x+ip\right)  /\sqrt{2},
\label{eq:additive-noise-channel}%
\end{align}
where $\hat{a}$, $\hat{b}$, and $\hat{e}$ are the field-mode annihilation
operators for the sender's input, the receiver's output, and the environment's
input of these channels, respectively.

The channel in \eqref{eq:thermal-channel} is a thermalizing channel, in which
the environmental mode is prepared in a thermal state $\theta(N_{B})$\ of mean
photon number $N_{B}\geq0$, defined as%
\begin{equation}
\theta(N_{B})\equiv\frac{1}{N_{B}+1}\sum_{n=0}^{\infty}\left(  \frac{N_{B}%
}{N_{B}+1}\right)  ^{n}|n\rangle\langle n|,
\end{equation}
where $\left\{  |n\rangle\right\}  _{n=0}^{\infty}$ is the orthonormal,
photonic number-state basis. When $N_{B}=0$, $\theta(N_{B})$ reduces to the
vacuum state, in which case the resulting channel in
\eqref{eq:thermal-channel} is called the pure-loss channel---it is said to be
quantum-limited in this case because the environment is injecting the minimum
amount of noise allowed by quantum mechanics. The parameter $\eta\in\left[
0,1\right]  $ is the transmissivity of the channel, representing the average
fraction of photons making it from the input to the output of the channel. The
channel in \eqref{eq:thermal-channel} is entanglement-breaking when $\left(
1-\eta\right)  N_{B}\geq\eta$ \cite{Holevo2008}. Let $\mathcal{L}_{\eta,N_{B}%
}$ denote this channel, and we make the further abbreviation $\mathcal{L}%
_{\eta}\equiv\mathcal{L}_{\eta,N_{B}=0}$ when it is the pure-loss channel.

The channel in \eqref{eq:amplifier-channel} is an amplifier channel, and the
parameter $G\geq1$ is its gain. For this channel, the environment is prepared
in the thermal state $\theta(N_{B})$. If $N_{B}=0$, the amplifier channel is
said to be quantum-limited for a similar reason as stated above. The channel
in \eqref{eq:amplifier-channel} is entanglement-breaking when $\left(
G-1\right)  N_{B}\geq1$ \cite{Holevo2008}. Let $\mathcal{A}_{G,N_{B}}$ denote
this channel, and we make the further abbreviation $\mathcal{A}_{G}%
\equiv\mathcal{A}_{G,N_{B}=0}$ when it is the quantum-limited amplifier channel.

Finally, the channel in \eqref{eq:additive-noise-channel} is an additive noise
channel, representing a quantum generalization of the classical additive white
Gaussian noise channel. In \eqref{eq:additive-noise-channel}, $x$ and $p$ are
zero-mean Gaussian random variables each having variance $\xi\geq0$. The
channel in \eqref{eq:additive-noise-channel} is entanglement-breaking when
$\xi\geq1$ \cite{Holevo2008}. Let $\mathcal{W}_{\xi}$ denote this channel.
Kraus representations for the channels in
\eqref{eq:thermal-channel}--\eqref{eq:additive-noise-channel}\ are available
in \cite{ISS11}.

For our purposes, it suffices to consider the three kinds of channels given
above. All other phase-insensitive Gaussian channels are
entanglement-breaking, and they thus have their private and quantum
communication abilities severely limited as discussed in
Section~\ref{sec:EB-channels}. Figure~1 in \cite{GGCH14} is helpful for
understanding the various phase-insensitive quantum Gaussian channels (the
channels given in
\eqref{eq:thermal-channel}--\eqref{eq:additive-noise-channel} all fall within
the white-shaded area in that figure).

Recently, the authors of \cite{PLOB15}\ presented several upper bounds on the
two-way assisted private capacities of these channels:%
\begin{align}
P_{\leftrightarrow}(\mathcal{L}_{\eta,N_{B}})  &  \leq-\log\!\left(  \left(
1-\eta\right)  \eta^{N_{B}}\right)  -g(N_{B}),\label{eq:loss-upper-bnd}\\
P_{\leftrightarrow}(\mathcal{A}_{G,N_{B}})  &  \leq\log\!\left(
\frac{G^{N_{B}+1}}{G-1}\right)  -g(N_{B}),\\
P_{\leftrightarrow}(\mathcal{W}_{\xi})  &  \leq\frac{\xi-1}{\ln2}-\log\xi,
\label{eq:additive-upper-bnd}%
\end{align}
where%
\begin{equation}
g(x)\equiv\left(  x+1\right)  \log\left(  x+1\right)  -x\log x
\end{equation}
is the quantum entropy of a thermal state with mean photon number $x\geq0$. If
the channels are entanglement breaking (specific parameter values discussed
above), then the upper bound can be taken as zero.
%The main ideas for
%establishing these bounds were to extend the insights of \cite[Section~V]%
%{BDSW96} regarding simulation of LOCC\ protocols with teleportation, to use
%the fact that the relative entropy of entanglement is an upper bound on
%distillable key \cite{HHHO05,HHHO09}, and to calculate the relative entropy of
%entanglement of Gaussian states following the methods of
%\cite{PhysRevA.64.063811,PhysRevA.71.062320}. In particular, the extension of
%the ideas in \cite[Section~V]{BDSW96} was to the continuous-variable case
%using continuous-variable quantum teleportation \cite{prl1998braunstein}\ and
%the observation that all of these channels are teleportation-simulable with
%respect to the correction operations used in continuous-variable
%teleportation. In order for the simulation to be perfect, one would require
%infinite-energy two-mode squeezed vacuum states as discussed in the original
%proposal \cite{prl1998braunstein}.

When considering capacities of communication, it is a common convention to
impose an energy constraint on the modes being input to the channel uses. This
constraint acknowledges the practical reality that a communication scheme
cannot consume an arbitrary amount of energy in any given protocol. However,
it is known that the quantum and private capacities of quantum Gaussian
channels are bounded even if an infinite amount of energy is available
\cite{HW01,TGW14IEEE} (this is in contrast to the classical capacity
\cite{HW01}). Thus, we can consider both the unconstrained capacity (with no
constraint on energy consumption) and the constrained capacity in these
scenarios. Note that the bounds in
\eqref{eq:loss-upper-bnd}--\eqref{eq:additive-upper-bnd} apply to both the
constrained and unconstrained capacities.

Regarding lower bounds on the capacities, an achievable rate for both quantum
and private data transmission is given by the reverse coherent information (as
discussed in Section~\ref{sec:lower-2nd-order-ach}):%
\begin{equation}
I(B\rangle A)_{\omega}\equiv H(A)_{\omega}-H(AB)_{\omega},
\end{equation}
where $\omega_{AB}\equiv\mathcal{N}_{A^{\prime}\rightarrow B}(\psi
_{AA^{\prime}})$. Alternatively, the coherent information $I(A\rangle
B)_{\omega}$ is an achievable rate as discussed in
Section~\ref{sec:lower-2nd-order-ach}. Evaluating these quantities for the
channels of interest in
\eqref{eq:thermal-channel}--\eqref{eq:additive-noise-channel}\ leads to the
following lower bounds on the unconstrained two-way assisted quantum
capacities \cite{PGBL09}:%
\begin{align}
-\log(1-\eta)-g(N_{B})  &  \leq Q_{\leftrightarrow}(\mathcal{L}_{\eta,N_{B}%
}),\\
\log\!\left(  \frac{G}{G-1}\right)  -g(N_{B})  &  \leq Q_{\leftrightarrow
}(\mathcal{A}_{G,N_{B}}),\\
-1/\ln2-\log\xi &  \leq Q_{\leftrightarrow}(\mathcal{W}_{\xi}).
\end{align}
For the pure-loss and quantum-limited amplifier channels, we thus have an
exact characterization of their unconstrained capacities \cite{PLOB15}:%
\begin{align}
Q_{\leftrightarrow}(\mathcal{L}_{\eta})  &  =P_{\leftrightarrow}%
(\mathcal{L}_{\eta})=-\log\left(  1-\eta\right)  ,
\label{eq:pure-loss-capacity-inf-energy}\\
Q_{\leftrightarrow}(\mathcal{A}_{G})  &  =P_{\leftrightarrow}(\mathcal{A}%
_{G})=\log\!\left(  \frac{G}{G-1}\right)  . \label{eq:amp-capacity-inf-energy}%
\end{align}

The following theorem refines the upper bounds in
\eqref{eq:loss-upper-bnd}--\eqref{eq:additive-upper-bnd}, which in turn
establishes the rates as strong converse rates and solidifies the claims of
\cite{PLOB15}:

\begin{theorem}
\label{thm:bosonic-bounds}Let $V_{\mathcal{L}_{\eta,N_{B}}}$, $V_{\mathcal{A}%
_{G,N_{B}}}$, and $V_{\mathcal{W}_{\xi}}$ be the unconstrained relative
entropy variances of the thermalizing, amplifier, and additive-noise channels,
respectively:%
\begin{align}
V_{\mathcal{L}_{\eta,N_{B}}}  &  \equiv N_{B}(N_{B}+1)\log^{2}(\eta\left[
N_{B}+1\right]  /N_{B}),\label{eq:unc-rel-ent-var-thermal}\\
V_{\mathcal{A}_{G,N_{B}}}  &  \equiv N_{B}(N_{B}+1)\log^{2}(G^{-1}\left[
N_{B}+1\right]  /N_{B}),\label{eq:unc-rel-ent-var-amp}\\
V_{\mathcal{W}_{\xi}}  &  \equiv\left(  1-\xi\right)  ^{2}/\ln^{2}2.
\label{eq:unc-rel-ent-var-add}%
\end{align}
The following converse bounds hold for all $\varepsilon\in(0,1)$, $n\geq1$,
and $N_{B}>0$:%
\begin{align}
\hat{P}_{\mathcal{L}_{\eta,N_{B}}}^{\leftrightarrow}(n,\varepsilon)  &
\leq-\log\!\left(  \left(  1-\eta\right)  \eta^{N_{B}}\right)  -g(N_{B}%
)+\sqrt{\frac{2V_{\mathcal{L}_{\eta,N_{B}}}}{n\left(  1-\varepsilon\right)  }%
}+\frac{C(\varepsilon)}{n},\label{eq:CLT-bound-thermal}\\
\hat{P}_{\mathcal{A}_{G,N_{B}}}^{\leftrightarrow}(n,\varepsilon)  &  \leq
\log\!\left(  \frac{G^{N_{B}+1}}{G-1}\right)  -g(N_{B})+\sqrt{\frac
{2V_{\mathcal{A}_{G,N_{B}}}}{n\left(  1-\varepsilon\right)  }}+\frac
{C(\varepsilon)}{n},\label{eq:CLT-bound-amp}\\
\hat{P}_{\mathcal{W}_{\xi}}^{\leftrightarrow}(n,\varepsilon)  &  \leq\frac
{\xi-1}{\ln2}-\log\xi+\sqrt{\frac{2V_{\mathcal{W}_{\xi}}}{n\left(
1-\varepsilon\right)  }}+\frac{C(\varepsilon)}{n}, \label{eq:CLT-bound-add}%
\end{align}
where $C(\varepsilon)\equiv\log6+2\log\left(  \left[  1+\varepsilon\right]
/\left[  1-\varepsilon\right]  \right)  $. For the pure-loss channel
$\mathcal{L}_{\eta}$\ and quantum-limited amplifier channel $\mathcal{A}_{G}$,
the following bounds hold%
\begin{align}
\hat{P}_{\mathcal{L}_{\eta}}^{\leftrightarrow}(n,\varepsilon)  &  \leq
-\log(1-\eta)+\frac{C(\varepsilon)}{n},\label{eq:pure-loss-tight-bound}\\
\hat{P}_{\mathcal{A}_{G}}^{\leftrightarrow}(n,\varepsilon)  &  \leq
\log\!\left(  \frac{G}{G-1}\right)  +\frac{C(\varepsilon)}{n}.
\label{eq:amp-tight-bound}%
\end{align}

\end{theorem}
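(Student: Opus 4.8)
The plan is to derive everything from the two-way meta-converse \eqref{eq:two-way-meta-conv}, which for a teleportation-simulable channel $\mathcal{N}$ with associated state $\omega_{AB}$ gives $\hat{P}_{\mathcal{N}}^{\leftrightarrow}(n,\varepsilon)\le\frac{1}{n}E_R^\varepsilon(A^n;B^n)_{\omega^{\otimes n}}$ and whose proof is, crucially, insensitive to the dimension of the shield systems. Each of $\mathcal{L}_{\eta,N_B}$, $\mathcal{A}_{G,N_B}$, $\mathcal{W}_\xi$ is teleportation-simulable with associated state its Choi state $\omega_{AB}$, realized as a limit: with $\psi^{N_S}_{AA^{\prime}}$ the two-mode squeezed vacuum of mean photon number $N_S$ and $\omega^{N_S}_{AB}\equiv\mathcal{N}_{A^{\prime}\to B}(\psi^{N_S}_{AA^{\prime}})$, the continuous-variable teleportation protocol of \cite{NFC09,PLOB15} simulates $\mathcal{N}$ with an error vanishing as $N_S\to\infty$, so the reduction underlying \eqref{eq:two-way-meta-conv} applies with $\omega_{AB}=\lim_{N_S\to\infty}\omega^{N_S}_{AB}$ in this sense. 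For each channel I would fix a separable Gaussian state $\sigma_{AB}\in\mathcal{S}(A\!:\!B)$ — a suitable product of thermal states, as identified in \cite{PLOB15} — for which $D(\omega_{AB}\Vert\sigma_{AB})$ equals the relevant first-order term, i.e.\ $-\log((1-\eta)\eta^{N_B})-g(N_B)$ for $\mathcal{L}_{\eta,N_B}$ and analogously for the other two channels. Since $\sigma_{AB}^{\otimes n}$ is again separable, it is feasible in the infimum defining $E_R^\varepsilon(A^n;B^n)_{\omega^{\otimes n}}$, so
\[
\hat{P}_{\mathcal{N}}^{\leftrightarrow}(n,\varepsilon)\ \le\ \frac{1}{n}E_R^\varepsilon(A^n;B^n)_{\omega^{\otimes n}}\ \le\ \frac{1}{n}D_H^\varepsilon\!\left(\omega_{AB}^{\otimes n}\Vert\sigma_{AB}^{\otimes n}\right),
\]
and by additivity $D(\omega_{AB}^{\otimes n}\Vert\sigma_{AB}^{\otimes n})=nD(\omega_{AB}\Vert\sigma_{AB})$ and $V(\omega_{AB}^{\otimes n}\Vert\sigma_{AB}^{\otimes n})=nV(\omega_{AB}\Vert\sigma_{AB})$.

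The next step is a one-shot converse relating $D_H^\varepsilon$ to the relative entropy and its variance: for all $\rho\in\mathcal{D}(\mathcal{H})$ and $\sigma\in\mathcal{L}_+(\mathcal{H})$ with $\operatorname{supp}(\rho)\subseteq\operatorname{supp}(\sigma)$,
\[
D_H^\varepsilon(\rho\Vert\sigma)\ \le\ D(\rho\Vert\sigma)+\sqrt{\frac{2V(\rho\Vert\sigma)}{1-\varepsilon}}+C(\varepsilon),\qquad C(\varepsilon)\equiv\log 6+2\log\frac{1+\varepsilon}{1-\varepsilon}.
\]
This is a quantum Chebyshev argument — take the test that accepts when the (pinched) log-likelihood ratio does not exceed $D(\rho\Vert\sigma)+t$, split the error budget so that Chebyshev's inequality forces $t=\sqrt{2V(\rho\Vert\sigma)/(1-\varepsilon)}$, and absorb the non-commutativity by smoothing in purified distance (the source of the $\log 6$ and $\frac{1+\varepsilon}{1-\varepsilon}$ factors); this one-shot bound is of a by-now-standard type and would be established separately. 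Applying it with $\rho=\omega_{AB}^{\otimes n}$, $\sigma=\sigma_{AB}^{\otimes n}$, using the additivity just noted, dividing by $n$, and letting $N_S\to\infty$ yields \eqref{eq:CLT-bound-thermal}--\eqref{eq:CLT-bound-add}, once $V(\omega_{AB}\Vert\sigma_{AB})$ has been evaluated by a direct Gaussian computation to be $V_{\mathcal{L}_{\eta,N_B}}$, $V_{\mathcal{A}_{G,N_B}}$, $V_{\mathcal{W}_\xi}$ of \eqref{eq:unc-rel-ent-var-thermal}--\eqref{eq:unc-rel-ent-var-add}. For the pure-loss and quantum-limited amplifier channels $N_B=0$, so $V_{\mathcal{L}_\eta}=V_{\mathcal{A}_G}=0$ (the prefactor $N_B(N_B+1)$ kills the diverging logarithm) and $g(0)=0$; the square-root term then disappears and \eqref{eq:pure-loss-tight-bound}--\eqref{eq:amp-tight-bound} follow, either directly or as the limit $N_B\to0$ of the previous bounds. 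Combined with the reverse-coherent-information achievability of Section~\ref{sec:lower-2nd-order-ach}, these upper bounds upgrade the weak-converse statements of \cite{PLOB15} to strong-converse ones.

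The hard part is making the continuous-variable limit fully rigorous. In infinite dimensions the teleportation simulation of a Gaussian channel is exact only with infinite squeezing, so the ``delay all LOCC to the end of the protocol'' reduction must be executed through a sequence of finite-energy approximations; one must show that the simulation error can be absorbed into the smoothing parameter at a cost vanishing with the energy, and that $D(\omega^{N_S}_{AB}\Vert\sigma^{N_S}_{AB})$, $V(\omega^{N_S}_{AB}\Vert\sigma^{N_S}_{AB})$ and the near-optimality of $\sigma^{N_S}_{AB}$ all converge to the claimed limiting values. This is where lower semicontinuity of the relative entropy and explicit Gaussian-state calculations replace the dimension-dependent postselection arguments used for discrete channels, and it is exactly why the shield-dimension-independence of \eqref{eq:two-way-meta-conv} noted after that display is indispensable. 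A secondary, more computational but still nontrivial, task is establishing that the chosen product of thermal states is asymptotically optimal for $E_R(A;B)_\omega$ and computing its relative entropy variance against $\omega_{AB}$ to obtain \eqref{eq:unc-rel-ent-var-thermal}--\eqref{eq:unc-rel-ent-var-add}.
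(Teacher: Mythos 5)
Your overall route is the same as the paper's: reduce an adaptive protocol to the i.i.d.\ state $\omega^{\mu}_{AB}=\mathcal{N}(\psi^{\mu}_{AA'})$ via finite-energy teleportation simulation (absorbing the simulation infidelity into the error parameter and letting the energy diverge), apply the two-way meta-converse with a fixed separable Gaussian state, invoke a Chebyshev-type one-shot upper bound on $D_H^{\varepsilon}(\rho^{\otimes n}\Vert\sigma^{\otimes n})$ in terms of $D$, $V$, and $C(\varepsilon)$, and finish with the expansions of $D$ and $V$ about infinite energy. One minor correction: the near-optimal separable state from \cite{PLOB15} is not a product of thermal states; it is a correlated two-mode Gaussian state whose standard-form correlation parameter is pushed to the separability threshold $c_{\text{sep}}=\sqrt{(a-1/2)(b-1/2)}$. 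A product of thermal states would still be feasible in $E_R^{\varepsilon}$ and hence give a valid bound, but with a larger first-order constant than the one claimed; and note that no optimality of $\sigma_{AB}$ needs to be established for a converse, so your final "secondary task" is not actually required.

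The substantive gap is your treatment of the pure-loss and quantum-limited amplifier channels, where neither of your two suggested routes works as stated. ``Directly'' fails because the Chebyshev-type bound (Appendix~\ref{app:CLT-chebyshev}, built on \cite{JOPS12}) is stated for faithful states, and at $N_B=0$ the relevant states are not faithful. ``As the limit $N_B\to0$ of the previous bounds'' fails because \eqref{eq:CLT-bound-thermal} upper-bounds $\hat{P}^{\leftrightarrow}_{\mathcal{L}_{\eta,N_B}}$, a quantity attached to a different channel for each $N_B$; there is no inequality relating $\hat{P}^{\leftrightarrow}_{\mathcal{L}_{\eta}}(n,\varepsilon)$ to $\lim_{N_B\to0}\hat{P}^{\leftrightarrow}_{\mathcal{L}_{\eta,N_B}}(n,\varepsilon)$ without a separate continuity-in-the-channel argument, and the natural monotonicity intuition (pure loss being the least noisy member of the family) points in the wrong direction for deducing an upper bound. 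The paper instead works at the level of states: the output state $\rho^{\mu}_{\eta}$ of the pure-loss channel is $f(\delta_n)$-close in trace distance to the faithful state $\rho^{\mu}_{\eta,\delta_n}$ produced by a slightly thermal channel, and the continuity property $\frac{1}{2}\Vert\rho'-\rho\Vert_1\leq\delta\Rightarrow D_H^{\varepsilon}(\rho\Vert\sigma)\leq D_H^{\varepsilon+\delta}(\rho'\Vert\sigma)$ from \eqref{eq:dh-continuity} lets one run the entire chain against $(\rho^{\mu}_{\eta,\delta_n})^{\otimes n}$ and the separable state tuned for it, sending $\delta_n\to0$ only at the very end. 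You would need to supply this (or an equivalent) approximation argument to close the proof of \eqref{eq:pure-loss-tight-bound}--\eqref{eq:amp-tight-bound}.
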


\begin{proof}
We will argue a proof of the bound \eqref{eq:CLT-bound-thermal} for the
thermalizing channel, then for the pure-loss channel, and finally the other
bounds will follow from similar reasoning. First, consider an arbitrary
$(n,P^{\leftrightarrow},\varepsilon)$ protocol for the thermalizing channel
$\mathcal{L}_{\eta,N_{B}}$. It consists of using the channel $n$~times and
interleaving rounds of LOCC\ between every channel use. Let $\zeta_{\hat
{A}\hat{B}}^{n}$ denote the final state of Alice and Bob at the end of this protocol.

By the teleportation reduction procedure from \cite[Section~V]{BDSW96}\ and
\cite{NFC09} (see also the review in \cite{PLOB15}), such a protocol can be
simulated by preparing $n$\ two-mode squeezed vacuum (TMSV) states each having
energy $\mu-1/2$ (where we think of $\mu\geq1/2$ as a very large positive
real), sending one mode of each TMSV through each channel use, and then
performing continuous-variable quantum teleportation \cite{prl1998braunstein}
to delay all of the LOCC\ operations until the end of the protocol. Let
$\rho_{\eta,N_{B}}^{\mu}$ denote the state resulting from sending one share of
the TMSV\ through the thermalizing channel, and let $\zeta_{\hat{A}\hat{B}%
}^{\prime}(n,\mu)$ denote the state at the end of the simulation. Let
$\varepsilon_{\operatorname{TP}}(n,\mu)$ denote the \textquotedblleft
infidelity\textquotedblright\ of the simulation:%
\begin{equation}
\varepsilon_{\operatorname{TP}}(n,\mu)\equiv1-F(\zeta_{\hat{A}\hat{B}}%
^{n},\zeta_{\hat{A}\hat{B}}^{\prime}(n,\mu)).
\end{equation}
Due to the fact that continuous-variable teleportation induces a perfect
quantum channel when infinite energy is available \cite{prl1998braunstein},
the following limit holds for every $n$:%
\begin{equation}
\limsup_{\mu\rightarrow\infty}\varepsilon_{\operatorname{TP}}(n,\mu)=0.
\end{equation}
Note also that $\varepsilon_{\operatorname{TP}}(n,\mu)$ is a monotone
non-increasing function of $\mu$, because the fidelity of continuous-variable
teleportation increases with increasing energy \cite{prl1998braunstein}. By
using that $\sqrt{1-F(\rho,\sigma)}$ is a distance measure for states $\rho$
and $\sigma$ (and thus obeys a triangle inequality) \cite{GLN04}, the
simulation leads to an $(n,P^{\leftrightarrow},\varepsilon(n,\mu))$ protocol
for the thermalizing channel, where%
\begin{equation}
\varepsilon(n,\mu)\equiv\min\!\left\{  1,\left[  \sqrt{\varepsilon}%
+\sqrt{\varepsilon_{\operatorname{TP}}(n,\mu)}\right]  ^{2}\right\}  .
\end{equation}
Observe that $\limsup_{\mu\rightarrow\infty}\varepsilon(n,\mu)=\varepsilon$,
so that the simulated protocol has equivalent performance to the original
protocol in the infinite-energy limit. However, the simulated protocol has a
simpler form than the original one, and since this procedure can be conducted
for any $(n,P^{\leftrightarrow},\varepsilon)$\ protocol for the channel
$\mathcal{L}_{\eta,N_{B}}$, we find that the following bound applies by
invoking reasoning similar to that needed to arrive at
\eqref{eq:two-way-meta-conv}:%
\begin{align}
\hat{P}_{\mathcal{L}_{\eta,N_{B}}}^{\leftrightarrow}(n,\varepsilon)  &
\leq\hat{P}_{\rho_{\eta,N_{B}}^{\mu}}^{\leftrightarrow}(n,\varepsilon
(n,\mu))\\
&  \leq\frac{1}{n}E_{R}^{\varepsilon(n,\mu)}(A^{n};B^{n})_{(\rho_{\eta,N_{B}%
}^{\mu})^{\otimes n}}\\
&  \leq\frac{1}{n}D_{H}^{\varepsilon(n,\mu)}((\rho_{\eta,N_{B}}^{\mu
})^{\otimes n}\Vert(\sigma_{\eta,N_{B}}^{\mu})^{\otimes n}),
\label{eq:HT-bound-thermal}%
\end{align}
where $\hat{P}_{\rho_{\eta,N_{B}}^{\mu}}^{\leftrightarrow}(n,\varepsilon
(n,\mu))$ denotes the distillable key using $n$ copies of $\rho_{\eta,N_{B}%
}^{\mu}$ up to error $\varepsilon(n,\mu)$ and the state $\sigma_{\eta,N_{B}%
}^{\mu}$ is a particular separable state chosen as in
\eqref{eq:sep-state-thermal}\ (in Appendix~\ref{app:sep-states-variances}). By
applying the Chebyshev-like bound given in Appendix~\ref{app:CLT-chebyshev},
we find that%
\begin{equation}
\frac{1}{n}D_{H}^{\varepsilon(n,\mu)}((\rho_{\eta,N_{B}}^{\mu})^{\otimes
n}\Vert(\sigma_{\eta,N_{B}}^{\mu})^{\otimes n})\leq D(\rho_{\eta,N_{B}}^{\mu
}\Vert\sigma_{\eta,N_{B}}^{\mu})+\sqrt{\frac{2V(\rho_{\eta,N_{B}}^{\mu}%
\Vert\sigma_{\eta,N_{B}}^{\mu})}{n(1-\varepsilon(n,\mu))}}+C(\varepsilon
(n,\mu))/n.
\end{equation}
Thus, the following bound holds for every $\mu$ sufficiently large (so that
$\varepsilon(n,\mu)\in(0,1)$):%
\begin{equation}
\hat{P}_{\mathcal{L}_{\eta,N_{B}}}^{\leftrightarrow}(n,\varepsilon)\leq
D(\rho_{\eta,N_{B}}^{\mu}\Vert\sigma_{\eta,N_{B}}^{\mu})+\sqrt{\frac
{2V(\rho_{\eta,N_{B}}^{\mu}\Vert\sigma_{\eta,N_{B}}^{\mu})}{n(1-\varepsilon
(n,\mu))}}+C(\varepsilon(n,\mu))/n. \label{eq:final-bound-CLT}%
\end{equation}
From the developments in \cite{PLOB15} and
Appendix~\ref{app:sep-states-variances}, we have the following expansions
about $\mu=\infty$:%
\begin{align}
D(\rho_{\eta,N_{B}}^{\mu}\Vert\sigma_{\eta,N_{B}}^{\mu})  &  =-\log\!\left(
\left(  1-\eta\right)  \eta^{N_{B}}\right)  -g(N_{B})+O(\mu^{-1}%
),\label{eq:rel-ent-inf-energy-limit}\\
V(\rho_{\eta,N_{B}}^{\mu}\Vert\sigma_{\eta,N_{B}}^{\mu})  &  =V_{\mathcal{L}%
_{\eta,N_{B}}}+O\!\left(  \mu^{-1}\right)  .
\label{eq:rel-ent-var-inf-energy-limit}%
\end{align}
We can then take the limit in \eqref{eq:final-bound-CLT}\ as $\mu
\rightarrow\infty$ to conclude the bound stated in \eqref{eq:CLT-bound-thermal}.

To recover the other bounds in \eqref{eq:CLT-bound-amp} and
\eqref{eq:CLT-bound-add}, we apply the same reasoning as above but instead use
the infinite-energy expansions of the relative entropy and the relative
entropy variance given in \cite{PLOB15}\ and
Appendix~\ref{app:sep-states-variances}, respectively.

The bound in \eqref{eq:pure-loss-tight-bound}\ for the pure-loss channel
follows similarly but requires a careful argument with appropriate limits
because the propositions in Appendix~\ref{app:CLT-chebyshev} are stated to
hold only for faithful states (i.e., positive-definite states). The main idea
of the proof below is to apply the following statement given in \cite{DTW14}%
:\ for $\rho^{\prime},\rho,\sigma\in\mathcal{D}(\mathcal{H})$, $\varepsilon
\in(0,1)$, and $\delta\in(0,1-\varepsilon)$,%
\begin{equation}
\frac{1}{2}\left\Vert \rho^{\prime}-\rho\right\Vert _{1}\leq\delta
\qquad\Rightarrow\qquad D_{H}^{\varepsilon}(\rho\Vert\sigma)\leq
D_{H}^{\varepsilon+\delta}(\rho^{\prime}\Vert\sigma). \label{eq:dh-continuity}%
\end{equation}
To this end, we can now repeat the reasoning from above for the pure-loss
channel $\mathcal{L}_{\eta}$. Let $\rho_{\eta}^{\mu}$ be the state arising
from sending one mode of a TMSV through the pure-loss channel $\mathcal{L}%
_{\eta}$. This state is $f(\delta_{n})$-close in trace distance to a faithful
state $\rho_{\eta,\delta_{n}}^{\mu}$\ that would result from sending one share
of a TMSV through a thermalizing channel $\mathcal{L}_{\eta,\delta_{n}}$,
where $\delta_{n}>0$ is a tunable parameter that we will eventually take to
zero at the end of the argument and the function $f$ is such that
$\lim_{\delta_{n}\rightarrow0}f(\delta_{n})=0$. We can choose $\delta_{n}$
small enough and the energy$~\mu$ large enough such that $\varepsilon
(n,\mu)+f(\delta_{n})\in(0,1)$ for a given $n$. Proceeding as in
\eqref{eq:HT-bound-thermal}, we pick $\sigma_{\eta,\delta_{n}}^{\mu}$ to be a
separable state \textquotedblleft tuned\textquotedblright\ for $\rho
_{\eta,\delta_{n}}^{\mu}$, chosen as in \eqref{eq:sep-state-thermal}\ (in
Appendix~\ref{app:sep-states-variances}), and we find the following for all
$n\geq1$, for all $\mu$ sufficiently large, and $\delta_{n}$ small enough:%
\begin{align}
\hat{P}_{\mathcal{L}_{\eta}}^{\leftrightarrow}(n,\varepsilon)  &  \leq\hat
{P}_{\rho_{\eta}^{\mu}}^{\leftrightarrow}(n,\varepsilon(n,\mu))\\
&  \leq\frac{1}{n}E_{R}^{\varepsilon(n,\mu)}(A^{n};B^{n})_{(\rho_{\eta}^{\mu
})^{\otimes n}}\\
&  \leq\frac{1}{n}D_{H}^{\varepsilon(n,\mu)}((\rho_{\eta}^{\mu})^{\otimes
n}\Vert(\sigma_{\eta,\delta_{n}}^{\mu})^{\otimes n})\\
&  \leq\frac{1}{n}D_{H}^{\varepsilon(n,\mu)+f(\delta_{n})}((\rho_{\eta
,\delta_{n}}^{\mu})^{\otimes n}\Vert(\sigma_{\eta,\delta_{n}}^{\mu})^{\otimes
n})\\
&  \leq D(\rho_{\eta,\delta_{n}}^{\mu}\Vert\sigma_{\eta,\delta_{n}}^{\mu
})+\sqrt{\frac{2V(\rho_{\eta,\delta_{n}}^{\mu}\Vert\sigma_{\eta,\delta_{n}%
}^{\mu})}{n(1-\varepsilon(n,\mu)-\delta_{n})}}+C(\varepsilon(n,\mu)+\delta
_{n})/n.
\end{align}
The first inequality follows from the teleportation simulation argument. The
second follows by invoking reasoning similar to that needed to arrive at
\eqref{eq:two-way-meta-conv}. The third inequality follows by picking the
separable state in $E_{R}^{\varepsilon(n,\mu)}$\ to be $(\sigma_{\eta
,\delta_{n}}^{\mu})^{\otimes n}$. The fourth inequality follows from
\eqref{eq:dh-continuity}. The last inequality follows by applying the
Chebyshev-like bound given in Appendix~\ref{app:CLT-chebyshev}. We can now
take the limit as $\mu\rightarrow\infty$, applying
\eqref{eq:rel-ent-inf-energy-limit}--\eqref{eq:rel-ent-var-inf-energy-limit},
and find that%
\begin{equation}
\hat{P}_{\mathcal{L}_{\eta}}^{\leftrightarrow}(n,\varepsilon)\leq
-\log\!\left(  \left(  1-\eta\right)  \eta^{\delta_{n}}\right)  -g(\delta
_{n})+\sqrt{\frac{2V_{\mathcal{L}_{\eta},\delta_{n}}}{n(1-\varepsilon
-\delta_{n})}}+C(\varepsilon+\delta_{n})/n.
\end{equation}
Since the above bound holds for all sufficiently small $\delta_{n}$, we can
now take the limit as $\delta_{n}\rightarrow0$, and find the following bound:%
\begin{equation}
\hat{P}_{\mathcal{L}_{\eta}}^{\leftrightarrow}(n,\varepsilon)\leq-\log\left(
1-\eta\right)  +C(\varepsilon)/n,
\end{equation}
which follows because%
\begin{align}
\lim_{\delta_{n}\rightarrow0}\left[  -\log\!\left(  \left(  1-\eta\right)
\eta^{\delta_{n}}\right)  -g(\delta_{n})\right]   &  =-\log(1-\eta),\\
\lim_{\delta_{n}\rightarrow0}V_{\mathcal{L}_{\eta},\delta_{n}}  &  =0.
\end{align}
Similar reasoning applies to get the bound in \eqref{eq:amp-tight-bound}\ for
the quantum-limited amplifier channel.
\end{proof}

\begin{corollary}
As a consequence of Theorem~\ref{thm:bosonic-bounds}, the following bounds
hold%
\begin{align}
P_{\leftrightarrow}^{\dag}(\mathcal{L}_{\eta,N_{B}})  &  \leq-\log\!\left(
\left(  1-\eta\right)  \eta^{N_{B}}\right)  -g(N_{B}),\\
P_{\leftrightarrow}^{\dag}(\mathcal{A}_{G,N_{B}})  &  \leq\log\!\left(
\frac{G^{N_{B}+1}}{G-1}\right)  -g(N_{B}),\\
P_{\leftrightarrow}^{\dag}(\mathcal{W}_{\xi})  &  \leq\frac{\xi-1}{\ln2}%
-\log\xi,
\end{align}
establishing the upper bounds in
\eqref{eq:loss-upper-bnd}--\eqref{eq:additive-upper-bnd} as strong converse
rates for the constrained and unconstrained private and quantum capacities of
these channels. Furthermore, by taking the same limit in
\eqref{eq:pure-loss-tight-bound}--\eqref{eq:amp-tight-bound} and by combining
with the statements in
\eqref{eq:pure-loss-capacity-inf-energy}--\eqref{eq:amp-capacity-inf-energy},
we can conclude that the unconstrained, two-way assisted private and quantum
capacities of the pure-loss and quantum-limited amplifier channels satisfy the
strong converse property:%
\begin{align}
Q_{\leftrightarrow}(\mathcal{L}_{\eta})  &  =Q_{\leftrightarrow}^{\dag
}(\mathcal{L}_{\eta})=P_{\leftrightarrow}(\mathcal{L}_{\eta}%
)=P_{\leftrightarrow}^{\dag}(\mathcal{L}_{\eta})=-\log(1-\eta),\\
Q_{\leftrightarrow}(\mathcal{A}_{G})  &  =Q_{\leftrightarrow}^{\dag
}(\mathcal{A}_{G})=P_{\leftrightarrow}(\mathcal{A}_{G})=P_{\leftrightarrow
}^{\dag}(\mathcal{A}_{G})=\log\!\left(  \frac{G}{G-1}\right)  .
\end{align}

\end{corollary}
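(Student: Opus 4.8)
The plan is to derive the Corollary from the finite-blocklength bounds of Theorem~\ref{thm:bosonic-bounds} via the standard principle that a converse bound whose correction terms vanish as $n\to\infty$ certifies a strong converse rate. To fix ideas, consider the thermalizing channel $\mathcal{L}_{\eta,N_{B}}$ with $N_{B}>0$ and set $r\equiv-\log\!\left(\left(1-\eta\right)\eta^{N_{B}}\right)-g(N_{B})$. First I would take an arbitrary sequence of $(n,P_{n},\varepsilon_{n})$ two-way assisted protocols with $\liminf_{n}P_{n}>r$ and argue by contradiction that $\varepsilon_{n}\to 1$. If not, there are $\delta\in(0,1)$ and a subsequence $(n_{k})$ with $\varepsilon_{n_{k}}\le 1-\delta$ for all $k$. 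Using that $\hat{P}^{\leftrightarrow}_{\mathcal{N}}(n,\cdot)$ is monotone non-decreasing, that every achievable $(n,P,\varepsilon)$ triple obeys $P\le\hat{P}^{\leftrightarrow}_{\mathcal{N}}(n,\varepsilon)$, and the bound \eqref{eq:CLT-bound-thermal}, I obtain
\[
P_{n_{k}}\le\hat{P}^{\leftrightarrow}_{\mathcal{L}_{\eta,N_{B}}}(n_{k},1-\delta)\le r+\sqrt{\frac{2V_{\mathcal{L}_{\eta,N_{B}}}}{n_{k}\delta}}+\frac{C(1-\delta)}{n_{k}},
\]
whose right-hand side tends to $r$, contradicting $\liminf_{k}P_{n_{k}}>r$. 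Hence $r$ is a strong converse rate, so $P^{\dagger}_{\leftrightarrow}(\mathcal{L}_{\eta,N_{B}})\le r$ by definition of the infimum over strong converse rates. The amplifier and additive-noise bounds follow in exactly the same way from \eqref{eq:CLT-bound-amp} and \eqref{eq:CLT-bound-add}, and the sharper statements $P^{\dagger}_{\leftrightarrow}(\mathcal{L}_{\eta})\le-\log(1-\eta)$ and $P^{\dagger}_{\leftrightarrow}(\mathcal{A}_{G})\le\log\!\left(G/(G-1)\right)$ follow from \eqref{eq:pure-loss-tight-bound}--\eqref{eq:amp-tight-bound}.

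For the assertion that these are strong converse rates for the constrained \emph{and} unconstrained private \emph{and} quantum capacities, I would observe two things. First, an energy-constrained protocol is a special case of an unconstrained one, so the bounds in Theorem~\ref{thm:bosonic-bounds}, which were established for unconstrained protocols via the teleportation reduction with arbitrarily energetic TMSV states, apply to the constrained setting as well, and the argument above goes through unchanged. Second, since $\hat{Q}^{\leftrightarrow}_{\mathcal{N}}(n,\varepsilon)\le\hat{P}^{\leftrightarrow}_{\mathcal{N}}(n,\varepsilon)$ by \eqref{eq:finite-Q-to-finite-P-1}, any quantity that is a strong converse rate for two-way assisted private communication is one for two-way assisted quantum communication, which also gives $Q^{\dagger}_{\leftrightarrow}(\mathcal{N})\le P^{\dagger}_{\leftrightarrow}(\mathcal{N})$ as recorded in the ordering relations.

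To obtain the strong converse property for $\mathcal{L}_{\eta}$ and $\mathcal{A}_{G}$, I would chain the upper bounds just proven with the ordering relations and the exact capacity values \eqref{eq:pure-loss-capacity-inf-energy}--\eqref{eq:amp-capacity-inf-energy}. For the pure-loss channel, combining $P^{\dagger}_{\leftrightarrow}(\mathcal{L}_{\eta})\le-\log(1-\eta)$ with $Q_{\leftrightarrow}\le Q^{\dagger}_{\leftrightarrow}\le P^{\dagger}_{\leftrightarrow}$ and $P_{\leftrightarrow}\le P^{\dagger}_{\leftrightarrow}$ and the equality $Q_{\leftrightarrow}(\mathcal{L}_{\eta})=P_{\leftrightarrow}(\mathcal{L}_{\eta})=-\log(1-\eta)$ yields
\[
-\log(1-\eta)=Q_{\leftrightarrow}(\mathcal{L}_{\eta})\le Q^{\dagger}_{\leftrightarrow}(\mathcal{L}_{\eta})\le P^{\dagger}_{\leftrightarrow}(\mathcal{L}_{\eta})\le-\log(1-\eta),
\]
so all four quantities coincide; the amplifier case is identical using \eqref{eq:amp-capacity-inf-energy}. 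I do not expect a genuine obstacle in this Corollary: the only points demanding care are the monotonicity-in-$\varepsilon$ and subsequence bookkeeping in the strong converse argument, checking that the Chebyshev-type correction terms of Theorem~\ref{thm:bosonic-bounds} are $o(1)$ uniformly once $\varepsilon$ is bounded away from $1$, and keeping the constrained/unconstrained distinction consistent when invoking \eqref{eq:pure-loss-capacity-inf-energy}--\eqref{eq:amp-capacity-inf-energy}.
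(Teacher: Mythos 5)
Your proposal is correct and follows the same route the paper intends: the corollary is an immediate consequence of Theorem~\ref{thm:bosonic-bounds} obtained by letting $n\to\infty$ in the finite-blocklength bounds (so that the $\sqrt{1/n}$ and $1/n$ correction terms vanish for any $\varepsilon$ bounded away from $1$), combined with the ordering relations among the $P$, $P^{\dagger}$, $Q$, $Q^{\dagger}$ quantities and the known capacity formulas \eqref{eq:pure-loss-capacity-inf-energy}--\eqref{eq:amp-capacity-inf-energy}. Your subsequence/contradiction bookkeeping simply makes explicit the limit argument the paper leaves implicit, and your handling of the constrained/unconstrained and private/quantum distinctions matches the paper's reasoning.
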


We note that the corollary above improves upon the upper bound on
$Q_{\leftrightarrow}^{\dag}(\mathcal{L}_{\eta})$ and $Q_{\leftrightarrow
}^{\dag}(\mathcal{A}_{G})$ that one gets by applying the transposition bound
\cite{HW01,Mueller-Hermes2015} (in fact there cannot be any further
improvements of the result stated above due to the equalities).

\section{Conclusion}

\label{sec:concl}

We have outlined a general approach for obtaining converse bounds for the
private transmission capabilities of a quantum channel, which builds strongly
on the notion of a private state \cite{HHHO05,HHHO09} and the relative entropy
of entanglement bound therein. We first obtained a general meta-converse bound
and then applied it to obtain strong converse and second-order bounds for
private communication, building upon the methods of \cite{TWW14} and
\cite{TBR15}. For several channels of interest, we can go a step further and
give precise characterizations, as was done in \cite{TBR15}. Notable examples
include the phase-insensitive bosonic channels as well. In particular, we have
established the strong converse property for the unconstrained private and
quantum capacities of the pure-loss and quantum-limited amplifier channels, in
addition to some converse bounds for more general phase-insensitive bosonic
channels. Several of these bounds are relevant for understanding the
limitations of quantum key distribution protocols performed over such channels.

Going forward from here, it is desirable to obtain a second-order expansion of
the achievability results from \cite{ieee2005dev,1050633}. Progress in the
classical case is available in \cite{Tan12,Yang2016}, but the problem seems
generally open there as well. It is also known that there are quantum channels
which have zero quantum capacity but non-zero private capacity
\cite{HHHLO08,PhysRevLett.100.110502}, and a second-order analysis might
further elucidate this phenomenon. It would be interesting as well to prove
that the squashed entanglement of a quantum channel $\mathcal{N}$\ is an upper
bound on $Q_{\leftrightarrow}^{\dag}(\mathcal{N})$ and $P_{\leftrightarrow
}^{\dag}(\mathcal{N})$ (this question has remained open since \cite{TGW14IEEE}).

\bigskip

\textbf{Acknowledgements.} We are grateful to Siddhartha Das, Nilanjana Datta,
Eleni Diamanti, Kenneth Goodenough, Michal Horodecki, Felix Leditzky, Will
Matthews, Alexander M\"{u}ller-Hermes, Yan Pautrat, Stefano Pirandola, Joe
Renes, David Sutter, Masahiro Takeoka, and Stephanie Wehner for helpful
discussions. MB\ and MT\ thank the Hearne Institute for Theoretical Physics at
Louisiana State University for hosting them for a research visit in March
2016. MB acknowledges funding provided by the Institute for Quantum
Information and Matter, an NSF Physics Frontiers Center (NFS Grant
PHY-1125565) with support of the Gordon and Betty Moore Foundation
(GBMF-12500028). Additional funding support was provided by the ARO grant for
Research on Quantum Algorithms at the IQIM (W911NF-12-1-0521). MT is funded by
a University of Sydney Postdoctoral Fellowship and acknowledges support from
the ARC Centre of Excellence for Engineered Quantum Systems (EQUS). MMW
acknowledges the NSF under Award No.~CCF-1350397.

\appendix

\section{Covariant channels are teleportation simulable}

\label{app:cov->TP-sim}In this appendix, we give a proof of
Proposition~\ref{prop:cov->TP-sim}: any covariant channel, as defined in
Definition~\ref{def:covariant-channel}, is teleportation simulable.

Let $\mathcal{N}:\mathcal{L}(\mathcal{H}_{A})\rightarrow\mathcal{L}%
(\mathcal{H}_{B})$ be a quantum channel, and let $G$ be a group with unitary
representations $U_{A}^{g}$ and $V_{B}^{g}$ for $g\in G$, such that%
\begin{align}
\frac{1}{\left\vert G\right\vert }\sum_{g}U_{A}^{g}X_{A}\left(  U_{A}%
^{g}\right)  ^{\dag}  &  =\operatorname{Tr}\{X_{A}\}\pi_{A}%
,\label{eq:one-design-cond}\\
\mathcal{N}_{A\rightarrow B}(U_{A}^{g}X_{A}\left(  U_{A}^{g}\right)  ^{\dag})
&  =V_{B}^{g}\mathcal{N}_{A\rightarrow B}(X_{A})\left(  V_{B}^{g}\right)
^{\dag},
\end{align}
where $X_{A}\in\mathcal{L}(\mathcal{H}_{A})$ and $\pi$ denotes the maximally
mixed state. Consider that%
\begin{equation}
\frac{1}{\left\vert G\right\vert }\sum_{g}U_{A^{\prime}}^{g}\Phi_{A^{\prime}%
A}\left(  U_{A^{\prime}}^{g}\right)  ^{\dag}=\pi_{A^{\prime}}\otimes\pi_{A},
\label{eq:max-ent-cov-action}%
\end{equation}
where $\Phi$ denotes a maximally entangled state and $A^{\prime}$ is a system
isomorphic to $A$. Note that in order for $\{U_{A}^{g}\}$ to satisfy
\eqref{eq:one-design-cond}, it is necessary that $\left\vert A\right\vert
^{2}\leq\left\vert G\right\vert $ \cite{AMTW00}. Consider the POVM
$\{E_{A^{\prime}A}^{g}\}_{g}$, with $A^{\prime}$ a system isomorphic to $A$
and each element $E_{A^{\prime}A}^{g}$ defined as%
\begin{equation}
E_{A^{\prime}A}^{g}\equiv\frac{\left\vert A\right\vert ^{2}}{\left\vert
G\right\vert }U_{A^{\prime}}^{g}\Phi_{A^{\prime}A}\left(  U_{A^{\prime}}%
^{g}\right)  ^{\dag}.
\end{equation}
It follows from the fact that $\left\vert A\right\vert ^{2}\leq\left\vert
G\right\vert $ and \eqref{eq:max-ent-cov-action}\ that $\{E_{AA^{\prime}}%
^{g}\}_{g}$ is a valid POVM.

The simulation of the channel $\mathcal{N}_{A\rightarrow B}$ via teleportation
begins with a state $\rho_{A^{\prime}}$ and a shared resource $\omega
_{AB}\equiv\mathcal{N}_{A^{\prime\prime}\rightarrow B}(\Phi_{AA^{\prime\prime
}})$. The desired outcome is for Bob to receive the state $\mathcal{N}%
_{A\rightarrow B}(\rho_{A})$ and for the protocol to work independently of the
input state $\rho_{A}$. The first step is for Alice to perform the measurement
$\{E_{A^{\prime}A}^{g}\}_{g}$ on systems $A^{\prime}A$ and then send the
outcome $g$\ to Bob. Based on the outcome $g$, Bob then performs $V_{B}^{g}$.
The following analysis demonstrates that this protocol works, by simplifying
the form of the post-measurement state:%
\begin{align}
\left\vert G\right\vert \operatorname{Tr}_{AA^{\prime}}\{E_{A^{\prime}A}%
^{g}(\rho_{A^{\prime}}\otimes\omega_{AB})\}  &  =\left\vert A\right\vert
^{2}\operatorname{Tr}_{AA^{\prime}}\{U_{A^{\prime}}^{g}|\Phi\rangle
_{A^{\prime}A}\langle\Phi|_{A^{\prime}A}\left(  U_{A^{\prime}}^{g}\right)
^{\dag}(\rho_{A^{\prime}}\otimes\omega_{AB})\}\\
&  =\left\vert A\right\vert ^{2}\langle\Phi|_{A^{\prime}A}\left(
U_{A^{\prime}}^{g}\right)  ^{\dag}(\rho_{A^{\prime}}\otimes\omega
_{AB})U_{A^{\prime}}^{g}|\Phi\rangle_{A^{\prime}A}\\
&  =\left\vert A\right\vert ^{2}\langle\Phi|_{A^{\prime}A}\left(
U_{A^{\prime}}^{g}\right)  ^{\dag}\rho_{A^{\prime}}U_{A^{\prime}}^{g}%
\otimes\mathcal{N}_{A^{\prime\prime}\rightarrow B}(\Phi_{AA^{\prime\prime}%
}))|\Phi\rangle_{A^{\prime}A}\\
&  =\left\vert A\right\vert ^{2}\langle\Phi|_{A^{\prime}A}\left[  \left(
U_{A}^{g}\right)  ^{\dag}\rho_{A}U_{A}^{g}\right]  ^{\ast}\mathcal{N}%
_{A^{\prime\prime}\rightarrow B}\left(  \Phi_{AA^{\prime\prime}}\right)
|\Phi\rangle_{A^{\prime}A}. \label{eq:cov-tp-simul-block-1}%
\end{align}
The first three equalities follow by substitution and some rewriting. The
fourth equality follows from the fact that%
\begin{equation}
\langle\Phi|_{A^{\prime}A}M_{A^{\prime}}=\langle\Phi|_{A^{\prime}A}M_{A}%
^{\ast} \label{eq:ricochet-prop}%
\end{equation}
for any operator $M$ and where $*$ denotes the complex conjugate, taken with
respect to the basis in which $\vert\Phi\rangle_{A^{\prime}A}$ is defined.
Continuing, we have that%
\begin{align}
\eqref{eq:cov-tp-simul-block-1}  &  =\left\vert A\right\vert \operatorname{Tr}%
_{A}\left\{  \left[  \left(  U_{A}^{g}\right)  ^{\dag}\rho_{A}U_{A}%
^{g}\right]  ^{\ast}\mathcal{N}_{A^{\prime\prime}\rightarrow B}\left(
\Phi_{AA^{\prime\prime}}\right)  \right\} \\
&  =\left\vert A\right\vert \operatorname{Tr}_{A}\left\{  \mathcal{N}%
_{A^{\prime\prime}\rightarrow B}\left(  \left[  \left(  U_{A^{\prime\prime}%
}^{g}\right)  ^{\dag}\rho_{A^{\prime\prime}}U_{A^{\prime\prime}}^{g}\right]
^{\dag}\Phi_{AA^{\prime\prime}}\right)  \right\} \\
&  =\mathcal{N}_{A^{\prime\prime}\rightarrow B}\left(  \left[  \left(
U_{A^{\prime\prime}}^{g}\right)  ^{\dag}\rho_{A^{\prime\prime}}U_{A^{\prime
\prime}}^{g}\right]  ^{\dag}\right) \\
&  =\mathcal{N}_{A^{\prime\prime}\rightarrow B}\left(  \left(  U_{A^{\prime
\prime}}^{g}\right)  ^{\dag}\rho_{A^{\prime\prime}}U_{A^{\prime\prime}}%
^{g}\right) \\
&  =\left(  V_{B}^{g}\right)  ^{\dag}\mathcal{N}_{A^{\prime\prime}\rightarrow
B}\left(  \rho_{A^{\prime\prime}}\right)  V_{B}^{g}.
\end{align}
The first equality follows because $\left\vert A\right\vert \langle
\Phi|_{A^{\prime}A}\left(  I_{A^{\prime}}\otimes M_{AB}\right)  |\Phi
\rangle_{A^{\prime}A}=\operatorname{Tr}_{A}\{M_{AB}\}$ for any operator
$M_{AB}$. The second equality follows by applying the conjugate transpose of
\eqref{eq:ricochet-prop}. The final equality follows from the covariance
property of the channel.

Thus, if Bob finally performs the unitary $V_{B}^{g}$ upon receiving $g$ via a
classical channel from Alice, then the output of the protocol is
$\mathcal{N}_{A^{\prime\prime}\rightarrow B}\left(  \rho_{A^{\prime\prime}%
}\right)  $, so that this protocol simulates the action of the channel
$\mathcal{N}$ on the state $\rho$.

\section{Definitions of privacy and converse bounds}

\label{app:priv-def-str-conv}One of the main results of this appendix is to
show that a converse bound with the definition of privacy from
\eqref{eq:code-performance-fid} implies a converse bound for a quantum
generalization of the definition of privacy from \cite{HTW14}. Note that we
need to modify the definition in \eqref{eq:code-performance-fid} slightly as
given in Definition~\ref{def:privacy-marginal}\ below, but all of our converse
bounds apply for this modified definition of privacy.

We begin by recalling the two notions of privacy. Consider a tripartite state
$\rho_{KLE}$ of the following form:%
\begin{equation}
\rho_{KLE}=\frac{1}{\left\vert K\right\vert }\sum_{k,l}p(l|k)|k\rangle\langle
k|_{K}\otimes|l\rangle\langle l|_{L}\otimes\rho_{E}^{k,l},
\end{equation}
which is the kind of state that gets generated at the end of a secret-key
transmission protocol. Specifically, the classical variable in system $K$ is
generated uniformly at random, and at the end of the protocol the receiver
decodes it in the system $L$, which is intended to be one share of a secret
key correlated with system $K$. The system $E$ represents the eavesdropper's
system, which can be correlated with systems $K$ and $L$.

\begin{definition}
The state $\rho_{KLE}$ is an $(\varepsilon,\delta)$ Type~I\ secret-key state
for $\varepsilon,\delta\in\left[  0,1\right]  $ if%
\begin{align}
\Pr\left\{  K\neq L\right\}   &  \leq\varepsilon,\\
\frac{1}{2}\left\Vert \rho_{KE}-\pi_{K}\otimes\rho_{E}\right\Vert _{1}  &
\leq\delta,
\end{align}
where $\Pr\left\{  K\neq L\right\}  =\frac{1}{\left\vert K\right\vert }%
\sum_{l\neq k}p(l|k)$ and $\pi_{K}=I_{K}/\left\vert K\right\vert $ is the
maximally mixed state.
\end{definition}

\begin{definition}
\label{def:privacy-marginal}The state $\rho_{KLE}$ is an $\eta$ Type~II
secret-key state for $\eta\in\left[  0,1\right]  $ if%
\begin{equation}
F(\rho_{KLE},\overline{\Phi}_{KL}\otimes\rho_{E})=1-\eta,
\end{equation}
where $\overline{\Phi}_{KL}$ is the maximally classically correlated state,
defined as%
\begin{equation}
\overline{\Phi}_{KL}\equiv\frac{1}{\left\vert K\right\vert }\sum_{k}%
|k\rangle\langle k|_{K}\otimes|k\rangle\langle k|_{L}.
\end{equation}

\end{definition}

Observe that the main difference between
Definitions~\ref{def:tripartite-key-state-1-def} and
\ref{def:privacy-marginal} is that in the former, we allow for the ideal state
of the eavesdropper's system to be arbitrary, whereas in the latter, we demand
that the ideal state of the eavesdropper's system is equal to the marginal
$\rho_{E}$\ of $\rho_{KLE}$. Note that this constraint does not affect any of
our converse bounds in the main text, but here we show how it allows us to
connect to other notions of privacy in the literature.

\begin{proposition}
For the definitions given above, the following bound holds%
\begin{equation}
1-\sqrt{F(\rho_{KLE},\overline{\Phi}_{KL}\otimes\rho_{E})}\leq\Pr\{K\neq
L\}+\frac{1}{2}\left\Vert \rho_{KE}-\pi_{K}\otimes\rho_{E}\right\Vert _{1}.
\label{eq:str-conv-relation}%
\end{equation}
Thus, an $\eta$ Type~II\ secret-key state with $\eta\rightarrow1$ is an
$\left(  \varepsilon,\delta\right)  $ Type~I\ secret-key state with
$\varepsilon+\delta\rightarrow\xi\geq1$.
\end{proposition}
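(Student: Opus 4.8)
The plan is to establish the inequality \eqref{eq:str-conv-relation} and then observe that the final "Thus" clause follows immediately by taking limits. The right-hand side involves two quantities: the probability $\Pr\{K\neq L\}$ of a decoding error, and the trace-distance security parameter $\frac{1}{2}\|\rho_{KE}-\pi_K\otimes\rho_E\|_1$. The left-hand side is $1-\sqrt{F(\rho_{KLE},\overline{\Phi}_{KL}\otimes\rho_E)}$, which up to the square root is the purified-distance-type gap to an ideal secret-key state with the eavesdropper's marginal fixed. The natural strategy is a triangle inequality in the sine distance $C(\omega,\tau)=\sqrt{1-F(\omega,\tau)}$ (which is a metric, as recalled in Section~\ref{sec:prelim}), interpolating between $\rho_{KLE}$ and $\overline{\Phi}_{KL}\otimes\rho_E$ through an intermediate "corrected" state.

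First I would introduce the intermediate state $\omega_{KLE}\equiv\frac{1}{|K|}\sum_k |k\rangle\langle k|_K\otimes|k\rangle\langle k|_L\otimes\rho_E^{k}$, where $\rho_E^k$ is the eavesdropper's state conditioned on $K=k$ (equivalently, replace $L$ by a perfect copy of $K$ but keep the $K$–$E$ correlations). Then $\rho_{KE}=\omega_{KE}$, so the distance from $\omega_{KLE}$ to $\overline{\Phi}_{KL}\otimes\rho_E$ is controlled purely by the security term: monotonicity of trace distance under the channel that discards $L$ and re-copies $K$ gives $\frac12\|\omega_{KLE}-\overline{\Phi}_{KL}\otimes\rho_E\|_1 \le \frac12\|\rho_{KE}-\pi_K\otimes\rho_E\|_1$, and since $\sqrt{1-F}\le \frac12\|\cdot\|_1$ in general (Fuchs–van de Graaf), $C(\omega_{KLE},\overline{\Phi}_{KL}\otimes\rho_E)\le \frac12\|\rho_{KE}-\pi_K\otimes\rho_E\|_1$. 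For the other leg, $C(\rho_{KLE},\omega_{KLE})$: these two states differ only in the joint distribution of $(K,L)$ and the conditional $\rho_E^{k,l}$. I would bound $C(\rho_{KLE},\omega_{KLE})$ by the probability that $K\neq L$; concretely, on the event $K=L$ the two states already agree (conditioned on $K=L=k$, $\rho_{KLE}$ has $E$-state $\rho_E^{k,k}$ while $\omega$ has $\rho_E^k=\sum_l p(l|k)\rho_E^{k,l}$ — these need not agree, so one must be slightly more careful). The cleanest route is to note both states can be obtained from a common purification/classical-quantum structure and apply the standard fact that if two cq-states share a classical register whose distributions have total variation $\le p$ then their sine distance is $\le\sqrt{p}$ or their trace distance is $\le p+(\text{conditional gap})$; since the conditional $E$-states on the complement contribute at most the same $p=\Pr\{K\neq L\}$ by convexity, one gets $C(\rho_{KLE},\omega_{KLE})\le \Pr\{K\neq L\}$, or at worst a $\sqrt{\Pr\{K\neq L\}}$ which still suffices after the final rescaling in the proposition statement.

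I would then combine via the triangle inequality for $C$:
\begin{equation}
1-\sqrt{F(\rho_{KLE},\overline{\Phi}_{KL}\otimes\rho_E)} \le C(\rho_{KLE},\overline{\Phi}_{KL}\otimes\rho_E)\le C(\rho_{KLE},\omega_{KLE})+C(\omega_{KLE},\overline{\Phi}_{KL}\otimes\rho_E),
\end{equation}
where the first step uses $1-x\le 1-x^2=\sqrt{1-x^2}\cdot(\text{something})$ — more precisely $1-\sqrt{F}\le\sqrt{1-F}=C$ since $F\in[0,1]$ implies $\sqrt F\ge F$, hence $1-\sqrt F\le 1-F\le\sqrt{1-F}$. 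Plugging in the two bounds from the previous paragraph yields \eqref{eq:str-conv-relation}. Finally, if $\eta=1-F\to 1$ then $F\to 0$, so the left side $1-\sqrt F\to 1$, forcing $\Pr\{K\neq L\}+\frac12\|\rho_{KE}-\pi_K\otimes\rho_E\|_1\to\xi\ge 1$, i.e. the state is an $(\varepsilon,\delta)$ Type~I secret-key state with $\varepsilon+\delta\ge\xi\ge1$ in the limit.

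The main obstacle I anticipate is the bound on $C(\rho_{KLE},\omega_{KLE})$: making precise that replacing $L$ by a perfect copy of $K$ (and correspondingly averaging the conditional $E$-states) costs at most $\Pr\{K\neq L\}$ in sine distance requires care because the conditional eavesdropper states $\rho_E^{k,l}$ can be wildly different, so a naive "the states agree on $K=L$" argument does not immediately close. The fix is to work at the level of trace distance first — where $\frac12\|\rho_{KLE}-\omega_{KLE}\|_1\le\Pr\{K\neq L\}$ follows cleanly by writing both as classical mixtures over $(k,l)$, splitting off the diagonal $l=k$ terms (which must be handled by noting $\omega$'s conditional $\rho_E^k$ differs from $\rho_E^{k,k}$ only through off-diagonal mass, again of total weight $\le\Pr\{K\neq L\}$), and bounding the off-diagonal contribution by twice its probability — and then pass to $C\le\frac12\|\cdot\|_1$. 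All the fidelity/trace-distance relations and the metric property of $C$ needed here are already recalled in Section~\ref{sec:prelim}, so no new machinery is required.
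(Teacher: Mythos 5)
Your overall strategy---interpolating through the intermediate state in which $L$ is replaced by a perfect copy of $K$ (your $\omega_{KLE}$ is the same operator as the paper's $\gamma_{KLE}$), applying a triangle inequality, and controlling the two legs by the error probability and the security parameter respectively---is the right one, and your fallback plan of working in trace distance is essentially the paper's proof. But your primary route, via the triangle inequality for the sine distance $C=\sqrt{1-F}$, has a genuine gap: you invoke ``$\sqrt{1-F}\leq\frac{1}{2}\Vert\cdot\Vert_{1}$ (Fuchs--van de Graaf)'' to bound each leg, and this inequality is backwards. Fuchs--van de Graaf gives $1-\sqrt{F(\rho,\sigma)}\leq\frac{1}{2}\Vert\rho-\sigma\Vert_{1}\leq\sqrt{1-F(\rho,\sigma)}$, i.e.\ $C\geq\frac{1}{2}\Vert\cdot\Vert_{1}$ (with equality for pure states), not $\leq$. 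With the correct direction, the best general bound on a leg of the $C$-triangle inequality in terms of trace distance carries a square root ($1-F\leq\Vert\cdot\Vert_1$, so $C\leq\sqrt{\Vert\cdot\Vert_1}$), which would put $\sqrt{\Pr\{K\neq L\}}$ and $\sqrt{\delta}$ on the right-hand side---strictly weaker than the claimed linear bound. The same wrong-direction inequality reappears at the end of your last paragraph (``then pass to $C\leq\frac{1}{2}\Vert\cdot\Vert_{1}$''), so the fix as you state it does not close either.

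The repair is simple and is what the paper does: apply the triangle inequality to the trace distance itself, $\Vert\rho_{KLE}-\overline{\Phi}_{KL}\otimes\rho_{E}\Vert_{1}\leq\Vert\rho_{KLE}-\gamma_{KLE}\Vert_{1}+\Vert\gamma_{KLE}-\overline{\Phi}_{KL}\otimes\rho_{E}\Vert_{1}$, and use the \emph{other} Fuchs--van de Graaf direction, $1-\sqrt{F}\leq\frac{1}{2}\Vert\cdot\Vert_{1}$, exactly once on the full pair of states. Also, your worry about the conditional eavesdropper states $\rho_{E}^{k,k}$ versus $\rho_{E}^{k}$ on the diagonal is unnecessary: writing the intermediate state as $\gamma_{KLE}=\frac{1}{|K|}\sum_{k,l}p(l|k)|k\rangle\langle k|_{K}\otimes|k\rangle\langle k|_{L}\otimes\rho_{E}^{k,l}$ (identical to your $\omega_{KLE}$ after summing over $l$), the difference $\rho_{KLE}-\gamma_{KLE}$ has its $l=k$ terms cancel identically, and convexity of the trace norm gives $\frac{1}{2}\Vert\rho_{KLE}-\gamma_{KLE}\Vert_{1}\leq\Pr\{K\neq L\}$ directly. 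The second leg is then an exact equality, $\Vert\gamma_{KLE}-\overline{\Phi}_{KL}\otimes\rho_{E}\Vert_{1}=\Vert\rho_{KE}-\pi_{K}\otimes\rho_{E}\Vert_{1}$, obtained by a controlled subtraction of $K$ from $L$ followed by discarding the now-decoupled $L$ register.
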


\begin{proof}
We follow the proof of \cite[Theorem~4.1]{PR14} closely. Let $\gamma_{KLE}$
denote the following state obtained by discarding the $L$ system of
$\rho_{KLE}$ and copying the contents of the $K$ system to the $L$ system:%
\begin{equation}
\gamma_{KLE}\equiv\frac{1}{\left\vert K\right\vert }\sum_{k,l}p(l|k)|k\rangle
\langle k|_{K}\otimes|k\rangle\langle k|_{L}\otimes\rho_{E}^{k,l}.
\end{equation}
Then by the triangle inequality consider that%
\begin{equation}
\left\Vert \rho_{KLE}-\overline{\Phi}_{KL}\otimes\rho_{E}\right\Vert _{1}%
\leq\left\Vert \rho_{KLE}-\gamma_{KLE}\right\Vert _{1}+\left\Vert \gamma
_{KLE}-\overline{\Phi}_{KL}\otimes\rho_{E}\right\Vert _{1}.
\end{equation}
The following holds%
\begin{align}
\left\Vert \gamma_{KLE}-\overline{\Phi}_{KL}\otimes\rho_{E}\right\Vert _{1}
&  =\left\Vert \gamma_{KE}\otimes|0\rangle\langle0|_{L}-\pi_{K}\otimes
|0\rangle\langle0|_{L}\otimes\rho_{E}\right\Vert _{1}\\
&  =\left\Vert \gamma_{KE}-\pi_{K}\otimes\rho_{E}\right\Vert _{1},
\end{align}
because we can perform an invertible controlled-subtraction from system $K$ to
system $L$, giving the first equality, and then we can discard the system $L$
because it does not change the trace distance. Now consider that%
\begin{align}
\left\Vert \rho_{KLE}-\gamma_{KLE}\right\Vert _{1}  &  =\left\Vert \frac
{1}{\left\vert K\right\vert }\sum_{k,l}p(l|k)|k\rangle\langle k|_{K}%
\otimes\left(  |l\rangle\langle l|_{L}-|k\rangle\langle k|_{L}\right)
\otimes\rho_{E}^{k,l}\right\Vert _{1}\\
&  \leq\frac{1}{\left\vert K\right\vert }\sum_{k,l}p(l|k)\left\Vert
|k\rangle\langle k|_{K}\otimes\left(  |l\rangle\langle l|_{L}-|k\rangle\langle
k|_{L}\right)  \otimes\rho_{E}^{k,l}\right\Vert _{1}\\
&  =\frac{1}{\left\vert K\right\vert }\sum_{k,l}p(l|k)\left\Vert
|l\rangle\langle l|_{L}-|k\rangle\langle k|_{L}\right\Vert _{1}\\
&  =\frac{2}{\left\vert K\right\vert }\sum_{k\neq l}p(l|k)=2\Pr\{K\neq L\},
\end{align}
where the inequality follows from convexity of the trace norm. Using a well
known relation between fidelity and trace distance \cite{FG98}\ and combining
with the above, we find that%
\begin{align}
1-\sqrt{F(\rho_{KLE},\overline{\Phi}_{KL}\otimes\rho_{E})}  &  \leq\frac{1}%
{2}\left\Vert \rho_{KLE}-\overline{\Phi}_{KL}\otimes\rho_{E}\right\Vert _{1}\\
&  \leq\Pr\{K\neq L\}+\frac{1}{2}\left\Vert \rho_{KE}-\pi_{K}\otimes\rho
_{E}\right\Vert _{1},
\end{align}
concluding the proof.
\end{proof}

As a consequence of the above proposition, if there is a sequence of private
communication protocols such that $\eta\rightarrow1$, then by the bound in
\eqref{eq:str-conv-relation}, we necessarily have that $\varepsilon
+\delta\rightarrow\xi\geq1$. So our approach in
Section~\ref{sec:strong-converse}\ gets strong converse rates for all
$\varepsilon$ and $\delta$ such that $\varepsilon+\delta<1$, which is the same
regime for which the authors of \cite{HTW14}\ were able to establish strong
converse rates for the classical wiretap channel.

We can also show the following alternate relation between the two notions of privacy:

\begin{proposition}
For the definitions given above, the following bounds hold%
\begin{align}
\Pr\left\{  K\neq L\right\}   &  \leq\sqrt{1-F(\rho_{KLE},\overline{\Phi}%
_{KL}\otimes\rho_{E})},\label{eq:err-correctness-to-privacy}\\
\frac{1}{2}\left\Vert \rho_{KE}-\pi_{K}\otimes\rho_{E}\right\Vert _{1}  &
\leq\sqrt{1-F(\rho_{KLE},\overline{\Phi}_{KL}\otimes\rho_{E})}.
\label{eq:security-to-privacy}%
\end{align}
Thus, an $\eta$ Type~II\ secret-key state is an $(\sqrt{\eta},\sqrt{\eta})$
Type~I\ secret-key state.
\end{proposition}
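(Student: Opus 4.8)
The plan is to obtain both inequalities by combining two facts recalled in Section~\ref{sec:prelim}: the fidelity is monotone non-decreasing under quantum channels (equivalently, the sine distance $C(\rho,\sigma)=\sqrt{1-F(\rho,\sigma)}$ is monotone non-increasing under channels), and the Fuchs--van de Graaf inequality $\frac{1}{2}\Vert\rho-\sigma\Vert_1 \le \sqrt{1-F(\rho,\sigma)}$ from \cite{FG98}. Throughout, write $\eta \equiv 1 - F(\rho_{KLE},\overline{\Phi}_{KL}\otimes\rho_{E})$.

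For \eqref{eq:err-correctness-to-privacy}, I would introduce the classical post-processing channel $\mathcal{T}$ that reads the (already classical) registers $K$ and $L$, outputs a single-qubit register equal to $|0\rangle\langle 0|$ when their values coincide and $|1\rangle\langle 1|$ when they differ, and discards system $E$; this is manifestly a quantum channel, being a classical relabelling followed by a partial trace. Since $\overline{\Phi}_{KL}$ is perfectly correlated, $\mathcal{T}(\overline{\Phi}_{KL}\otimes\rho_{E})=|0\rangle\langle 0|$, whereas $\mathcal{T}(\rho_{KLE})=(1-\Pr\{K\neq L\})|0\rangle\langle 0|+\Pr\{K\neq L\}|1\rangle\langle 1|$. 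Monotonicity of the fidelity under $\mathcal{T}$, together with the identity $F(\tau,|0\rangle\langle 0|)=\langle 0|\tau|0\rangle$ valid for a pure second argument, then gives $F(\rho_{KLE},\overline{\Phi}_{KL}\otimes\rho_{E})\le F(\mathcal{T}(\rho_{KLE}),|0\rangle\langle 0|)=1-\Pr\{K\neq L\}$, i.e.\ $\Pr\{K\neq L\}\le \eta$. Since $\eta\in[0,1]$ we have $\eta\le\sqrt{\eta}$, which yields \eqref{eq:err-correctness-to-privacy}.

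For \eqref{eq:security-to-privacy}, I would first apply the partial-trace channel $\operatorname{Tr}_{L}$. Because $\operatorname{Tr}_{L}(\overline{\Phi}_{KL}\otimes\rho_{E})=\pi_{K}\otimes\rho_{E}$ and $\operatorname{Tr}_{L}(\rho_{KLE})=\rho_{KE}$, monotonicity of the fidelity gives $F(\rho_{KE},\pi_{K}\otimes\rho_{E})\ge F(\rho_{KLE},\overline{\Phi}_{KL}\otimes\rho_{E})$, hence $\sqrt{1-F(\rho_{KE},\pi_{K}\otimes\rho_{E})}\le\sqrt{\eta}$. Applying the Fuchs--van de Graaf inequality to the pair $\rho_{KE}$ and $\pi_{K}\otimes\rho_{E}$ then gives $\frac{1}{2}\Vert\rho_{KE}-\pi_{K}\otimes\rho_{E}\Vert_1\le\sqrt{1-F(\rho_{KE},\pi_{K}\otimes\rho_{E})}\le\sqrt{\eta}$, which is \eqref{eq:security-to-privacy}. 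Combining the two bounds shows that an $\eta$ Type~II secret-key state is an $(\sqrt{\eta},\sqrt{\eta})$ Type~I secret-key state. I do not anticipate a genuine obstacle here: the only points requiring care are verifying that $\mathcal{T}$ is a legitimate channel (it is) and evaluating the fidelity against the pure state $|0\rangle\langle 0|$ in the first step, both of which are routine.
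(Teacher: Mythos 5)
Your proof is correct. The second inequality is handled exactly as in the paper: trace out $L$, use monotonicity of the fidelity, and apply the Fuchs--van de Graaf bound $\frac{1}{2}\Vert\rho-\sigma\Vert_{1}\leq\sqrt{1-F(\rho,\sigma)}$ to the pair $\rho_{KE}$ and $\pi_{K}\otimes\rho_{E}$. For the first inequality, however, you take a genuinely different route. The paper starts from the exact identity $\Pr\{K\neq L\}=\frac{1}{2}\Vert\rho_{KL}-\overline{\Phi}_{KL}\Vert_{1}$ (which holds because both states are diagonal in the computational basis of $KL$) and then applies Fuchs--van de Graaf together with monotonicity of the fidelity under $\operatorname{Tr}_{E}$. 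You instead push both states through a comparison channel $\mathcal{T}$ that records whether $K=L$, and use monotonicity of the fidelity against the pure output $|0\rangle\langle 0|$; since $F(\tau,|0\rangle\langle 0|)=\langle 0|\tau|0\rangle$ under the paper's convention $F(\rho,\sigma)=\Vert\sqrt{\rho}\sqrt{\sigma}\Vert_{1}^{2}$, this yields the linear bound $\Pr\{K\neq L\}\leq 1-F(\rho_{KLE},\overline{\Phi}_{KL}\otimes\rho_{E})=\eta$, which you then weaken to $\sqrt{\eta}$ to match the statement. Your argument is thus slightly stronger on the error-probability side (it avoids the square-root loss incurred by Fuchs--van de Graaf and shows the state is in fact an $(\eta,\sqrt{\eta})$ Type~I key state), at the cost of introducing the auxiliary channel; the paper's route is more uniform in that both inequalities are dispatched by the same trace-distance-to-fidelity conversion.
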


\begin{proof}
The inequality in \eqref{eq:err-correctness-to-privacy}\ follows because%
\begin{align}
\Pr\left\{  K\neq L\right\}   &  =\frac{1}{2}\left\Vert \rho_{KL}%
-\overline{\Phi}_{KL}\right\Vert _{1}\\
&  \leq\sqrt{1-F(\rho_{KL},\overline{\Phi}_{KL})}\\
&  \leq\sqrt{1-F(\rho_{KLE},\overline{\Phi}_{KL}\otimes\rho_{E})}.
\end{align}
In the above, the first equality is well known and straightforward to verify.
The first inequality is a consequence of a well known relation between
fidelity and trace distance \cite{FG98}. The second inequality follows from
the monotonicity of fidelity. The inequality in \eqref{eq:security-to-privacy}
follows for similar reasons, because%
\begin{align}
\frac{1}{2}\left\Vert \rho_{KE}-\pi_{K}\otimes\rho_{E}\right\Vert _{1}  &
\leq\sqrt{1-F(\rho_{KE},\overline{\Phi}_{K}\otimes\rho_{E})}\\
&  \leq\sqrt{1-F(\rho_{KLE},\overline{\Phi}_{KL}\otimes\rho_{E})},
\end{align}
which concludes the proof.
\end{proof}

\section{One-shot distillable entanglement lower bound}

\label{app:one-shot-ent-dist-lower}Here we give a proof of
Proposition~\ref{prop:one-shot-ent-dist-lower}, regarding a lower bound on
one-shot distillable entanglement. Let $\rho_{ABE}\in\mathcal{D}%
(\mathcal{H}_{ABE})$ purify the state $\rho_{AE}$. Let $\mathcal{T}%
_{A\rightarrow A_{1}X_{A}}$ be the following quantum channel corresponding to
a quantum instrument:%
\begin{equation}
\mathcal{T}_{A\rightarrow A_{1}X_{A}}(\cdot)\equiv\sum_{x}\mathcal{V}%
_{A\rightarrow A_{1}}^{x}\left(  P_{A}^{x}(\cdot)P_{A}^{x}\right)
\otimes|x\rangle\langle x|_{X_{A}},
\end{equation}
where $\{P_{A}^{x}\}$ is a set of projectors such that $\sum_{x}P_{A}%
^{x}=I_{A}$ and each $\mathcal{V}_{A\rightarrow A_{1}}^{x}$ is an isometric
channel that isometrically embeds the subspace onto which $P_{A}^{x}$ projects
into $\mathcal{H}_{A_{1}}$. For simplicity, we assume here and in what follows
that $\left\vert A_{1}\right\vert $ divides $\left\vert A\right\vert $.

In this paragraph, we prove a \textquotedblleft non-smooth\textquotedblright%
\ bound and later convert this to a smooth-entropy bound. The non-smooth
decoupling theorem of \cite[Theorem~3.3]{Dupuis2014} then states that%
\begin{equation}
\int_{\mathbb{U}(A)}dU\ \left\Vert \mathcal{T}_{A\rightarrow A_{1}X_{A}}%
(U_{A}\rho_{AE}U_{A}^{\dag})-\tau_{A_{1}X_{A}}\otimes\rho_{E}\right\Vert
_{1}\leq2^{-\frac{1}{2}H_{2}(A|E)_{\rho}}2^{-\frac{1}{2}H_{2}(A^{\prime}%
|A_{1}X_{A})_{\tau}},
\end{equation}
where $H_{2}(C|D)_{\sigma}$ denotes the conditional collision entropy of a
bipartite state $\sigma_{CD}\in\mathcal{D}(\mathcal{H}_{CD})$, defined as%
\begin{equation}
H_{2}(C|D)_{\sigma}\equiv\sup_{\omega_{D}\in\mathcal{D}(\mathcal{H}_{D})}%
-\log\operatorname{Tr}\!\left\{  \left(  \omega_{D}^{-1/4}\sigma_{CD}%
\omega_{D}^{-1/4}\right)  ^{2}\right\}  .
\end{equation}
In the above, $\rho_{AE}$ is the reduction of $\rho_{ABE}$, and $\tau
_{A_{1}X_{A}}$ is the reduction of the following state:%
\begin{equation}
\mathcal{T}_{A\rightarrow A_{1}X_{A}}(\Phi_{AA^{\prime}})\equiv\tau
_{A^{\prime}A_{1}X_{A}}.
\end{equation}
Note under our assumption that $\left\vert A_{1}\right\vert $ divides
$\left\vert A\right\vert $, we have that%
\begin{align}
H_{2}(A^{\prime}|A_{1}X_{A})_{\tau} &  =-\log\left\vert A_{1}\right\vert ,\\
\tau_{A_{1}X_{A}} &  =\frac{I_{A_{1}}}{\left\vert A_{1}\right\vert }%
\otimes\frac{I_{X_{A}}}{\left\vert X_{A}\right\vert }.
\end{align}
If we choose $\left\vert A_{1}\right\vert $ as%
\begin{equation}
\log\left\vert A_{1}\right\vert =H_{2}(A|E)_{\rho}-2\log\!\left(  \frac
{1}{\varepsilon}\right)  ,
\end{equation}
then it follows from the above that%
\begin{equation}
\int_{\mathbb{U}(A)}dU\ \left\Vert \mathcal{T}_{A\rightarrow A_{1}X_{A}}%
(U_{A}\rho_{AE}U_{A}^{\dag})-\tau_{A_{1}X_{A}}\otimes\rho_{E}\right\Vert
_{1}\leq\varepsilon.
\end{equation}

This implies that there exists a unitary $U_{A}$ such that%
\begin{equation}
P\!\left(  \mathcal{T}_{A\rightarrow A_{1}X_{A}}(U_{A}\rho_{AE}U_{A}^{\dag
}),\frac{I_{A_{1}}}{\left\vert A_{1}\right\vert }\otimes\frac{I_{X_{A}}%
}{\left\vert X_{A}\right\vert }\otimes\rho_{E}\right)  \leq\sqrt{\varepsilon
},\label{eq:decoupling-cond-1}%
\end{equation}
for%
\begin{equation}
\log\left\vert A_{1}\right\vert =-H_{\max}(A|B)_{\rho}-2\log\!\left(  \frac
{1}{\varepsilon}\right)  ,
\end{equation}
due to the fact that%
\begin{equation}
H_{2}(A|E)_{\rho}\geq H_{\min}(A|E)_{\rho}=-H_{\max}(A|B)_{\rho},
\end{equation}
where%
\begin{equation}
H_{\min}(A|E)_{\rho}\equiv\sup_{\omega_{E}\in\mathcal{D}(\mathcal{H}_{E})}%
\sup\left\{  \lambda\in\mathbb{R}:\rho_{AE}\leq2^{-\lambda}I_{A}\otimes
\omega_{E}\right\}  ,
\end{equation}
and the equality $H_{\min}(A|E)_{\rho}=-H_{\max}(A|B)_{\rho}$ follows from the
duality result in \cite{KRS09}. Let%
\begin{align}
\sigma_{A_{1}EX_{A}} &  \equiv\mathcal{T}_{A\rightarrow A_{1}X_{A}}(U_{A}%
\rho_{AE}U_{A}^{\dag})\\
&  \equiv\sum_{x}p_{X}(x)|x\rangle\langle x|_{X_{A}}\otimes\sigma_{A_{1}E}%
^{x}.
\end{align}
By rewriting \eqref{eq:decoupling-cond-1} as%
\begin{equation}
\left[  \sum_{x}\sqrt{p_{X}(x)\frac{1}{\left\vert X_{A}\right\vert }}\sqrt
{F}\!\left(  \sigma_{A_{1}E}^{x},\frac{I_{A_{1}}}{\left\vert A_{1}\right\vert
}\otimes\rho_{E}\right)  \right]  ^{2}\geq1-\varepsilon,
\end{equation}
and applying Uhlmann's theorem, we find that there exists a set $\{\mathcal{U}%
_{B\rightarrow B_{1}\overline{B}}^{x}\}$\ of isometric channels such that%
\begin{equation}
\left[  \sum_{x}\sqrt{p_{X}(x)\frac{1}{\left\vert X_{A}\right\vert }}\sqrt
{F}\!\left(  \mathcal{U}_{B\rightarrow B_{1}\overline{B}}^{x}(\sigma_{A_{1}%
BE}^{x}),\Phi_{A_{1}B_{1}}\otimes\psi_{\overline{B}E}\right)  \right]
^{2}\geq1-\varepsilon,\label{eq:uhlmann-app-decou-dist}%
\end{equation}
where $\sigma_{A_{1}BE}^{x}$ is a conditional state arising from
$\mathcal{T}_{A\rightarrow A_{1}X_{A}}(U_{A}\rho_{ABE}U_{A}^{\dag})$ and
$\psi_{\overline{B}E}$ purifies $\rho_{E}$. Using the monotonicity of fidelity
under partial trace and rewriting \eqref{eq:uhlmann-app-decou-dist} in terms
of purified distance, we can conclude that there exists a channel
$\Lambda_{BX_{B}\rightarrow B_{1}}$, where $X_{B}$ is a classical copy of
$X_{A}$ sent over a classical channel to Bob, such that%
\begin{equation}
P(\Lambda_{BX_{B}\rightarrow B_{1}}(\sigma_{A_{1}BX_{B}}),\Phi_{A_{1}B_{1}%
})\leq\sqrt{\varepsilon}.
\end{equation}

Now pick $\overline{\rho}_{AB}\in\mathcal{B}^{\sqrt{\varepsilon}-\eta}%
(\rho_{AB})$ such that $H_{\max}^{\sqrt{\varepsilon}-\eta}(A|B)_{\rho}%
=H_{\max}(A|B)_{\overline{\rho}}$ (the ball $\mathcal{B}^{\sqrt{\varepsilon
}-\eta}(\rho_{AB})$ of states around $\rho_{AB}$ is with respect to purified
distance). Let $\overline{\rho}_{ABE}$ purify $\overline{\rho}_{AB}$. Then by
the non-smooth bound above, we find that%
\begin{equation}
P\!\left(  \overline{\sigma}_{A_{1}EX_{A}},\frac{I_{A_{1}}}{\left\vert
A_{1}\right\vert }\otimes\overline{\omega}_{X_{A}E}\right)  \leq\eta,
\end{equation}
for%
\begin{equation}
\log\left\vert A_{1}\right\vert =-H_{\max}^{\sqrt{\varepsilon}-\eta
}(A|B)_{\rho}-4\log\!\left(  \frac{1}{\eta}\right)  ,
\end{equation}
and where%
\begin{align}
\overline{\sigma}_{A_{1}EX_{A}} &  \equiv\mathcal{T}_{A\rightarrow A_{1}X_{A}%
}(U_{A}\overline{\rho}_{AE}U_{A}^{\dag}),\\
\overline{\omega}_{X_{A}E} &  \equiv\frac{I_{X_{A}}}{\left\vert X_{A}%
\right\vert }\otimes\overline{\rho}_{E}.
\end{align}
Then%
\begin{align}
P\!\left(  \sigma_{A_{1}EX_{A}},\frac{I_{A_{1}}}{\left\vert A_{1}\right\vert
}\otimes\overline{\omega}_{X_{A}E}\right)   &  \leq P\!\left(  \mathcal{T}%
_{A\rightarrow A_{1}X_{A}}(U_{A}\rho_{AE}U_{A}^{\dag}),\mathcal{T}%
_{A\rightarrow A_{1}X_{A}}(U_{A}\overline{\rho}_{AE}U_{A}^{\dag})\right)
\nonumber\\
&  \qquad+P\!\left(  \mathcal{T}_{A\rightarrow A_{1}X_{A}}(U_{A}\overline
{\rho}_{AE}U_{A}^{\dag}),\frac{I_{A_{1}}}{\left\vert A_{1}\right\vert }%
\otimes\overline{\omega}_{X_{A}E}\right)  \\
&  \leq P(\rho_{AE},\overline{\rho}_{AE})+\eta\\
&  \leq\left(  \sqrt{\varepsilon}-\eta\right)  +\eta=\sqrt{\varepsilon}.
\end{align}
Applying Uhlmann's theorem once again as done above and converting purified
distance to fidelity concludes the proof.

\section{Variance of the relative entropy of entanglement for
phase-insensitive Gaussian channels}

\label{app:sep-states-variances}In this appendix, we detail the calculation of
the variance of the relative entropy of entanglement for the phase-insensitive
Gaussian channels given in
\eqref{eq:thermal-channel}--\eqref{eq:additive-noise-channel}. In particular,
we establish the formulas given in
\eqref{eq:unc-rel-ent-var-thermal}--\eqref{eq:unc-rel-ent-var-add}. To begin
with and as reviewed in \cite{PLOB15}, given a two-mode state with covariance
matrix in standard form as in \cite[Eq.~(E1)]{BLTW16}, it is a separable state
if%
\begin{equation}
c\leq c_{\text{sep}}\equiv\sqrt{\left(  a-1/2\right)  \left(  b-1/2\right)  }.
\end{equation}
For any given $a$ and $b$, the two-mode Gaussian state in standard form and
having maximal correlations between the two modes has $c$ set to
$c_{\text{sep}}$ (the amount of correlations is quantified by quantum discord
as done in \cite{PSBCL14}). This choice turns out to be a guiding principle
for selecting a separable state closest in \textquotedblleft relative entropy
distance\textquotedblright\ to the state at the output of a given channel.

We begin by reviewing the various pairs of states from \cite{PLOB15}\ for
comparison. The state that we consider at the input of any of the channels is
the two-mode squeezed vacuum, which has zero mean and standard-form covariance
matrix%
\begin{equation}
V_{\text{in}}^{\mu}\equiv%
\begin{bmatrix}
\mu & c\\
c & \mu
\end{bmatrix}
\oplus%
\begin{bmatrix}
\mu & -c\\
-c & \mu
\end{bmatrix}
,
\end{equation}
where $\mu\geq1/2$ is the energy of the reduced state on a single mode
(directly related to the amount of entanglement in the state) and
$c\equiv\sqrt{\mu^{2}-1/4}$. Sending one mode of this state through the three
channels of interest (thermal, amplifier, additive-noise) leads to two-mode
states $\rho_{\eta,\omega}^{\mu}$, $\rho_{G,\omega}^{\mu}$, and $\rho_{\xi
}^{\mu}$ with the following respective covariance matrices:%
\begin{align}
V_{\eta,\omega}^{\mu}  &  \equiv%
\begin{bmatrix}
\mu & \sqrt{\eta}c\\
\sqrt{\eta}c & \eta\mu+\left(  1-\eta\right)  \omega
\end{bmatrix}
\oplus%
\begin{bmatrix}
\mu & -\sqrt{\eta}c\\
-\sqrt{\eta}c & \eta\mu+\left(  1-\eta\right)  \omega
\end{bmatrix}
,\\
V_{G,\omega}^{\mu}  &  \equiv%
\begin{bmatrix}
\mu & \sqrt{G}c\\
\sqrt{G}c & G\mu+\left(  G-1\right)  \omega
\end{bmatrix}
\oplus%
\begin{bmatrix}
\mu & -\sqrt{G}c\\
-\sqrt{G}c & G\mu+\left(  G-1\right)  \omega
\end{bmatrix}
,\\
V_{\xi}^{\mu}  &  \equiv%
\begin{bmatrix}
\mu & c\\
c & \mu+\xi
\end{bmatrix}
\oplus%
\begin{bmatrix}
\mu & -c\\
-c & \mu+\xi
\end{bmatrix}
,
\end{align}
where $\omega\equiv N_{B}+1/2\geq1/2$. Using the aforementioned guiding
principle, the resulting separable states $\sigma_{\eta,\omega}^{\mu}$,
$\sigma_{G,\omega}^{\mu}$, and $\sigma_{\xi}^{\mu}$ for evaluating the bounds
have zero mean and the following respective covariance matrices:%
\begin{align}
V_{\eta,\omega}^{\mu,\text{sep}}  &  \equiv%
\begin{bmatrix}
\mu & \sqrt{\eta}c_{1}\\
\sqrt{\eta}c_{1} & \eta\mu+\left(  1-\eta\right)  \omega
\end{bmatrix}
\oplus%
\begin{bmatrix}
\mu & -\sqrt{\eta}c_{1}\\
-\sqrt{\eta}c_{1} & \eta\mu+\left(  1-\eta\right)  \omega
\end{bmatrix}
,\label{eq:sep-state-thermal}\\
V_{G,\omega}^{\mu,\text{sep}}  &  \equiv%
\begin{bmatrix}
\mu & \sqrt{G}c_{2}\\
\sqrt{G}c_{2} & G\mu+\left(  G-1\right)  \omega
\end{bmatrix}
\oplus%
\begin{bmatrix}
\mu & -\sqrt{G}c_{2}\\
-\sqrt{G}c_{2} & G\mu+\left(  G-1\right)  \omega
\end{bmatrix}
,\\
V_{\xi}^{\mu,\text{sep}}  &  \equiv%
\begin{bmatrix}
\mu & c_{3}\\
c_{3} & \mu+\xi
\end{bmatrix}
\oplus%
\begin{bmatrix}
\mu & -c_{3}\\
-c_{3} & \mu+\xi
\end{bmatrix}
,
\end{align}
where%
\begin{align}
c_{1}  &  \equiv\sqrt{\left(  \mu-1/2\right)  \left(  \eta\mu+\left(
1-\eta\right)  \omega-1/2\right)  },\\
c_{2}  &  \equiv\sqrt{\left(  \mu-1/2\right)  \left(  G\mu+\left(  G-1\right)
\omega-1/2\right)  },\\
c_{3}  &  \equiv\sqrt{\left(  \mu-1/2\right)  \left(  \mu+\xi-1/2\right)  }.
\end{align}

Using the formula from \cite[Lemma~3]{BLTW16} and relying on a computer
algebra package to handle tedious algebraic
manipulations,\footnote{Mathematica source files are included in the arXiv
posting of this paper.} we find the following expansions of the various
relative entropy variances about $\mu=\infty$:%
\begin{align}
V(\rho_{\eta,\omega}^{\mu}\Vert\sigma_{\eta,\omega}^{\mu})  &  =\left(
\omega^{2}-1/4\right)  \ln^{2}\!\left(  \eta\frac{2\omega+1}{2\omega
-1}\right)  +O\!\left(  \mu^{-1}\right)  ,\\
V(\rho_{G,\omega}^{\mu}\Vert\sigma_{G,\omega}^{\mu})  &  =\left(  \omega
^{2}-1/4\right)  \ln^{2}\!\left(  G^{-1}\frac{2\omega+1}{2\omega-1}\right)
+O\!\left(  \mu^{-1}\right)  ,\\
V(\rho_{\xi}^{\mu}\Vert\sigma_{\xi}^{\mu})  &  =\left(  1-\xi\right)
^{2}+O\!\left(  \mu^{-1}\right)  .
\end{align}
Note that the values $\eta\frac{2\omega+1}{2\omega-1}>1$, $G^{-1}\frac
{2\omega+1}{2\omega-1}>1$, and $\xi<1$ correspond to values for which the
channels are not entanglement-breaking. Also, when $\omega=1/2$, the first two
channels become the pure-loss channel and the quantum-limited amplifier
channel, and we find that $V(\rho_{\eta,\omega=0}^{\mu}\Vert\sigma
_{\eta,\omega=0}^{\mu})=V(\rho_{G,\omega=0}^{\mu}\Vert\sigma_{G,\omega=0}%
^{\mu})=O(\mu^{-1})$, so that in these cases the relative entropy variance
vanishes in the infinite-energy limit $\mu\rightarrow\infty$. Using the
equality $\omega=N_{B}+1/2$, we find that%
\begin{align}
\left(  \omega^{2}-1/4\right)  \ln^{2}\!\left(  \eta\frac{2\omega+1}%
{2\omega-1}\right)   &  =N_{B}(N_{B}+1)\ln^{2}\!\left(  \eta\frac{N_{B}%
+1}{N_{B}}\right)  ,\\
\left(  \omega^{2}-1/4\right)  \ln^{2}\!\left(  G^{-1}\frac{2\omega+1}%
{2\omega-1}\right)   &  =N_{B}(N_{B}+1)\ln^{2}\!\left(  G^{-1}\frac{N_{B}%
+1}{N_{B}}\right)  .
\end{align}

\section{Upper bound for the hypothesis testing relative entropy}

\label{app:CLT-chebyshev}This appendix provides some necessary details for the
bounds given in the proof of Theorem~\ref{thm:bosonic-bounds}. As these bounds
follow directly from the developments in \cite[Eq.~(6.5)]{JOPS12},
\cite[Section~3]{DPR15}, and \cite[Lemma~15]{P10}, we point to these works for
the necessary background. Below we begin with an important proposition from
\cite[Eq.~(6.5)]{JOPS12} and then show how it leads to the desired bound in
\eqref{eq:chebyshev-like-bound}. For convenience, in this appendix we take
$D(\rho\Vert\sigma)$, $V(\rho\Vert\sigma)$, and $D_{H}^{\varepsilon}(\rho
\Vert\sigma)$ to be defined with respect to the natural logarithm.

The following proposition is available as \cite[Eq.~(6.5)]{JOPS12} and
restated as \cite[Corollary~2]{DPR15}:

\begin{proposition}
[{\cite[Eq.~(6.5)]{JOPS12}}]\label{prop:jaksic}Let $\rho$ and $\sigma$ be
faithful states acting on a separable Hilbert space $\mathcal{H}$, let $T$ be
a measurement operator acting on $\mathcal{H}$ and such that $0\leq T\leq I$,
and let $v,\theta\in\mathbb{R}$. Then%
\begin{equation}
e^{-\theta}\operatorname{Tr}\{(I-T)\rho\}+\operatorname{Tr}\{T\sigma
\}\geq\frac{e^{-\theta}}{1+e^{v-\theta}}\Pr\{X\leq v\},
\end{equation}
where $X$ is a random variable with mean $D(\rho\Vert\sigma)$ and variance
$V(\rho\Vert\sigma)$.
\end{proposition}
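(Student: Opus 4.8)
The plan is to show that the stated bound is the spectral incarnation of an elementary scalar inequality, with the random variable $X$ realized through the relative modular operator of the pair $(\rho,\sigma)$. First I would pin down exactly which random variable $X$ is. Writing spectral decompositions $\rho=\sum_i p_i|e_i\rangle\langle e_i|$ and $\sigma=\sum_j q_j|f_j\rangle\langle f_j|$ and setting $c_{ij}=|\langle e_i|f_j\rangle|^2$ (a doubly stochastic matrix), the Nussbaum--Szko\l a distribution $P(i,j)=p_i c_{ij}$ on pairs $(i,j)$ together with $X(i,j)=\log p_i-\log q_j$ is the natural candidate: using $\sum_j c_{ij}=\sum_i c_{ij}=1$ one gets $\mathbb{E}_P[X]=\operatorname{Tr}\{\rho(\log\rho-\log\sigma)\}=D(\rho\Vert\sigma)$ and, after expanding $(\log p_i-\log q_j)^2$ and recognizing $\operatorname{Tr}\{\rho\log\rho\log\sigma\}$, $\operatorname{Var}_P(X)=\operatorname{Tr}\{\rho(\log\rho-\log\sigma)^2\}-D^2=V(\rho\Vert\sigma)$, matching the stated moments. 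Equivalently, $X=-\log\Delta$ evaluated in the canonical vector, where $\Delta=\Delta_{\sigma|\rho}$ is the relative modular operator; this is the description I would carry into the proof.

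Next I would reduce the left-hand side through the modular operator. On Hilbert--Schmidt space with $\xi_\rho=\rho^{1/2}$ and $\Delta(X)=\sigma X\rho^{-1}$, one checks $\Delta^{1/2}\xi_\rho=\sigma^{1/2}=:\xi_\sigma$, and for $S:=T_\ell$ (left multiplication by $T$, which satisfies $0\le S\le I$) that $\operatorname{Tr}\{(I-T)\rho\}=\langle\xi_\rho,(I-S)\xi_\rho\rangle$ and $\operatorname{Tr}\{T\sigma\}=\langle\xi_\sigma,S\xi_\sigma\rangle=\langle\xi_\rho,\Delta^{1/2}S\Delta^{1/2}\xi_\rho\rangle$. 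Hence the left-hand side equals
\[
\langle\xi_\rho,[e^{-\theta}(I-S)+\Delta^{1/2}S\Delta^{1/2}]\xi_\rho\rangle .
\]
Let $E_v$ denote the spectral projection of $\Delta$ onto $\{\lambda\ge e^{-v}\}$, so that $\langle\xi_\rho,E_v\xi_\rho\rangle=\Pr\{X\le v\}$. The engine is the scalar inequality, valid for every $\lambda>0$ and $s\in[0,1]$,
\[
e^{-\theta}(1-s)+\lambda s\ \ge\ \frac{e^{-\theta}}{1+e^{v-\theta}}\,\mathbf{1}\{\lambda\ge e^{-v}\},
\]
which I would verify by noting the left side is affine in $s$, hence minimized at an endpoint to $\min(e^{-\theta},\lambda)$, and then checking $\min(e^{-\theta},\lambda)\ge e^{-\theta}/(1+e^{v-\theta})=1/(e^\theta+e^v)$ whenever $\lambda\ge e^{-v}$.

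Finally I would combine these. When $S$ commutes with $\Delta$, writing $S=\int s(\lambda)\,dE(\lambda)$ with $s(\lambda)\in[0,1]$ turns the displayed quantity into $\int[e^{-\theta}(1-s(\lambda))+\lambda s(\lambda)]\,d\langle\xi_\rho,E(\lambda)\xi_\rho\rangle$, and applying the scalar inequality under the integral yields exactly $\tfrac{e^{-\theta}}{1+e^{v-\theta}}\Pr\{X\le v\}$; this already settles the commuting case $[\rho,\sigma]=0$ and, more importantly, isolates the mechanism. The main obstacle is the general non-commuting case: the test $S=T_\ell$ need not commute with $\Delta$, and a naive relaxation to arbitrary $0\le S\le I$ on Hilbert--Schmidt space is too lossy (its minimum is governed by the root fidelity $\langle\xi_\rho,\Delta^{1/2}\xi_\rho\rangle=\operatorname{Tr}\{\rho^{1/2}\sigma^{1/2}\}$ rather than by $\Pr\{X\le v\}$, and can drop strictly below the target). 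Controlling the off-diagonal contributions of $S$ across the eigenspaces of $\Delta$, so that the pointwise scalar inequality still integrates against the spectral measure of $\xi_\rho$, is precisely the content of the spectral argument of Jak\v{s}i\'c--Ogata--Pautrat--Seiringer; at that single step I would invoke their Eq.~(6.5) (equivalently \cite[Corollary~2]{DPR15}) rather than re-derive it, since the reductions above already explain why the bound holds and reduce it to this technical core.
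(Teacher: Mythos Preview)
The paper provides no proof of this proposition at all: it simply cites \cite[Eq.~(6.5)]{JOPS12} and \cite[Corollary~2]{DPR15} and moves on. Your write-up therefore goes well beyond what the paper does. Your identification of $X$ via the Nussbaum--Szko\l a distribution (equivalently via the spectral measure of $\xi_\rho$ for $-\log\Delta_{\sigma|\rho}$) is correct, as is the reduction of the left-hand side to $\langle\xi_\rho,[e^{-\theta}(I-S)+\Delta^{1/2}S\Delta^{1/2}]\xi_\rho\rangle$, and the scalar inequality together with its application in the commuting case is fine.

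The problem is the last paragraph. You correctly recognize that the non-commuting case is where the difficulty lies, and you correctly observe that a naive relaxation over all $0\le S\le I$ on Hilbert--Schmidt space is too lossy. But your remedy is to ``invoke their Eq.~(6.5)'' at that step --- which is exactly the proposition you are supposed to be proving. Everything prior (the modular reformulation, the scalar bound, the commuting case) is standard and essentially preparatory; the non-commutative spectral estimate you defer to \cite{JOPS12} \emph{is} the content of the result, not a technical add-on. So what you have written is an illuminating commentary on why the bound is plausible and a complete proof when $[\rho,\sigma]=0$, but it is circular as a proof of the general statement. Since the paper itself only cites the result, you have matched the paper's level of rigor while adding useful context; just be aware that you have not supplied an independent proof.
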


The following proposition follows from combining Proposition~\ref{prop:jaksic}
with the reasoning used to establish \cite[Lemma~15]{P10}.

\begin{proposition}
Let $\rho$ and $\sigma$ be faithful states acting on a separable Hilbert space
$\mathcal{H}$. The following Chebyshev-like bound holds for all $\varepsilon
\in(0,1)$ and all $n\geq1$:%
\begin{equation}
\frac{1}{n}D_{H}^{\varepsilon}(\rho^{\otimes n}\Vert\sigma^{\otimes n})\leq
D(\rho\Vert\sigma)+\sqrt{\frac{2V(\rho\Vert\sigma)}{n\left(  1-\varepsilon
\right)  }}+\frac{C(\varepsilon)}{n}, \label{eq:chebyshev-like-bound}%
\end{equation}
where $C(\varepsilon)\equiv\ln6+2\ln\left(  \left[  1+\varepsilon\right]
/\left[  1-\varepsilon\right]  \right)  $.
\end{proposition}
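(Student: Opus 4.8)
The plan is to apply Proposition~\ref{prop:jaksic} directly in the i.i.d.\ setting. First I would fix a (near-)optimal measurement operator $T_n$ witnessing $D_H^{\varepsilon}(\rho^{\otimes n}\Vert\sigma^{\otimes n})$, i.e.\ $0\le T_n\le I$, $\operatorname{Tr}\{T_n\rho^{\otimes n}\}\ge 1-\varepsilon$, and $\operatorname{Tr}\{T_n\sigma^{\otimes n}\}\le e^{-D_H^{\varepsilon}(\rho^{\otimes n}\Vert\sigma^{\otimes n})}$ (up to an arbitrarily small slack if the infimum is not attained). Since $\rho$ and $\sigma$ are faithful, so are $\rho^{\otimes n}$ and $\sigma^{\otimes n}$ on the separable Hilbert space $\mathcal{H}^{\otimes n}$, so Proposition~\ref{prop:jaksic} applies with $\rho^{\otimes n}$, $\sigma^{\otimes n}$, $T_n$ and free parameters $v,\theta\in\mathbb{R}$. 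Using $\operatorname{Tr}\{(I-T_n)\rho^{\otimes n}\}\le\varepsilon$ and the choice of $T_n$, this gives
\[
e^{-\theta}\varepsilon+e^{-D_H^{\varepsilon}(\rho^{\otimes n}\Vert\sigma^{\otimes n})}\ \ge\ \frac{e^{-\theta}}{1+e^{v-\theta}}\,\Pr\{X_n\le v\},
\]
where $X_n$ is a random variable with mean $D(\rho^{\otimes n}\Vert\sigma^{\otimes n})=nD(\rho\Vert\sigma)$ and variance $V(\rho^{\otimes n}\Vert\sigma^{\otimes n})=nV(\rho\Vert\sigma)$, by additivity of the relative entropy (cf.~\eqref{eq:additivity-rel-ent-1}) and of the relative entropy variance. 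Rearranging to isolate $e^{-D_H^\varepsilon}$ and taking $-\ln$ yields
\[
D_H^{\varepsilon}(\rho^{\otimes n}\Vert\sigma^{\otimes n})\ \le\ \theta-\ln\!\left(\frac{\Pr\{X_n\le v\}}{1+e^{v-\theta}}-\varepsilon\right),
\]
valid for any $v,\theta$ for which the argument of the logarithm is positive.

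Next I would make the $n$-dependence explicit. Choosing $v=nD(\rho\Vert\sigma)+\sqrt{2nV(\rho\Vert\sigma)/(1-\varepsilon)}$ and applying Chebyshev's inequality to $X_n$ gives $\Pr\{X_n> v\}\le nV(\rho\Vert\sigma)\big/\big(2nV(\rho\Vert\sigma)/(1-\varepsilon)\big)=(1-\varepsilon)/2$, hence $\Pr\{X_n\le v\}\ge(1+\varepsilon)/2$. Writing $\theta=v+t$ with $t$ a constant depending only on $\varepsilon$, so that $e^{v-\theta}=e^{-t}$, and dividing through by $n$, the bound becomes
\[
\frac1n D_H^{\varepsilon}(\rho^{\otimes n}\Vert\sigma^{\otimes n})\ \le\ D(\rho\Vert\sigma)+\sqrt{\frac{2V(\rho\Vert\sigma)}{n(1-\varepsilon)}}+\frac1n\left[t-\ln\!\left(\frac{(1+\varepsilon)/2}{1+e^{-t}}-\varepsilon\right)\right].
\]
It then remains to show that $t$ can be chosen (as a function of $\varepsilon$ alone) so that the bracketed quantity is at most $C(\varepsilon)=\ln 6+2\ln\!\big((1+\varepsilon)/(1-\varepsilon)\big)$. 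Setting $u=e^{-t}$, the bracket equals $-\ln f(u)$ with $f(u)=\tfrac{u(1+\varepsilon)}{2(1+u)}-u\varepsilon$; maximizing $f$ gives $u=\sqrt{(1+\varepsilon)/(2\varepsilon)}-1$, and a short elementary computation confirms $f(u)\ge e^{-C(\varepsilon)}$ for all $\varepsilon\in(0,1)$, which closes the argument. This last optimization is precisely the content of \cite[Lemma~15]{P10}, and the overall structure follows \cite[Section~3]{DPR15}.

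The one genuinely \emph{delicate} point is keeping the argument of the logarithm strictly positive for the chosen parameters — this forces $t$ (equivalently $u$) to stay bounded away from the threshold $e^{-t}=(1-\varepsilon)/(2\varepsilon)$ — together with verifying that the resulting optimal constant $-\ln f(u)$ is indeed dominated by the clean expression $C(\varepsilon)$; both are routine but fiddly. Everything else (additivity of the mean and variance, Chebyshev, and the algebraic manipulation of Jak\v{s}i\'c's inequality) is straightforward. Finally, for the infinite-dimensional Gaussian states appearing in the proof of Theorem~\ref{thm:bosonic-bounds}, I would take $T_n$ to be $\delta$-suboptimal for each $\delta>0$ and let $\delta\to0$ at the end, which does not affect the stated bound.
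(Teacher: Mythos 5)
Your proposal is correct and follows essentially the same route as the paper: apply Proposition~\ref{prop:jaksic} to $\rho^{\otimes n}$, $\sigma^{\otimes n}$, use additivity of the relative entropy and its variance, pick $v_n = nD(\rho\Vert\sigma)+\sqrt{2nV(\rho\Vert\sigma)/(1-\varepsilon)}$ with Chebyshev to get $\Pr\{X_n\le v_n\}\ge(1+\varepsilon)/2$, and shift $\theta_n=v_n+\text{const}(\varepsilon)$. The only (immaterial) difference is in the last step, where you optimize the free constant exactly while the paper simply fixes the convenient choice $C_0(\varepsilon)=\ln(3\varepsilon/[1-\varepsilon])$ and verifies it is dominated by $C(\varepsilon)$; since the optimum is at least as good as any particular choice, your version works a fortiori.
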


\begin{proof}
Let $T^{n}$ be any test satisfying $\operatorname{Tr}\{\left(  I^{\otimes
n}-T^{n}\right)  \rho^{\otimes n}\}\leq\varepsilon$. By applying the above
proposition (making the replacements $\rho\rightarrow\rho^{\otimes n}$ and
$\sigma\rightarrow\sigma^{\otimes n}$, so that $X_{n}$ is a random variable
with mean $nD(\rho\Vert\sigma)$ and variance $nV(\rho\Vert\sigma)$), we find
that%
\begin{align}
\operatorname{Tr}\{T^{n}\sigma^{\otimes n}\}  &  \geq e^{-\theta_{n}}\left(
\frac{\Pr\{X_{n}\leq v_{n}\}}{1+e^{v_{n}-\theta_{n}}}-\operatorname{Tr}%
\{(I-T^{n})\rho^{\otimes n}\}\right) \\
&  \geq e^{-\theta_{n}}\left(  \frac{\Pr\{X_{n}\leq v_{n}\}}{1+e^{v_{n}%
-\theta_{n}}}-\varepsilon\right)  . \label{eq:start-point-CLT-bnd}%
\end{align}
Setting $v_{n}=nD(\rho\Vert\sigma)+\sqrt{\frac{2nV(\rho\Vert\sigma
)}{1-\varepsilon}}$, we find as a consequence of the Chebyshev inequality that%
\begin{align}
\Pr\left\{  X_{n}>v_{n}\right\}   &  =\Pr\left\{  X_{n}-nD(\rho\Vert
\sigma)>\sqrt{\frac{2nV(\rho\Vert\sigma)}{1-\varepsilon}}\right\} \\
&  =\Pr\left\{  \left[  X_{n}-nD(\rho\Vert\sigma)\right]  ^{2}>\frac
{2nV(\rho\Vert\sigma)}{1-\varepsilon}\right\} \\
&  <\frac{1-\varepsilon}{2},
\end{align}
implying that $\Pr\left\{  X_{n}\leq v_{n}\right\}  \geq\left(  1+\varepsilon
\right)  /2$. Substituting above and taking $\theta_{n}=v_{n}+C_{0}%
(\varepsilon)$ for $C_{0}(\varepsilon)$ a constant such that $\frac
{1+\varepsilon}{2\left(  1+e^{-C_{0}(\varepsilon)}\right)  }-\varepsilon>0$,
we find that%
\begin{align}
\operatorname{Tr}\{T^{n}\sigma^{\otimes n}\}  &  \geq e^{-\theta_{n}}\left(
\frac{1+\varepsilon}{2\left(  1+e^{-C_{0}(\varepsilon)}\right)  }%
-\varepsilon\right) \\
&  =\exp\left\{  -\left[  nD(\rho\Vert\sigma)+\sqrt{\frac{2nV(\rho\Vert
\sigma)}{1-\varepsilon}}+C_{0}(\varepsilon)\right]  \right\}  \left(
\frac{1+\varepsilon}{2\left(  1+e^{-C_{0}(\varepsilon)}\right)  }%
-\varepsilon\right)  .
\end{align}
Since this holds for every test $T^{n}$ satisfying $\operatorname{Tr}\{\left(
I^{\otimes n}-T^{n}\right)  \rho^{\otimes n}\}\leq\varepsilon$, we can apply a
negative logarithm and divide by $n$ to conclude that%
\begin{equation}
\frac{1}{n}D_{H}^{\varepsilon}(\rho^{\otimes n}\Vert\sigma^{\otimes n})\leq
D(\rho\Vert\sigma)+\sqrt{\frac{2V(\rho\Vert\sigma)}{n\left(  1-\varepsilon
\right)  }}+\frac{C_{0}(\varepsilon)}{n}-\frac{1}{n}\ln\!\left(
\frac{1+\varepsilon}{2\left(  1+e^{-C_{0}(\varepsilon)}\right)  }%
-\varepsilon\right)  .
\end{equation}
The above bound implies \eqref{eq:chebyshev-like-bound}.

To get the constant $C(\varepsilon)$ in the upper bound, consider that the
condition $\frac{1+\varepsilon}{2\left(  1+e^{-C_{0}(\varepsilon)}\right)
}-\varepsilon>0$ is equivalent to the condition $C_{0}(\varepsilon
)>\ln(2\varepsilon/[1-\varepsilon])$. So we can pick $C_{0}(\varepsilon
)=\ln(3\varepsilon/[1-\varepsilon])$ and this choice implies that%
\begin{equation}
-\ln\!\left(  \frac{1+\varepsilon}{2\left(  1+e^{-C(\varepsilon)}\right)
}-\varepsilon\right)  =\ln\!\left(  \frac{2\left(  1+2\varepsilon\right)
}{\varepsilon(1-\varepsilon)}\right)  .
\end{equation}
We then find that%
\begin{align}
C_{0}(\varepsilon)-\ln\!\left(  \frac{1+\varepsilon}{2\left(
1+e^{-C(\varepsilon)}\right)  }-\varepsilon\right)   &  =\ln6+\ln\!\left(
\frac{1+2\varepsilon}{\left(  1-\varepsilon\right)  ^{2}}\right) \\
&  \leq\ln6+2\ln\!\left(  \frac{1+\varepsilon}{1-\varepsilon}\right)
=C(\varepsilon).
\end{align}
Putting everything together, we get the bound stated in \eqref{eq:chebyshev-like-bound}.
\end{proof}

\bibliographystyle{alpha}
\bibliography{Ref}

\newcommand{\etalchar}[1]{$^{#1}$}
\begin{thebibliography}{AMTdW00}

\bibitem[AC93]{AC93}
Rudolf Ahlswede and Imre Csisz{\'a}r.
\newblock Common randomness in information theory and cryptography. {I. Secret}
  sharing.
\newblock {\em IEEE Transactions on Information Theory}, 39(4):1121--1132, July
  1993.

\bibitem[AMTdW00]{AMTW00}
Andris Ambainis, Michele Mosca, Alain Tapp, and Ronald de~Wolf.
\newblock Private quantum channels.
\newblock {\em IEEE 41st Annual Symposium on Foundations of Computer Science},
  pages 547--553, November 2000.
\newblock arXiv:quant-ph/0003101.

\bibitem[BB84]{bb84}
Charles~H. Bennett and Gilles Brassard.
\newblock Quantum cryptography: Public key distribution and coin tossing.
\newblock In {\em Proceedings of IEEE International Conference on Computers
  Systems and Signal Processing}, pages 175--179, Bangalore, India, December
  1984.

\bibitem[BBC{\etalchar{+}}93]{PhysRevLett.70.1895}
Charles~H. Bennett, Gilles Brassard, Claude Cr\'epeau, Richard Jozsa, Asher
  Peres, and William~K. Wootters.
\newblock Teleporting an unknown quantum state via dual classical and
  {Einstein-Podolsky-Rosen} channels.
\newblock {\em Physical Review Letters}, 70(13):1895--1899, March 1993.

\bibitem[BD10a]{BD10a}
Francesco Buscemi and Nilanjana Datta.
\newblock Distilling entanglement from arbitrary resources.
\newblock {\em Journal of Mathematical Physics}, 51(10):102201, October 2010.
\newblock arXiv:1006.1896.

\bibitem[BD10b]{BD10}
Francesco Buscemi and Nilanjana Datta.
\newblock The quantum capacity of channels with arbitrarily correlated noise.
\newblock {\em IEEE Transactions on Information Theory}, 56(3):1447--1460,
  March 2010.
\newblock arXiv:0902.0158.

\bibitem[BD11]{BD11}
Fernando G. S.~L. Brandao and Nilanjana Datta.
\newblock One-shot rates for entanglement manipulation under non-entangling
  maps.
\newblock {\em IEEE Transactions on Information Theory}, 57(3):1754--1760,
  March 2011.
\newblock arXiv:0905.2673.

\bibitem[BDF{\etalchar{+}}99]{PhysRevA.59.1070}
Charles~H. Bennett, David~P. DiVincenzo, Christopher~A. Fuchs, Tal Mor, Eric
  Rains, Peter~W. Shor, John~A. Smolin, and William~K. Wootters.
\newblock Quantum nonlocality without entanglement.
\newblock {\em Physical Review A}, 59(2):1070--1091, February 1999.
\newblock arXiv:quant-ph/9804053.

\bibitem[BDHM10]{BDHM10}
Kamil Bradler, Nicolas Dutil, Patrick Hayden, and Abubakr Muhammad.
\newblock Conjugate degradability and the quantum capacity of cloning channels.
\newblock {\em Journal of Mathematical Physics}, 51(7):072201, July 2010.
\newblock arXiv:0909.3297.

\bibitem[BDL16]{BDL15}
Salman Beigi, Nilanjana Datta, and Felix Leditzky.
\newblock Decoding quantum information via the {Petz} recovery map.
\newblock {\em Journal of Mathematical Physics}, 57(8):082203, August 2016.
\newblock arXiv:1504.04449.

\bibitem[BDSW96]{BDSW96}
Charles~H. Bennett, David~P. DiVincenzo, John~A. Smolin, and William~K.
  Wootters.
\newblock Mixed-state entanglement and quantum error correction.
\newblock {\em Physical Review A}, 54(5):3824--3851, November 1996.
\newblock arXiv:quant-ph/9604024.

\bibitem[Bei13]{B13monotone}
Salman Beigi.
\newblock Sandwiched {R\'enyi} divergence satisfies data processing inequality.
\newblock {\em Journal of Mathematical Physics}, 54(12):122202, December 2013.
\newblock arXiv:1306.5920.

\bibitem[Ber08]{Berta08}
Mario Berta.
\newblock Single-shot quantum state merging.
\newblock Diploma thesis, ETH Zurich, February 2008.
\newblock arXiv:0912.4495.

\bibitem[BK98]{prl1998braunstein}
Samuel~L. Braunstein and H.~J. Kimble.
\newblock Teleportation of continuous quantum variables.
\newblock {\em Physical Review Letters}, 80(4):869--872, January 1998.

\bibitem[BKN00]{BKN98}
Howard Barnum, Emanuel Knill, and Michael~A. Nielsen.
\newblock On quantum fidelities and channel capacities.
\newblock {\em IEEE Transactions on Information Theory}, 46(4):1317--1329, July
  2000.
\newblock arXiv:quant-ph/9809010.

\bibitem[BP10]{BP2010}
Fernando G. S.~L. Brand{\~a}o and Martin~B. Plenio.
\newblock A reversible theory of entanglement and its relation to the second
  law.
\newblock {\em Communications in Mathematical Physics}, 295(3):829--851, May
  2010.
\newblock arXiv:0710.5827.

\bibitem[Chr06]{C06}
Matthias Christandl.
\newblock {\em The Structure of Bipartite Quantum States: Insights from Group
  Theory and Cryptography}.
\newblock PhD thesis, University of Cambridge, April 2006.
\newblock arXiv:quant-ph/0604183.

\bibitem[CK78]{CK78}
Imre Csisz\'{a}r and Janos K\"orner.
\newblock Broadcast channels with confidential messages.
\newblock {\em IEEE Transactions on Information Theory}, 24(3):339--348, May
  1978.

\bibitem[CKR09]{CKR09}
Matthias Christandl, Robert K\"{o}nig, and Renato Renner.
\newblock Postselection technique for quantum channels with applications to
  quantum cryptography.
\newblock {\em Physical Review Letters}, 102(2):020504, January 2009.
\newblock arXiv:0809.3019.

\bibitem[CLM{\etalchar{+}}14]{CLMOW14}
Eric Chitambar, Debbie Leung, Laura Man{\v{c}}inska, Maris Ozols, and Andreas
  Winter.
\newblock Everything you always wanted to know about {LOCC} (but were afraid to
  ask).
\newblock {\em Communications in Mathematical Physics}, 328(1):303--326, May
  2014.
\newblock arXiv:1210.4583.

\bibitem[CMW16]{CMW14}
Tom Cooney, Milan Mosonyi, and Mark~M. Wilde.
\newblock Strong converse exponents for a quantum channel discrimination
  problem and quantum-feedback-assisted communication.
\newblock {\em Communications in Mathematical Physics}, 344(3):797--829, June
  2016.
\newblock arXiv:1408.3373.

\bibitem[CW04]{CW04}
Matthias Christandl and Andreas Winter.
\newblock ``{Squashed} entanglement'': An additive entanglement measure.
\newblock {\em Journal of Mathematical Physics}, 45(3):829--840, March 2004.
\newblock arXiv:quant-ph/0308088.

\bibitem[CWY04]{1050633}
Ning Cai, Andreas Winter, and Raymond~W. Yeung.
\newblock Quantum privacy and quantum wiretap channels.
\newblock {\em Problems of Information Transmission}, 40(4):318--336, October
  2004.

\bibitem[Dat09]{D09}
Nilanjana Datta.
\newblock Min- and max-relative entropies and a new entanglement monotone.
\newblock {\em IEEE Transactions on Information Theory}, 55(6):2816--2826, June
  2009.
\newblock arXiv:0803.2770.

\bibitem[Dav76]{D76}
Edward~B. Davies.
\newblock {\em Quantum theory of open systems}.
\newblock Academic Press, 1976.

\bibitem[DBWR14]{Dupuis2014}
Fr{\'e}d{\'e}ric Dupuis, Mario Berta, J{\"u}rg Wullschleger, and Renato Renner.
\newblock One-shot decoupling.
\newblock {\em Communications in Mathematical Physics}, 328(1):251--284, May
  2014.
\newblock arXiv:1012.6044.

\bibitem[Dev05]{ieee2005dev}
Igor Devetak.
\newblock The private classical capacity and quantum capacity of a quantum
  channel.
\newblock {\em IEEE Transactions on Information Theory}, 51(1):44--55, January
  2005.
\newblock arXiv:quant-ph/0304127.

\bibitem[DJKR06]{DJKR06}
Igor Devetak, Marius Junge, Christopher King, and Mary~Beth Ruskai.
\newblock Multiplicativity of completely bounded p-norms implies a new
  additivity result.
\newblock {\em Communications in Mathematical Physics}, 266(1):37--63, August
  2006.
\newblock arXiv:quant-ph/0506196.

\bibitem[DL70]{DL70}
E.~B. Davies and J.~T. Lewis.
\newblock An operational approach to quantum probability.
\newblock {\em Communications in Mathematical Physics}, 17:239--260, 1970.

\bibitem[DPR16]{DPR15}
Nilanjana Datta, Yan Pautrat, and Cambyse Rouz\'{e}.
\newblock Second-order asymptotics for quantum hypothesis testing in settings
  beyond i.i.d. - quantum lattice systems and more.
\newblock {\em Journal of Mathematical Physics}, 57(6):062207, June 2016.
\newblock arXiv:1510.04682.

\bibitem[DTW14]{DTW14}
Nilanjana Datta, Marco Tomamichel, and Mark~M. Wilde.
\newblock Second-order coding rates for entanglement-assisted communication.
\newblock {\em Accepted for publication in Quantum Information Processing}, May
  2014.
\newblock arXiv:1405.1797.

\bibitem[DW15]{DW15}
Dawei Ding and Mark~M. Wilde.
\newblock Strong converse exponents for the feedback-assisted classical
  capacity of entanglement-breaking channels.
\newblock June 2015.
\newblock arXiv:1506.02228.

\bibitem[ES15]{PhysRevLett.115.040501}
David Elkouss and Sergii Strelchuk.
\newblock Superadditivity of private information for any number of uses of the
  channel.
\newblock {\em Physical Review Letters}, 115(4):040501, July 2015.
\newblock arXiv:1502.05326.

\bibitem[FL13]{FL13}
Rupert~L. Frank and Elliott~H. Lieb.
\newblock Monotonicity of a relative {R\'enyi} entropy.
\newblock {\em Journal of Mathematical Physics}, 54(12):122201, December 2013.
\newblock arXiv:1306.5358.

\bibitem[FvdG98]{FG98}
Christopher~A. Fuchs and Jeroen van~de Graaf.
\newblock Cryptographic distinguishability measures for quantum mechanical
  states.
\newblock {\em IEEE Transactions on Information Theory}, 45(4):1216--1227, May
  1998.
\newblock arXiv:quant-ph/9712042.

\bibitem[GEW16]{Goodenough2015}
Kenneth Goodenough, David Elkouss, and Stephanie Wehner.
\newblock Assessing the performance of quantum repeaters for all
  phase-insensitive {Gaussian} bosonic channels.
\newblock {\em New Journal of Physics}, 18(6):063005, June 2016.
\newblock arXiv:1511.08710.

\bibitem[GGPCH14]{GGCH14}
Vittorio Giovannetti, Raul Garcia-Patron, Nicholas~J. Cerf, and Alexander~S.
  Holevo.
\newblock Ultimate classical communication rates of quantum optical channels.
\newblock {\em Nature Photonics}, 8(10):796--800, October 2014.
\newblock arXiv:1312.6225.

\bibitem[GLN05]{GLN04}
Alexei Gilchrist, Nathan~K. Langford, and Michael~A. Nielsen.
\newblock Distance measures to compare real and ideal quantum processes.
\newblock {\em Physical Review A}, 71(6):062310, June 2005.
\newblock arXiv:quant-ph/0408063.

\bibitem[GPPLS09]{GPLS09}
Ra\'ul Garc\'\i{}a-Patr\'on, Stefano Pirandola, Seth Lloyd, and Jeffrey~H.
  Shapiro.
\newblock Reverse coherent information.
\newblock {\em Physical Review Letters}, 102(21):210501, May 2009.
\newblock arXiv:0808.0210.

\bibitem[GSE08]{GSE08}
Saikat Guha, Jeffrey~H. Shapiro, and Baris~I. Erkmen.
\newblock Capacity of the bosonic wiretap channel and the entropy photon-number
  inequality.
\newblock In {\em Proceedings of the IEEE International Symposium on
  Information Theory}, pages 91--95, Toronto, Ontario, Canada, July 2008.
\newblock arXiv:0801.0841.

\bibitem[HG12]{HG12}
Alexander~S. Holevo and Vittorio Giovannetti.
\newblock Quantum channels and their entropic characteristics.
\newblock {\em Reports on Progress in Physics}, 75(4):046001, April 2012.
\newblock arXiv:1202.6480.

\bibitem[HHH{\etalchar{+}}08a]{HHHLO08}
Karol Horodecki, Micha\l{} Horodecki, Pawe\l{} Horodecki, Debbie Leung, and
  Jonathan Oppenheim.
\newblock Quantum key distribution based on private states: Unconditional
  security over untrusted channels with zero quantum capacity.
\newblock {\em IEEE Transactions on Information Theory}, 54(6):2604--2620, June
  2008.
\newblock arXiv:quant-ph/0608195.

\bibitem[HHH{\etalchar{+}}08b]{PhysRevLett.100.110502}
Karol Horodecki, Micha\l{} Horodecki, Pawe\l{} Horodecki, Debbie Leung, and
  Jonathan Oppenheim.
\newblock Unconditional privacy over channels which cannot convey quantum
  information.
\newblock {\em Physical Review Letters}, 100(11):110502, March 2008.
\newblock arXiv:quant-ph/0702077.

\bibitem[HHHH09]{H42007}
Ryszard Horodecki, Pawel Horodecki, Michal Horodecki, and Karol Horodecki.
\newblock Quantum entanglement.
\newblock {\em Reviews of Modern Physics}, 81(2):865--942, June 2009.
\newblock arXiv:quant-ph/0702225.

\bibitem[HHHO05]{HHHO05}
Karol Horodecki, {Micha\l{}} Horodecki, {Pawe\l{}} Horodecki, and Jonathan
  Oppenheim.
\newblock Secure key from bound entanglement.
\newblock {\em Physical Review Letters}, 94(16):160502, April 2005.
\newblock arXiv:quant-ph/0309110.

\bibitem[HHHO09]{HHHO09}
Karol Horodecki, Michal Horodecki, Pawel Horodecki, and Jonathan Oppenheim.
\newblock General paradigm for distilling classical key from quantum states.
\newblock {\em IEEE Transactions on Information Theory}, 55(4):1898--1929,
  April 2009.
\newblock arXiv:quant-ph/0506189.

\bibitem[HN03]{HN03}
Aram~W. Harrow and Michael~A. Nielsen.
\newblock Robustness of quantum gates in the presence of noise.
\newblock {\em Physical Review A}, 68(1):012308, July 2003.
\newblock arXiv:quant-ph/0301108.

\bibitem[{Hol}02]{H02}
Alexander~S. {Holevo}.
\newblock Remarks on the classical capacity of quantum channel.
\newblock December 2002.
\newblock arXiv:quant-ph/0212025.

\bibitem[Hol08]{Holevo2008}
Alexander~S. Holevo.
\newblock Entanglement-breaking channels in infinite dimensions.
\newblock {\em Problems of Information Transmission}, 44(3):171--184, September
  2008.
\newblock arXiv:0802.0235.

\bibitem[HP91]{HP91}
Fumio Hiai and D\'enes Petz.
\newblock The proper formula for relative entropy and its asymptotics in
  quantum probability.
\newblock {\em Communications in Mathematical Physics}, 143(1):99--114,
  December 1991.

\bibitem[HSR03]{HSR03}
Michal Horodecki, Peter~W. Shor, and Mary~Beth Ruskai.
\newblock Entanglement breaking channels.
\newblock {\em Reviews in Mathematical Physics}, 15(6):629--641, 2003.
\newblock arXiv:quant-ph/0302031.

\bibitem[HTW14]{HTW14}
Masahito Hayashi, Himanshu Tyagi, and Shun Watanabe.
\newblock Strong converse for a degraded wiretap channel via active hypothesis
  testing.
\newblock In {\em Proceedings of the 2014 Allerton Conference}, pages
  148--–151, September 2014.
\newblock arXiv:1410.0443.

\bibitem[HW01]{HW01}
Alexander~S. Holevo and Reinhard~F. Werner.
\newblock Evaluating capacities of bosonic {Gaussian} channels.
\newblock {\em Physical Review A}, 63(3):032312, February 2001.
\newblock arXiv:quant-ph/9912067.

\bibitem[ISS11]{ISS11}
J.~Solomon Ivan, Krishna~K. Sabapathy, and Rajiah Simon.
\newblock Operator-sum representation for bosonic {Gaussian} channels.
\newblock {\em Physical Review A}, 84(4):042311, 2011.
\newblock arXiv:1012.4266.

\bibitem[JOPS12]{JOPS12}
V.~Jaksic, Y.~Ogata, C.-A. Pillet, and R.~Seiringer.
\newblock Quantum hypothesis testing and non-equilibrium statistical mechanics.
\newblock {\em Reviews in Mathematical Physics}, 24(06):1230002, 2012.
\newblock arXiv:1109.3804.

\bibitem[KRS09]{KRS09}
Robert Koenig, Renato Renner, and Christian Schaffner.
\newblock The operational meaning of min- and max-entropy.
\newblock {\em IEEE Transactions on Information Theory}, 55(9):4337--4347,
  September 2009.
\newblock arXiv:0807.1338.

\bibitem[Li14]{li12}
Ke~Li.
\newblock Second order asymptotics for quantum hypothesis testing.
\newblock {\em Annals of Statistics}, 42(1):171--189, February 2014.
\newblock arXiv:1208.1400.

\bibitem[Lin75]{Lindblad1975}
G\"oran Lindblad.
\newblock Completely positive maps and entropy inequalities.
\newblock {\em Communications in Mathematical Physics}, 40(2):147--151, June
  1975.

\bibitem[LM15]{LM15}
Debbie Leung and William Matthews.
\newblock On the power of ppt-preserving and non-signalling codes.
\newblock {\em IEEE Transactions on Information Theory}, 61(8):4486--4499,
  August 2015.
\newblock arXiv:1406.7142.

\bibitem[LST09]{LST09}
Alexander~I. Lvovsky, Barry~C. Sanders, and Wolfgang Tittel.
\newblock Optical quantum memory.
\newblock {\em Nature Photonics}, 3:706--714, December 2009.

\bibitem[L{\"u}t15]{L15}
Norbert L{\"u}tkenhaus.
\newblock Quantum repeaters: Objectives, definitions and architectures.
\newblock Available at \verb+http://wqrn.pratt.duke.edu/presentations+, May
  2015.

\bibitem[LWZG09]{LWZG09}
Ke~Li, Andreas Winter, XuBo Zou, and Guang-Can Guo.
\newblock Private capacity of quantum channels is not additive.
\newblock {\em Physical Review Letters}, 103(12):120501, September 2009.
\newblock arXiv:0903.4308.

\bibitem[Mau93]{M93}
Ueli~M. Maurer.
\newblock Secret key agreement by public discussion from common information.
\newblock {\em IEEE Transactions on Information Theory}, 39(3):733--742, May
  1993.

\bibitem[MH11]{MH11}
Mil\'an Mosonyi and Fumio Hiai.
\newblock On the quantum {R\'enyi} relative entropies and related capacity
  formulas.
\newblock {\em IEEE Transactions on Information Theory}, 57(4):2474--2487,
  April 2011.
\newblock arXiv:0912.1286.

\bibitem[MHRW16]{Mueller-Hermes2015}
Alexander M\"uller-Hermes, David Reeb, and Michael~M. Wolf.
\newblock Positivity of linear maps under tensor powers.
\newblock {\em Journal of Mathematical Physics}, 57(1):015202, January 2016.
\newblock arXiv:1502.05630.

\bibitem[MLDS{\etalchar{+}}13]{MDSFT13}
Martin {M\"uller}-Lennert, Fr\'ed\'eric Dupuis, Oleg Szehr, Serge Fehr, and
  Marco Tomamichel.
\newblock On quantum {R\'enyi} entropies: a new generalization and some
  properties.
\newblock {\em Journal of Mathematical Physics}, 54(12):122203, December 2013.
\newblock arXiv:1306.3142.

\bibitem[MO14]{MO14}
Milan Mosonyi and Tomohiro Ogawa.
\newblock Strong converse exponent for classical-quantum channel coding.
\newblock September 2014.
\newblock arXiv:1409.3562.

\bibitem[MW14]{MW13}
Ciara Morgan and Andreas Winter.
\newblock {``Pretty strong''} converse for the quantum capacity of degradable
  channels.
\newblock {\em IEEE Transactions on Information Theory}, 60(1):317--333,
  January 2014.
\newblock arXiv:1301.4927.

\bibitem[Nag01]{N01}
Hiroshi Nagaoka.
\newblock Strong converse theorems in quantum information theory.
\newblock {\em Proceedings of ERATO Workshop on Quantum Information Science},
  page~33, 2001.
\newblock Also appeared in Asymptotic Theory of Quantum Statistical Inference,
  ed. M. Hayashi, World Scientific, 2005.

\bibitem[NFC09]{NFC09}
Julien Niset, Jarom\'{\i}r Fiurasek, and Nicolas~J. Cerf.
\newblock No-go theorem for {Gaussian} quantum error correction.
\newblock {\em Physical Review Letters}, 102(12):120501, March 2009.
\newblock arXiv:0811.3128.

\bibitem[ON00]{ON00}
Tomohiro Ogawa and Hiroshi Nagaoka.
\newblock Strong converse and {Stein's} lemma in quantum hypothesis testing.
\newblock {\em IEEE Transactions on Information Theory}, 46(7):2428--2433,
  November 2000.
\newblock arXiv:quant-ph/9906090.

\bibitem[Oza84]{Ozawa1984}
Masanao Ozawa.
\newblock {Quantum measuring processes of continuous observables}.
\newblock {\em Journal of Mathematical Physics}, 25(1):79--87, 1984.

\bibitem[Pet86]{P86}
D\'enes Petz.
\newblock Quasi-entropies for finite quantum systems.
\newblock {\em Reports in Mathematical Physics}, 23(1):57--65, February 1986.

\bibitem[PGPBL09]{PGBL09}
Stefano Pirandola, Raul Garc\'{\i}a-Patr\'on, Samuel~L. Braunstein, and Seth
  Lloyd.
\newblock Direct and reverse secret-key capacities of a quantum channel.
\newblock {\em Physical Review Letters}, 102(5):050503, February 2009.
\newblock arXiv:0809.3273.

\bibitem[PLOB16]{PLOB15}
Stefano Pirandola, Riccardo Laurenza, Carlo Ottaviani, and Leonardo Banchi.
\newblock Fundamental limits of repeaterless quantum communications.
\newblock September 2016.
\newblock arXiv:1510.08863v6.

\bibitem[Pol10]{P10}
Yury Polyanskiy.
\newblock {\em Channel coding: non-asymptotic fundamental limits}.
\newblock PhD thesis, Princeton University, November 2010.

\bibitem[PPV10]{polyanskiy10}
Yury Polyanskiy, H.~Vincent Poor, and Sergio Verd\'{u}.
\newblock Channel coding rate in the finite blocklength regime.
\newblock {\em IEEE Transactions on Information Theory}, 56(5):2307--2359, May
  2010.

\bibitem[PR14]{PR14}
Christopher Portmann and Renato Renner.
\newblock Cryptographic security of quantum key distribution.
\newblock September 2014.
\newblock arXiv:1409.3525.

\bibitem[PSB{\etalchar{+}}14]{PSBCL14}
Stefano Pirandola, Gaetana Spedalieri, Samuel~L. Braunstein, Nicolas~J. Cerf,
  and Seth Lloyd.
\newblock Optimality of {Gaussian} discord.
\newblock {\em Physical Review Letters}, 113(14):140405, October 2014.
\newblock arXiv:1309.2215.

\bibitem[Rai08]{Rains2008}
Eric~M. Rains.
\newblock Entanglement purification via separable superoperators.
\newblock 2008.
\newblock arXiv:quant-ph/9707002.

\bibitem[Ras02]{R02}
Alexey~E. Rastegin.
\newblock Relative error of state-dependent cloning.
\newblock {\em Physical Review A}, 66(4):042304, October 2002.

\bibitem[Ras03]{R03}
Alexey~E. Rastegin.
\newblock A lower bound on the relative error of mixed-state cloning and
  related operations.
\newblock {\em Journal of Optics B: Quantum and Semiclassical Optics},
  5(6):S647, December 2003.
\newblock arXiv:quant-ph/0208159.

\bibitem[Ras06]{R06}
Alexey~E. Rastegin.
\newblock Sine distance for quantum states.
\newblock February 2006.
\newblock arXiv:quant-ph/0602112.

\bibitem[SBPC{\etalchar{+}}09]{SBCDLP09}
Valerio Scarani, Helle Bechmann-Pasquinucci, Nicolas~J. Cerf, Miloslav
  Du\ifmmode~\check{s}\else \v{s}\fi{}ek, Norbert L\"utkenhaus, and Momtchil
  Peev.
\newblock The security of practical quantum key distribution.
\newblock {\em Reviews of Modern Physics}, 81(3):1301--1350, September 2009.
\newblock arXiv:0802.4155.

\bibitem[Sha09]{S09}
Jeffrey~H. Shapiro.
\newblock The quantum theory of optical communications.
\newblock {\em IEEE Journal of Selected Topics in Quantum Electronics},
  15(6):1547--1569, November 2009.

\bibitem[Sha14]{Sh14}
Naresh Sharma.
\newblock A strong converse for the quantum state merging protocol.
\newblock April 2014.
\newblock arXiv:1404.5940.

\bibitem[Smi08]{S08}
Graeme Smith.
\newblock Private classical capacity with a symmetric side channel and its
  application to quantum cryptography.
\newblock {\em Physical Review A}, 78(2):022306, August 2008.
\newblock arXiv:0705.3838.

\bibitem[SN96]{PhysRevA.54.2629}
Benjamin Schumacher and Michael~A. Nielsen.
\newblock Quantum data processing and error correction.
\newblock {\em Physical Review A}, 54(4):2629--2635, October 1996.
\newblock arXiv:quant-ph/9604022.

\bibitem[SRS08]{smith:170502}
Graeme Smith, Joseph~M. Renes, and John~A. Smolin.
\newblock Structured codes improve the {Bennett-Brassard}-84 quantum key rate.
\newblock {\em Physical Review Letters}, 100(17):170502, April 2008.
\newblock arXiv:quant-ph/0607018.

\bibitem[SST11]{SST11}
Artur Scherer, Barry~C. Sanders, and Wolfgang Tittel.
\newblock Long-distance practical quantum key distribution by entanglement
  swapping.
\newblock {\em Optics Express}, pages 3004--3018, 2011.
\newblock arXiv:quant-ph/1012.5675.

\bibitem[SSY11]{SSY11}
Graeme Smith, John~A. Smolin, and Jon Yard.
\newblock Quantum communication with {Gaussian} channels of zero quantum
  capacity.
\newblock {\em Nature Photonics}, 5:624--627, August 2011.
\newblock arXiv:1102.4580.

\bibitem[Sti55]{S55}
William~F. Stinespring.
\newblock Positive functions on {C*}-algebras.
\newblock {\em Proceedings of the American Mathematical Society}, 6:211--216,
  1955.

\bibitem[SY08]{science2008smith}
Graeme Smith and Jon Yard.
\newblock Quantum communication with zero-capacity channels.
\newblock {\em Science}, 321(5897):1812--1815, September 2008.
\newblock arXiv:0807.4935.

\bibitem[Tan12]{Tan12}
Vincent Y.~F. Tan.
\newblock Achievable second-order coding rates for the wiretap channel.
\newblock In {\em 2012 IEEE International Conference on Communication Systems},
  pages 65--69, November 2012.

\bibitem[TB15]{TB14}
Vincent Y.~F. Tan and Matthieu~R. Bloch.
\newblock Information spectrum approach to strong converse theorems for
  degraded wiretap channels.
\newblock {\em IEEE Transactions on Information Forensics and Security},
  10(9):1891--1904, June 2015.
\newblock arXiv:1406.6758.

\bibitem[TBR16]{TBR15}
Marco Tomamichel, Mario Berta, and Joseph~M. Renes.
\newblock Quantum coding with finite resources.
\newblock {\em Nature Communications}, 7:11419, May 2016.
\newblock arXiv:1504.04617.

\bibitem[TCR09]{TCR09}
Marco Tomamichel, Roger Colbeck, and Renato Renner.
\newblock A fully quantum asymptotic equipartition property.
\newblock {\em IEEE Transactions on Information Theory}, 55(12):5840--5847,
  December 2009.
\newblock arXiv:0811.1221.

\bibitem[TGW14a]{TGW14Nat}
Masahiro Takeoka, Saikat Guha, and Mark~M. Wilde.
\newblock Fundamental rate-loss tradeoff for optical quantum key distribution.
\newblock {\em Nature Communications}, 5:5235, October 2014.
\newblock arXiv:1504.06390.

\bibitem[TGW14b]{TGW14IEEE}
Masahiro Takeoka, Saikat Guha, and Mark~M. Wilde.
\newblock The squashed entanglement of a quantum channel.
\newblock {\em IEEE Transactions on Information Theory}, 60(8):4987--4998,
  August 2014.
\newblock arXiv:1310.0129.

\bibitem[TH13]{TH12}
Marco Tomamichel and Masahito Hayashi.
\newblock A hierarchy of information quantities for finite block length
  analysis of quantum tasks.
\newblock {\em IEEE Transactions on Information Theory}, 59(11):7693--7710,
  November 2013.
\newblock arXiv:1208.1478.

\bibitem[TT15]{TT13}
Marco Tomamichel and Vincent Y.~F. Tan.
\newblock Second-order asymptotics for the classical capacity of image-additive
  quantum channels.
\newblock {\em Communications in Mathematical Physics}, 338(1):103--137, August
  2015.
\newblock arXiv:1308.6503.

\bibitem[TWW17]{TWW14}
Marco Tomamichel, Mark~M. Wilde, and Andreas Winter.
\newblock Strong converse rates for quantum communication.
\newblock {\em IEEE Transactions on Information Theory}, 63(1):715--727,
  January 2017.
\newblock arXiv:1406.2946.

\bibitem[Uhl76]{U76}
Armin Uhlmann.
\newblock The ``transition probability'' in the state space of a *-algebra.
\newblock {\em Reports on Mathematical Physics}, 9(2):273--279, 1976.

\bibitem[Uhl77]{U77}
Armin Uhlmann.
\newblock Relative entropy and the {Wigner-Yanase-Dyson-Lieb} concavity in an
  interpolation theory.
\newblock {\em Communications in Mathematical Physics}, 54(1):21--32, 1977.

\bibitem[Ume62]{U62}
Hisaharu Umegaki.
\newblock Conditional expectations in an operator algebra {IV} (entropy and
  information).
\newblock {\em Kodai Mathematical Seminar Reports}, 14(2):59--85, 1962.

\bibitem[VP98]{VP98}
Vlatko Vedral and Martin~B. Plenio.
\newblock Entanglement measures and purification procedures.
\newblock {\em Physical Review A}, 57(3):1619--1633, March 1998.
\newblock arXiv:quant-ph/9707035.

\bibitem[Wat12]{Wat12}
Shun Watanabe.
\newblock Private and quantum capacities of more capable and less noisy quantum
  channels.
\newblock {\em Physical Review A}, 85(1):012326, January 2012.
\newblock arXiv:1110.5746.

\bibitem[Wer89]{W89}
Reinhard~F. Werner.
\newblock Quantum states with {Einstein-Podolsky-Rosen} correlations admitting
  a hidden-variable model.
\newblock {\em Physical Review A}, 40(8):4277--4281, October 1989.

\bibitem[Wer01]{Werner01}
Reinhard~F. Werner.
\newblock All teleportation and dense coding schemes.
\newblock {\em Journal of Physics A: Mathematical and General}, 34(35):7081,
  September 2001.
\newblock arXiv:quant-ph/0003070.

\bibitem[Wil15]{W15book}
Mark~M. Wilde.
\newblock {\em From Classical to Quantum Shannon Theory}.
\newblock December 2015.
\newblock arXiv:1106.1445v6.

\bibitem[Wil16]{Wilde2016}
Mark~M. Wilde.
\newblock Squashed entanglement and approximate private states.
\newblock {\em Quantum Information Processing}, 15(11):4563--4580, November
  2016.
\newblock arXiv:1606.08028.

\bibitem[Win16]{Win16}
Andreas Winter.
\newblock ``{Pretty} strong'' converse for the private capacity of degraded
  quantum wiretap channels.
\newblock In {\em IEEE International Symposium on Information Theory}, pages
  2858--2862, Barcelona, Spain, July 2016.
\newblock arXiv:1601.06611.

\bibitem[WPGP{\etalchar{+}}12]{WPGCRSL12}
Christian Weedbrook, Stefano Pirandola, Raul Garcia-Patron, Nicolas~J. Cerf,
  Timothy~C. Ralph, Jeffrey~H. Shapiro, and Seth Lloyd.
\newblock Gaussian quantum information.
\newblock {\em Reviews of Modern Physics}, 84(2):621--669, May 2012.
\newblock arXiv:1110.3234.

\bibitem[WR12]{WR12}
Ligong Wang and Renato Renner.
\newblock One-shot classical-quantum capacity and hypothesis testing.
\newblock {\em Physical Review Letters}, 108(20):200501, May 2012.
\newblock arXiv:1007.5456.

\bibitem[WTLB16]{BLTW16}
Mark~M. Wilde, Marco Tomamichel, Seth Lloyd, and Mario Berta.
\newblock Gaussian hypothesis testing and quantum illumination.
\newblock August 2016.
\newblock arXiv:1608.06991.

\bibitem[WWY14]{WWY13}
Mark~M. Wilde, Andreas Winter, and Dong Yang.
\newblock Strong converse for the classical capacity of entanglement-breaking
  and {Hadamard} channels via a sandwiched {R\'enyi} relative entropy.
\newblock {\em Communications in Mathematical Physics}, 331(2):593--622,
  October 2014.
\newblock arXiv:1306.1586.

\bibitem[Wyn75]{W75}
Aaron~D. Wyner.
\newblock The wire-tap channel.
\newblock {\em Bell System Technical Journal}, 54(8):1355--1387, October 1975.

\bibitem[YSP16]{Yang2016}
Wei Yang, Rafael~F. Schaefer, and H.~Vincent Poor.
\newblock Finite-blocklength bounds for wiretap channels.
\newblock January 2016.
\newblock arXiv:1601.06055.

\end{thebibliography}

\end{document}